
\documentclass{article}
\pdfpagewidth=8.5in
\pdfpageheight=11in

\usepackage{ijcai25}

\usepackage{times}
\usepackage{soul}
\usepackage{url}
\usepackage[hidelinks]{hyperref}
\usepackage{xurl}
\hypersetup{breaklinks=true}
\usepackage[utf8]{inputenc}
\usepackage[small]{caption}
\usepackage{graphicx}
\usepackage{amsmath}
\usepackage{amsthm}
\usepackage{booktabs}
\usepackage{algorithm}
\usepackage{algorithmic}
\usepackage[switch]{lineno}

\usepackage{latexsym}
\usepackage{amsfonts, amssymb, stmaryrd}
\usepackage{xspace}
\usepackage{xcolor}
\usepackage{thm-restate} 
\usepackage{tikz}
\tikzset{>=latex}


\usepackage{enumitem}

\setlist{leftmargin=*}


\urlstyle{same}


\newtheorem{example}{Example}

\newtheorem{definition}{Definition}
\newtheorem{proposition}{Proposition}

\newtheorem{lemma}{Lemma}
\newtheorem{remark}{Remark}





\pdfinfo{
/TemplateVersion (IJCAI.2025.0)
}

\title{Inconsistency Handling in DatalogMTL}

\author{
Meghyn Bienvenu$^1$
\and
Camille Bourgaux$^2$\and
Atefe Khodadaditaghanaki$^{3}$\\
\affiliations
$^1$Universit\'{e} de Bordeaux, CNRS, Bordeaux INP, LaBRI, UMR 5800, Talence, France\\
$^2$DI ENS, ENS, CNRS, PSL University \& Inria, Paris, France\\
$^3$University of Paderborn, Paderborn, Germany\\
\emails
meghyn.bienvenu@u-bordeaux.fr,
camille.bourgaux@ens.fr,
atefe@mail.uni-paderborn.de
}

\begin{document}


\newcommand{\mypar}[1]{\medskip\noindent\textbf{#1.}}

\newcommand{\preds}{\textbf{P}}
\newcommand{\vars}{\textbf{V}}
\newcommand{\consts}{\textbf{C}}

\newcommand{\Prop}[1]{\mn{P}_{#1}}
\newcommand{\tp}{\mn{tp}}

\newcommand{\aut}{\mathfrak{A}}
\newcommand{\autpi}{\aut_\Pi}
\newcommand{\autpiq}{\aut_\Pi^Q}
\newcommand{\autpiqsucc}{\overline{\aut}_\Pi^Q}
\newcommand{\autpisucc}{\overline{\aut}_\Pi}
\newcommand{\autsubset}{\aut_\subseteq}
\newcommand{\autsubsetneq}{\aut_\subsetneq}

\def\autbrave{\aut_\mathsf{brave}}
\def\autcqa{\aut_\mathsf{CQA}}
\def\autinter{\aut_\cap}

\def\autabsentrep{\aut_\mathsf{absentrep}}
\def\autmissing{\aut_\mathsf{missing}}
\def\autnotmissing{\aut_\mathsf{missing}^c}
\def\autpick{\aut_\mathsf{pick1}}

\def\autrep{\aut_\mathsf{rep}}
\def\autrepnoq{\aut_\mathsf{rep}^{\bar{Q}}}
\def\autbetter{\aut_\mathsf{better}}
\def\autnobetter{\aut_\mathsf{nobetter}}

\def\con{\mathsf{con}}
\def\lit{\mathsf{lit}}
\def\props{\mathsf{prop}}

\def\open{\mathsf{open}}
\def\close{\mathsf{close}}

\def\paths{\mathsf{paths}}

\def\baralph{\overline{\Sigma}}

\def\proj{\mathsf{proj}}

\def\synsub{\subseteq^\mathsf{syn}}
\def\synsubneq{\subsetneq^\mathsf{syn}}

\newcommand{\fint}[1]{\lfloor #1 \rfloor}
\newcommand{\lint}[1]{\lceil #1 \rceil}

\newcommand{\always}{\mathrel{\ooalign{\cr\hss\lower.255ex\hbox{${\Box}$}}}}
\newcommand{\alwaysf}{\mathrel{\ooalign{\cr\hss\lower.100ex\hbox{${\scriptstyle\boxplus}$}}}}
\newcommand{\alwaysp}{\mathrel{\ooalign{\cr\hss\lower.100ex\hbox{${\scriptstyle\boxminus}$}}}}
\font\Bigmath=cmsy10 scaled \magstep2
\def\eventually{\mathrel{\ooalign{\cr\hss\lower.255ex\hbox{\Bigmath\char5}}}}
\def\eventf{\mathrel{\ooalign{$+$\cr\hss\lower.255ex\hbox{\Bigmath\char5}\hss}}}
\def\eventp{\mathrel{\ooalign{$-$\cr\hss\lower.255ex\hbox{\Bigmath\char5}\hss}}}

\def\nonrecDatalog{Datalog$_\text{nr}$MTL\xspace}
\def\corediamond{DatalogMTL$_\textsf{core}^{\eventp}$\xspace}
\def\lineardiamond{DatalogMTL$_\textsf{lin}^{\eventp}$\xspace}
\def\propmtl{propositional DatalogMTL}
\def\dllite{DL-Lite\xspace}
\def\dllitecore{DL-Lite$_{\mn{core}}$\xspace}
\def\dlliter{DL-Lite$_{\mathcal{R}}$\xspace}
\def\dlliterhorn{DL-Lite$^{\mathcal{H}}_\mn{horn}$\xspace}
\def\elbot{${\mathcal{EL}_\bot}$\xspace}

\newcommand{\tup}[1]{\langle #1\rangle}
\newcommand{\ans}{\vec{c}}
\newcommand{\intvar}{r}

\newcommand{\sem}{\text{Sem}\xspace}
\newcommand{\semmodels}[1]{\models_{\sem}^{#1}}
\newcommand{\bravemodels}[1]{\models_{\text{brave}}^{#1}}
\newcommand{\cqamodels}[1]{\models_{\text{CQA}}^{#1}}
\newcommand{\intmodels}[1]{\models_{\cap}^{#1}}

\newcommand{\conflicts}[1]{\mi{Conf}(#1)}
\newcommand{\sconflicts}[1]{\mi{sConf}(#1)}
\newcommand{\pconflicts}[1]{\mi{pConf}(#1)}
\newcommand{\iconflicts}[1]{\mi{iConf}(#1)}
\newcommand{\xconflicts}[1]{\mi{xConf}(#1)}

\newcommand{\reps}[1]{\mi{Rep}(#1)}
\newcommand{\sreps}[1]{\mi{sRep}(#1)}
\newcommand{\preps}[1]{\mi{pRep}(#1)}
\newcommand{\ireps}[1]{\mi{iRep}(#1)}
\newcommand{\xreps}[1]{\mi{xRep}(#1)}

\newcommand{\xsubseteq}{\,{\scriptstyle\sqsubseteq}^{x}\,}
\newcommand{\psubseteq}{\,{\scriptstyle\sqsubseteq}^{p}\,}
\newcommand{\isubseteq}{\,{\scriptstyle\sqsubseteq}^{i}\,}
\newcommand{\ssubseteq}{\,{\scriptstyle\sqsubseteq}^{s}\,}
\newcommand{\notxsubseteq}{\,{\scriptstyle\not\sqsubseteq}^{x}\,}
\newcommand{\notpsubseteq}{\,{\scriptstyle\not\sqsubseteq}^{p}\,}
\newcommand{\notisubseteq}{\,{\scriptstyle\not\sqsubseteq}^{i}\,}
\newcommand{\notssubseteq}{\,{\scriptstyle\not\sqsubseteq}^{s}\,}

\newcommand{\xsubsetneq}{\,{\scriptstyle\sqsubset}^{x}\,}
\newcommand{\psubsetneq}{\,{\scriptstyle\sqsubset}^{p}\,}
\newcommand{\isubsetneq}{\,{\scriptstyle\sqsubset}^{i}\,}
\newcommand{\ssubsetneq}{\,{\scriptstyle\sqsubset}^{s}\,}

\newcommand\aczero{\ensuremath{\mathsf{AC}^{0}}\xspace}
\def\lspace{\textsc{LSpace}\xspace}
\def\nlspace{\textsc{NLSpace}\xspace}
\def\pspace{\textsc{PSpace}\xspace}
\def\expspace{\textsc{ExpSpace}\xspace}
\def\npspace{\textsc{NPSpace}\xspace}
\def\ptime{\textsc{PTime}\xspace}
\def\np{\textsc{NP}\xspace}
\def\conp{co\textsc{NP}\xspace}
\def\piptwo{\ensuremath{\Pi^{p}_{2}}\xspace}
\def\sigmaptwo{\ensuremath{\Sigma^{p}_{2}}\xspace}
\def\delayp{\textsc{DelayP}\xspace}
\def\incrdelayp{\textsc{IncP}\xspace}
\def\totalp{\textsc{TotalP}\xspace}
\def\sharpp{\#\textsc{P}\xspace}
\def\alogtime{\textsc{ALogTime}\xspace}

\def\true{\ensuremath{\mathsf{true}}}
\def\false{\ensuremath{\mathsf{false}}}

\newcommand{\mn}[1]{\ensuremath{\mathsf{#1}}}
\newcommand{\mi}[1]{\ensuremath{\mathit{#1}}}

\newcommand{\Amc}{\ensuremath{\mathcal{A}}}
\newcommand{\Bmc}{\ensuremath{\mathcal{B}}}
\newcommand{\Cmc}{\ensuremath{\mathcal{C}}}
\newcommand{\Dmc}{\ensuremath{\mathcal{D}}}
\newcommand{\Emc}{\ensuremath{\mathcal{E}}}
\newcommand{\Imc}{\ensuremath{\mathcal{I}}}
\newcommand{\Jmc}{\ensuremath{\mathcal{J}}}
\newcommand{\Lmc}{\ensuremath{\mathcal{L}}}
\newcommand{\Kmc}{\ensuremath{\mathcal{K}}}
\newcommand{\Mmc}{\ensuremath{\mathcal{M}}}
\newcommand{\Omc}{\ensuremath{\mathcal{O}}}
\newcommand{\Pmc}{\ensuremath{\mathcal{P}}}
\newcommand{\Qmc}{\ensuremath{\mathcal{Q}}}
\newcommand{\Rmc}{\ensuremath{\mathcal{R}}}
\newcommand{\Smc}{\ensuremath{\mathcal{S}}}
\newcommand{\Tmc}{\ensuremath{\mathcal{T}}}
\newcommand{\Umc}{\ensuremath{\mathcal{U}}}
\newcommand{\Vmc}{\ensuremath{\mathcal{V}}}

\newcommand{\Tbb}{\ensuremath{\mathbb{T}}}
\newcommand{\Rbb}{\ensuremath{\mathbb{R}}}
\newcommand{\Qbb}{\ensuremath{\mathbb{Q}}}
\newcommand{\Zbb}{\ensuremath{\mathbb{Z}}}
\newcommand{\Nbb}{\ensuremath{\mathbb{N}}}

\newcommand{\eg}{e.g.,~}
\newcommand{\etc}{etc.}
\newcommand{\ie}{i.e.,~}
\newcommand{\wrt}{w.r.t.~}
\newcommand{\cf}{cf.~}
\newcommand{\wlg}{w.l.o.g.~}

\maketitle

\begin{abstract}
In this paper, we explore the issue of inconsistency handling in DatalogMTL, an extension of Datalog with metric temporal operators. Since facts are associated with time intervals, there are different manners to restore consistency when they contradict the rules, such as removing facts or modifying their time intervals. Our first contribution is the definition of relevant notions of conflicts (minimal explanations for inconsistency) and repairs (possible ways of restoring consistency) for this setting and the study of the properties of these notions and the associated inconsistency-tolerant semantics. Our second contribution is a data complexity analysis of the tasks of generating a single conflict / repair and query entailment under repair-based semantics.
\end{abstract}


\section{Introduction}
There has been significant recent interest in formalisms for reasoning over temporal data \cite{survey}. 
Since its introduction by \citeauthor{DBLP:conf/aaai/BrandtKKRXZ17}~\shortcite{DBLP:conf/aaai/BrandtKKRXZ17,DBLP:journals/jair/BrandtKRXZ18}, 
the DatalogMTL language, which extends Datalog \cite{DBLP:books/aw/AbiteboulHV95} with operators from metric temporal logic (MTL) \cite{DBLP:journals/rts/Koymans90}, has risen to prominence. In  DatalogMTL, facts are annotated by \emph{time intervals} on which they are valid (e.g., $R(a,b)@[1,5]$), and rules express dependencies between such facts (\eg $\alwaysf_{[0,2]} Q \gets \eventp_{\{3\}} P$ states that if $P$ holds at time $t-3$, $Q$ holds from $t$ to $t+2$).  The complexity of reasoning in DatalogMTL has been extensively investigated 
for various fragments and extensions 
and for different semantics (continuous vs pointwise, rational vs integer timeline)
\cite{DBLP:journals/jair/BrandtKRXZ18,DBLP:conf/ijcai/WalegaGKK19,DBLP:conf/ijcai/RyzhikovWZ19,DBLP:conf/ijcai/WalegaGKK20,DBLP:journals/ws/WalegaKWG23,DBLP:journals/tplp/WalegaCGK24}. Moreover, there are also 
several implemented reasoning systems for (fragments of) DatalogMTL 
 \cite{DBLP:journals/amcs/Kalayci0CRXZ19,DBLP:conf/aaai/WangHWG22,DBLP:journals/corr/abs-2401-02869,DBLP:conf/ruleml/BellomariniBNS22,DBLP:conf/aaai/WalegaZWG23,DBLP:conf/kr/Ivliev0MSK24}. 

One important issue that has yet to be addressed is how to handle the case where the 
temporal dataset is inconsistent with the DatalogMTL program. Indeed, it is widely acknowledged
that real-world data typically contains many erroneous or inaccurate facts, and this is true in particular 
for temporal sensor data, due to faulty sensors. 
In such cases, classical logical semantics is rendered useless, as every query is entailed from a contradiction. 
A prominent approach to obtain meaningful information from an atemporal dataset that is inconsistent \wrt a logical theory (\eg an ontology or a set of database integrity constraints) is to use \emph{inconsistency-tolerant semantics} based on \emph{subset repairs}, 
which are maximal subsets of the dataset consistent with the theory \cite{DBLP:conf/pods/Bertossi19,DBLP:journals/ki/Bienvenu20}. 
The \emph{consistent query answering (CQA)} approach considers that a (Boolean) query is true if it holds \wrt every repair \cite{ArenasBC99,LemboLRRS10}. Other natural semantics have also been proposed, such as the \emph{brave} semantics, under which a query is true if it holds \wrt at least one repair \cite{DBLP:conf/ijcai/BienvenuR13}, and the \emph{intersection} semantics which evaluates queries \wrt the intersection of all repairs \cite{LemboLRRS10}. 
It is also useful to consider the minimal subsets of the dataset that are inconsistent with the theory, called \emph{conflicts}, to explain the inconsistency to a user or help with debugging. 

It is natural to extend these notions 
to the temporal setting. First work in this direction was
undertaken by \citeauthor{DBLP:journals/semweb/BourgauxKT19} \shortcite{DBLP:journals/semweb/BourgauxKT19}, who considered queries with linear temporal logic (LTL) operators, an atemporal DL-Lite ontology, and a sequence of datasets stating what holds at different timepoints. In that work, however, it was clear how to transfer definitions from the atemporal setting, and the main concerns were complexity and algorithms. By contrast, in DatalogMTL, facts are annotated with time intervals, which may contain exponentially or even infinitely many timepoints (if the timeline is dense or $\infty$/$-\infty$ can be used as interval endpoints). 
One can therefore imagine multiple different ways of minimally repairing an inconsistent dataset. For example, if a dataset states that $P$ is true from $0$ to $4$ and $Q$ from $2$ to $6$ ($P@[0,4]$, $Q@[2,6]$), and a rule states that $P$ and $Q$ cannot hold at the same time ($\bot\leftarrow P\land Q$), 
one can regain consistency by removing one of the two facts, adjusting their intervals, or treating intervals as their sets of points and conserving as much information as possible. 

In this paper, we initiate the study of inconsistency handling in DatalogMTL. 
After some preliminaries, 
we formally introduce our framework in Section 3. 
We define three different notions of repair based upon deleting whole facts 
($s$-repairs), 
punctual facts ($p$-repairs), or minimally shrinking the time intervals 
of facts ($i$-repairs), which give rise to the 
$x$-brave, $x$-CQA, and $x$-intersection semantics ($x \in \{s,p,i\}$). 
Likewise, we define notions of $s$-, $p$-, and $i$-conflict, 
which capture different ways to characterize 
minimal reasons for inconsistency. In Section 4, 
we study the properties of these notions. In particular, we show that 
$p$- and $i$-conflicts and repairs are not guaranteed to exist or be finite. 
In Section 5, we explore the computational properties of our framework. 
We provide a fairly comprehensive account of the data complexity 
of recognizing $s$-conflicts and $s$-repairs, generating a single $s$-conflict or $s$-repair, 
and testing query entailment under the $s$-brave, $s$-CQA, and $s$-intersection semantics. 
We obtain tight complexity results for several DatalogMTL fragments 
and identify tractable cases. We further provide some first complexity results 
for the $i$-and $p$-based notions.

Proofs of all claims are given in the appendix. 


\section{Preliminaries: DatalogMTL}\label{sec:prelim}
\paragraph{Intervals}
We consider a timeline $(\Tbb, \leq)$, (we will consider $(\Qbb, \leq)$, which is dense, and $(\Zbb,\leq)$, which is not), and call the elements of $\Tbb$ \emph{timepoints}. 
An \emph{interval} 
takes the form $\tup{t_1,t_2}$, with $t_1,t_2\in\Tbb\cup\{-\infty,\infty\}$, bracket $\langle$ being $[$ or $($, and bracket $\rangle$ either $]$ or $)$, 
and denotes the set of timepoints 
\begin{align*}
\{t\mid t\in\Tbb, t_1<t<t_2\}\ \cup 
\{t_1\mid \text{if }\langle=[\}\cup\{t_2\mid \text{if }\rangle=]\}.
\end{align*}
A \emph{punctual} interval has the form $[t,t]$ and will also be written~$\{t\}$. 
A \emph{range} $\varrho$ is an interval with non-negative endpoints. 

\paragraph{Syntax} Let $\preds$, $\consts$ and $\vars$ be three mutually disjoint countable sets of predicates, constants, and variables respectively. An \emph{atom} is of the form $P(\vec{\tau})$ where $P\in\preds$ and $\vec{\tau}$ is a tuple of \emph{terms} from $\consts\cup\vars$ of matching arity. A \emph{literal} $A$ is an expression built according to the following grammar:
$$A ::= P(\vec{\tau}) \mid  \top  \mid  \alwaysf_{\varrho} A   \mid \alwaysp_{\varrho} A  \mid  \eventf_{\varrho}A  \mid \eventp_{\varrho} A   \mid  A\ \Umc_{\varrho} A   \mid   A\ \Smc_{\varrho}A$$
where $P(\vec{\tau})$ is an atom and $\varrho$ is a range. Intuitively, $\Smc$ stands for `since', $\Umc$ for `until', $\eventually$ for `eventually', and $\always$ for `always', with ${\scriptstyle +}$ indicating the future and ${\scriptstyle -}$ the past. 
A DatalogMTL \emph{program} $\Pi$ is a finite set of rules of the form
$$B\leftarrow   A_1 \land ... \land A_k \quad \text{ or }\quad  \bot \leftarrow  A_1 \land ... \land A_k \quad \text{with }k \geq 1$$
where each $A_i$ is a literal and $B$ is a literal not mentioning any `non-deterministic' operators $\eventf_{\varrho}$,  $\eventp_{\varrho}$, $\Umc_{\varrho}$, and $\Smc_{\varrho}$. 
We call $A_1 \land ... \land A_k$ the \emph{body} of the rule, and $B$ or $\bot$ its \emph{head}. We assume that each rule is \emph{safe}: each variable in its head occurs in its body, and this occurrence is not in a left operand of $\Smc$ or $\Umc$. 
A \emph{(temporal) dataset} $\Dmc$ is a finite set of \emph{(temporal) facts} of the form $\alpha@\iota$, with $\alpha$ a ground atom (\ie $\alpha$ does not contain any variable) and $\iota$ a non-empty interval.

\paragraph{Fragments} 
A program is \emph{propositional} if all its predicates are nullary. 
It is \emph{core} if each of its rules is either of the form $\bot \leftarrow  A_1 \land A_2$ or of the form $B\leftarrow   A$. 
It is \emph{linear} if each of its rules is either of the form $\bot \leftarrow  A_1 \land A_2$ or of the form $B\leftarrow   A_1 \land ... \land A_k$ where at most one $A_i$ mentions some predicate that occurs in the head of some rule (intensional predicate). We denote by \corediamond (resp.\ \lineardiamond) the fragment where programs are core (resp.\ linear) and $\eventp$ is the only temporal operator allowed in literals. 
The relation $\lessdot$ of dependence between predicates is defined by $P\lessdot Q$ iff there is a rule with $P$ in the head and $Q$ in the body. A program is \emph{non-recursive} if there is no predicate $P$ such that $P\lessdot^+ P$, where $\lessdot^+$ is the transitive closure of $\lessdot$. We denote by \nonrecDatalog the fragment of non-recursive programs.   

\paragraph{Semantics} An \emph{interpretation} $\mathfrak{M}$ specifies for each ground atom $\alpha$ and timepoint $t\in\Tbb$ whether $\alpha$ is true at $t$. If $\alpha$ is true at $t$ in $\mathfrak{M}$, we write $\mathfrak{M},t\models \alpha$ and say that $\mathfrak{M}$ \emph{satisfies} $\alpha$ at $t$. The satisfaction of a ground literal by $\mathfrak{M}$ at $t$ is then defined inductively as follows.
\begin{align*}
 \mathfrak{M}, t&\models \top && \text{for every } t\in \Tbb\\
 \mathfrak{M}, t&\not\models \bot && \text{for every } t\in \Tbb\\
 \mathfrak{M}, t&\models \alwaysf_{\varrho}A && \text{if }  \mathfrak{M}, s \models A \text{ for all } s \text{ with } s-t\in\varrho\\
 \mathfrak{M}, t&\models\alwaysp_{\varrho}A && \text{if } \mathfrak{M}, s \models A \text{ for all } s \text{ with } t-s\in\varrho\\
 \mathfrak{M}, t&\models\eventf_{\varrho}A &&\text{if } \mathfrak{M}, s \models A \text{ for some } s \text{ with } s-t\in\varrho\\
 \mathfrak{M}, t&\models\eventp_{\varrho}A && \text{if } \mathfrak{M}, s \models A \text{ for some } s \text{ with } t-s\in\varrho\\
 \mathfrak{M}, t&\models A\ \Umc_{\varrho}A' && \text{if } \mathfrak{M}, t' \models A' \text{ for some } t' \text{ with } t'-t\in\varrho\\&&&\text{ and } \mathfrak{M}, s \models A \text{ for all }s\in (t,t')\\
 \mathfrak{M}, t&\models A\ \Smc_{\varrho}A' && \text{if } \mathfrak{M}, t' \models A' \text{ for some } t' \text{ with } t-t'\in\varrho\\&&&\text{ and } \mathfrak{M}, s \models A \text{ for all }s\in (t',t)
\end{align*}
An interpretation $\mathfrak{M}$ is a \emph{model} of a rule $H\leftarrow A_1 \land ... \land A_k$ if for every grounding assignment $\nu:\vars\mapsto\consts$, for every $t\in\Tbb$, $ \mathfrak{M}, t\models \nu(H)$ whenever $\mathfrak{M}, t \models \nu(A_i)$ for $1 \leq i \leq k$, where $\nu(B)$ denotes the ground literal obtained by replacing each $x\in\vars$ by $\nu(x)$ in $B$. $\mathfrak{M}$ is a model of a program $\Pi$ if it is a model of all rules in $\Pi$.  
It is a model of a fact $\alpha@\iota$ if $\mathfrak{M},t\models \alpha$ for every $t\in \iota$, and it is a model of a (possibly infinite) set of facts $\Bmc$ if it is a model of all facts in $\Bmc$. 
A program $\Pi$ is \emph{consistent} if it has a model. 
A set of facts $\Bmc$ is \emph{$\Pi$-consistent} if there exists a model $\mathfrak{M}$ of both $\Pi$ and $\Bmc$, written $\mathfrak{M}\models (\Bmc,\Pi)$. 
A program $\Pi$ and set of facts $\Bmc$ \emph{entail} a fact $\alpha@\iota$, written $(\Bmc,\Pi)\models \alpha@\iota$, if every model of  both $\Pi$ and $\Bmc$ is also a model of $\alpha@\iota$. 
Finally, we write $\Bmc\models  \alpha@\iota$ if $(\Bmc,\emptyset)\models  \alpha@\iota$ and $\Pi\models  \alpha@\iota$ if $(\emptyset,\Pi)\models  \alpha@\iota$. 

\paragraph{Queries}
 A \emph{DatalogMTL query} is a pair $(\Pi, q(\vec{v},\intvar))$ of a program $\Pi$ and an expression $q(\vec{v},\intvar)$ of the form $Q(\vec{\tau})@\intvar$, where $Q\in\preds$, $\vec{v}=(v_1,\dots,v_n)$ is a tuple of variables, $\vec{\tau}$ is a tuple of terms from $\consts\cup\vec{v}$, and $\intvar$ is an \emph{interval variable}. We may simply use $q(\vec{v}, \intvar)$ as a query when the program has been specified.
A \emph{certain answer} to $(\Pi, q(\vec{v}, \intvar))$ over a (possibly infinite) set of facts $\Bmc$ is a pair $(\ans, \iota)$ such that $\ans=(c_1,\dots,c_n)$ is a tuple of constants, $\iota$ is an interval and, for every $t\in\iota$ and every model $\mathfrak{M}$ of $\Pi$ and $\Bmc$, we have $ \mathfrak{M}, t \models Q(\vec{\tau})_{[\vec{v}\leftarrow \ans]}$, where $Q(\vec{\tau})_{[\vec{v}\leftarrow \ans]}$ is obtained from $Q(\vec{\tau})$ by replacing each $v_i\in\vec{v}$ by the corresponding $c_i\in\ans$.

We will illustrate the notions we introduce on a running example about a blood transfusion scenario.
\begin{example}\label{ex:running}
In our scenario, we wish to query the medical records of blood transfusion recipients to detect patients who exhibited symptoms or risk factors of transfusion-related adverse reactions. 
For example, if a patient presents a fever during the transfusion or in the next four hours, while having a normal temperature for the past 24 hours, one can suspect a febrile non-haemolytic transfusion reaction (potential fnhtr). This is represented by the following rule, where, intuitively, $x$ represents a patient and $y$ a blood pouch:
\begin{align*}
\mn{PotFnhtr}(x)\leftarrow& \mn{Fever}(x)\land \alwaysp_{(0,24]}\mn{NoFever}(x)\\&\land \eventp_{[0,4]}\mn{GetBlood}(x,y)
\end{align*}
Another rule detects more generally relevant fever episodes:
\begin{align*}
\mn{FevEp}(x)\leftarrow& \mn{Fever}(x)\land \\
& \eventp_{[0,24]}(\mn{NoFever}(x)\ \Umc_{\{5\}} \mn{GetBlood}(x,y))
\end{align*}
A patient cannot have a fever and no fever at the same time:
\begin{align*}
\bot\leftarrow& \mn{Fever}(x)\land\mn{NoFever}(x)
\end{align*}
We may also wish to identify patients who once produced anti-D antibodies, as they are at risk for adverse reactions to some blood types. This is represented as follows.
\begin{align*}
\alwaysf_{[0,\infty)}\mn{AntiDRisk}(x)\leftarrow& \mn{PositiveAntiD}(x)
\end{align*}
The following dataset provides information about a patient $a$ who received transfusion from a blood pouch $b$, assuming that time $0$ is the time they entered the hospital.
\begin{align*}
\Dmc=\{&\mn{PositiveAntiD}(a)@\{-90\},\mn{GetBlood}(a,b)@[24,26],\\& \mn{NoFever}(a)@[0,29),\mn{Fever}(a)@[29,34] \}
\end{align*}
Let $\Pi$ consist of the DatalogMTL rules above. One can check that $\Dmc$ is $\Pi$-consistent, $(a,\{29\})$ is a certain answer to the query $\mn{PotFnhtr}(v)@\intvar$, $(a,[29,34])$ is a certain answer to $\mn{FevEp}(v)@\intvar$, and $(a,[-90,\infty))$ to $\mn{AntiDRisk}(v)@\intvar$. 
\end{example}

\section{Repairs and Conflicts on Time Intervals}\label{sec:definitions}

In this section, we first define three kinds of repair and conflict for temporal datasets, then extend inconsistency-tolerant semantics to this context. Before delving into the formal definitions, we illustrate the impact of dealing with \emph{time intervals}.

\begin{example}\label{ex:running-incons}
Let $\Pi$ be the program from Example~\ref{ex:running} and 
\begin{align*}
\Dmc=\{\mn{PositiveAntiD}(a)@\{-90\},\mn{GetBlood}(a,b)@[24,26],\\ \mn{NoFever}(a)@[0,32],\mn{Fever}(a)@[14,18],\mn{Fever}(a)@[29,34] \}.
\end{align*}
$\Dmc$ is $\Pi$-inconsistent because 
in $\Dmc$, the
patient $a$ has both fever and no fever at $t\in[14,18]\cup[29,32]$. 
To repair the data by removing facts from $\Dmc$, there are only two minimal possibilities: either remove $\mn{NoFever}(a)@[0,32]$, or remove both $\mn{Fever}(a)@[14,18]$ and $\mn{Fever}(a)@[29,34]$. 
This may be considered too drastic, since, \eg the $\mn{Fever}$ facts do not contradict that the patient had no fever during $[0,14)$ or $(18,29)$. 

Hence, it may seem preferable to consider each timepoint independently, so that a repair may contain, \eg the two $\mn{Fever}$ facts as well as $\mn{NoFever}(a)@[0,14)$ and $\mn{NoFever}(a)@(18,29)$. 
However, with this approach,  if $\Tbb=\Qbb$, there are infinitely many possibilities to repair the dataset, and the number of facts in a repair may be infinite. For example, an option to repair the $\mn{Fever}$ and $\mn{NoFever}$ facts is:
\begin{align*}
\{&\mn{NoFever}(a)@[0,29),\mn{Fever}(a)@[30,34],\\&
\mn{NoFever}(a)@[29+\frac{1}{2^{2k+1}},29+\frac{1}{2^{2k}}),\\
&\mn{Fever}(a)@[29+\frac{1}{2^{2k+2}},29+\frac{1}{2^{2k+1}})\mid k\in\mathbb{N}\}.
\end{align*}
An intermediate approach consists in only modifying the endpoints of intervals, in order to keep more information than with fact deletion without splitting one fact into many. Again we may obtain infinitely many possibilities, e.g., the $\mn{Fever}$ and $\mn{NoFever}$ facts can be repaired by $\mn{NoFever}(a)@[0,t)$ and $\mn{Fever}(a)@[t,34]$ for $t\in[29,32]$. 
\end{example}

\paragraph{Manipulating sets of temporal facts} 
To formalize conflicts and repairs of temporal datasets, we consider three ways of comparing 
(possibly infinite) sets of facts w.r.t.\ inclusion:

\begin{definition}[Pointwise inclusion, subset comparison]
We say that a fact $\alpha@\iota$ is \emph{pointwise included} in a set of facts $\Bmc$ if for every $t\in\iota$, there is $\alpha@\iota'\in\Bmc$ with $t\in\iota'$, \ie if $\Bmc\models\alpha@\iota$. 
Given sets of facts $\Bmc$ and $\Bmc'$, we say that $\Bmc'$ is 
\begin{itemize}
\item a \emph{pointwise subset} of $\Bmc$, denoted $\Bmc'\psubseteq\Bmc$, if every $\alpha@\iota \in \Bmc'$ is pointwise included in $\Bmc$; 
\item an \emph{interval-based subset} of $\Bmc$, denoted $\Bmc'\isubseteq\Bmc$, if $\Bmc'\psubseteq\Bmc$ and for every  $\alpha@\iota\in\Bmc$, there is \emph{at most one} $\alpha@\iota'\in\Bmc'$ such that $\iota'\subseteq\iota$;
\item a \emph{strong subset} of $\Bmc$, written $\Bmc'\ssubseteq\Bmc$, if $\Bmc'\isubseteq\Bmc$ and 
$\Bmc'\subseteq\Bmc$. 
\end{itemize}
We write $\Bmc'\psubsetneq\Bmc$ to indicate that $\Bmc'\psubseteq\Bmc$ and $\Bmc\notpsubseteq\Bmc'$. 
For $x\in\{i,s\}$, we write $\Bmc'\xsubsetneq\Bmc$ if $\Bmc'\xsubseteq\Bmc$ and $\Bmc'\psubsetneq\Bmc$. 
\end{definition}

We also need to 
intersect (possibly infinite) sets of facts: 
\begin{definition}[Pointwise intersection]
The \emph{pointwise intersection} of a 
family $(\Bmc_i)_{i\in I}$ of sets of facts is $\bigsqcap_{i\in I} \Bmc_i=\{\alpha@\{t\}\mid \Bmc_i \models \alpha@\{t\} \text{ for each } i\in I\}$. 
The \emph{pointwise intersection} of a fact $\alpha@\iota$ and a set of facts $\Bmc$ is  $\{\alpha@\iota\} \sqcap \Bmc$.
\end{definition}

\paragraph{Normal form} 
A (possibly infinite) set of facts $\Bmc$ is in \emph{normal form} if for every pair of facts $\alpha@\iota$ and $\alpha@\iota'$ over the same ground atom, if $\alpha@\iota$ and $\alpha@\iota'$ are in $\Bmc$, then $\iota\cup\iota'$ is not an interval. 

\begin{restatable}{lemma}{lemNormalFormSubset}\label{lem:normal-form-subset}
If $\Bmc$ is in normal form, then (1) $\Bmc'\ssubseteq \Bmc$ iff $\Bmc'\subseteq \Bmc$, and (2) $\Bmc'\isubseteq \Bmc$ implies that the cardinality of $\Bmc'$ is bounded by that of $\Bmc$.
\end{restatable}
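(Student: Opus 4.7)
For part (1), the forward direction is immediate from the definition of $\ssubseteq$. For the converse, assume $\Bmc'\subseteq\Bmc$; I need to verify the two requirements of $\isubseteq$. Pointwise inclusion is immediate because any $\alpha@\iota\in\Bmc'$ is literally in $\Bmc$. For the ``at most one'' clause, I argue by contradiction: suppose two distinct facts $\alpha@\iota'_1,\alpha@\iota'_2\in\Bmc'$ both satisfy $\iota'_j\subseteq\iota$ for some $\alpha@\iota\in\Bmc$. At least one, say $\iota'_1$, must differ from $\iota$; since $\alpha@\iota'_1,\alpha@\iota$ are then distinct members of $\Bmc$, normal form gives that $\iota'_1\cup\iota$ is not an interval, yet $\iota'_1\subseteq\iota$ forces $\iota'_1\cup\iota=\iota$, which is an interval.

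For part (2), my plan is to build an injection $f:\Bmc'\to\Bmc$ when $\Bmc$ is finite and to fall back on a countability observation otherwise. The core sub-claim is: if $\Bmc$ is finite, then for every $\alpha@\iota'\in\Bmc'$ there is a unique $\iota\in\Sigma_\alpha:=\{\iota\mid \alpha@\iota\in\Bmc\}$ with $\iota'\subseteq\iota$. Uniqueness follows directly from normal form: two distinct intervals of $\Sigma_\alpha$ whose intersection contains the nonempty interval $\iota'$ would have intersecting, hence interval-valued, union. For existence, I linearly order the finitely many intervals $\iota_1<\iota_2<\dots<\iota_n$ of $\Sigma_\alpha$ along the timeline and use normal form to produce, between each consecutive pair $\iota_k,\iota_{k+1}$, a gap point $s_k\in\Tbb$ not in $\iota_k\cup\iota_{k+1}$. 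The linear order together with pairwise disjointness implies that $s_k$ lies in no other $\iota_j$ either, so each $s_k$ belongs to $\Tbb\setminus\bigcup\Sigma_\alpha$. The order-convex $\iota'$ is included in $\bigcup\Sigma_\alpha$ by hypothesis, hence avoids every $s_k$; by convexity it sits entirely inside a single $\iota_k$. Setting $f(\alpha@\iota'):=\alpha@\iota_k$, the ``at most one'' clause of $\isubseteq$ is exactly the injectivity of $f$, yielding $|\Bmc'|\leq|\Bmc|$.

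When $\Bmc$ is infinite, the set of all possible facts over the countable signature $(\preds,\consts)$ and timeline $\Tbb\in\{\Qbb,\Zbb\}$ is itself countable, so $|\Bmc|=\aleph_0$ and the bound $|\Bmc'|\leq\aleph_0=|\Bmc|$ is automatic.

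The main technical obstacle is the existence half of the sub-claim: it genuinely fails for infinite $\Sigma_\alpha$ --- for instance, with $\Bmc=\{\alpha@\{q\}\mid q\in\Qbb\cap(0,1)\}$, the interval $(0,1)$ is pointwise included in $\Bmc$ without being contained in any single singleton of $\Bmc$, since each gap between two singletons is filled by other singletons. This is precisely why the argument splits into a finite and an infinite case. A minor point within the finite case is to ensure that the ``gap'' between two normal-form-separated intervals is witnessed by an actual element of $\Tbb$ rather than an abstract cut, which reduces to a short case analysis on the bracket types combined with the fact that $\Tbb\in\{\Zbb,\Qbb\}$.
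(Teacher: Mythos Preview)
Your proof is correct. Part~(1) is essentially the paper's argument.

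For part~(2), both you and the paper aim to inject $\Bmc'$ into $\Bmc$ by sending each $\alpha@\iota'\in\Bmc'$ to some $\alpha@\iota\in\Bmc$ with $\iota'\subseteq\iota$, then using the ``at most one'' clause of $\isubseteq$ for injectivity. The paper simply asserts that such a containing $\iota$ exists whenever $\Bmc$ is in normal form and $\alpha@\iota'$ is pointwise included in $\Bmc$, without distinguishing finite from infinite $\Bmc$. You correctly observe that this existence step fails over $\Qbb$ for infinite $\Bmc$---your example $\Bmc=\{\alpha@\{q\}\mid q\in\Qbb\cap(0,1)\}$ is in normal form and pointwise contains $\alpha@(0,1)$, yet no singleton contains $(0,1)$---so your case split is genuinely more careful than the paper's proof. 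Your finite-case treatment (ordering the pairwise disjoint intervals of $\Sigma_\alpha$ and producing gap points between consecutive ones) is a more explicit version of what the paper leaves implicit, and your countability fallback for infinite $\Bmc$ legitimately closes the gap. In the paper, the lemma is only ever applied with $\Bmc=\Dmc$ a finite dataset, so the omission is harmless in context, but your argument is the one that matches the lemma as stated.
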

To see why normal form is necessary, consider (1) $\Bmc=\{P@[0,4], P@[1,2]\}$, which is such that $\Bmc\notisubseteq\Bmc$, so that $\Bmc\notssubseteq\Bmc$, and (2) $\Bmc=\{P@[0,4], P@[3,7]\}$, which is such that $\{P@[0,1], P@[2,5], P@[6,7]\}\isubseteq\Bmc$.

For every dataset $\Dmc$, there exists a dataset $\Dmc'$ in normal form such that for every $t\in\Tbb$, for every ground atom $\alpha$, $\Dmc\models \alpha@\{t\}$ iff $\Dmc'\models \alpha@\{t\}$. Moreover, such $\Dmc'$ can be computed in polynomial time \wrt the size of $\Dmc$ by merging every $\alpha@\iota_1$ and $\alpha@\iota_2$ such that $\iota_1\cup\iota_2$ is an interval into $\alpha@\iota$ with $\iota=\iota_1\cup\iota_2$. In the rest of this paper, we assume that \emph{all datasets are in normal form} and \emph{all programs are consistent}.

\paragraph{Conflicts, repairs, and inconsistency-tolerant semantics}
We are now ready to formally state the definitions of conflicts and repairs of a temporal dataset \wrt a DatalogMTL program. 
We start with the notion of conflict, which is crucial to explain inconsistency.
\begin{definition}[Conflicts]
Let $\Pi$ be a DatalogMTL program and $\Dmc$ be a dataset. Given $x\in\{p,i,s\}$, a set of facts $\Cmc$ is an \emph{$x$-conflict} of $\Dmc$ \wrt $\Pi$ if $\Cmc$ is in normal form, $\Cmc\xsubseteq\Dmc$, $\Cmc$ is $\Pi$-inconsistent, and there is no $\Pi$-inconsistent $\Cmc'\xsubsetneq\Cmc$. 
We denote by $\xconflicts{\Dmc,\Pi}$ the set of all $x$-conflicts of $\Dmc$ \wrt $\Pi$. \end{definition}

\begin{example}\label{ex:running-conflicts}
Consider $\Pi$ and $\Dmc$ from Example~\ref{ex:running-incons}.  
The $s$-conflicts are $\{\mn{NoFever}(a)@[0,32],\mn{Fever}(a)@[14,18]\}$ and $\{\mn{NoFever}(a)@[0,32],\mn{Fever}(a)@[29,34]\}$, while 
the $p$-conflicts and $i$-conflicts are of the form $\{\mn{NoFever}(a)@\{t\},$ $\mn{Fever}(a)@\{t\}\}$ with $t\in[14,18]\cup[29,32]$.
\end{example}

We define repairs in a similar manner.
\begin{definition}[Repairs]
Let $\Pi$ be a DatalogMTL program and $\Dmc$ be a dataset. Given $x\in\{p,i,s\}$, a set of facts $\Rmc$ is an \emph{$x$-repair} of $\Dmc$ \wrt $\Pi$ if $\Rmc$ is in normal form, $\Rmc\xsubseteq\Dmc$, $\Rmc$ is $\Pi$-consistent, and there is no $\Pi$-consistent $\Rmc'$ such that $\Rmc\xsubsetneq\Rmc'\xsubseteq\Dmc$. 
We denote by $\xreps{\Dmc,\Pi}$ the set of all $x$-repairs of $\Dmc$ \wrt $\Pi$. 
\end{definition}

The requirement that $x$-repairs are in normal form ensures that when $\Dmc$ is $\Pi$-consistent, $\xreps{\Dmc,\Pi}=\{\Dmc\}$.

\begin{example}\label{ex:running-repairs}
$\Pi$ and $\Dmc$ from Example~\ref{ex:running-incons} have two s-repairs: 
\begin{align*}
\Rmc_1=&\Imc\cup\{\mn{NoFever}(a)@[0,32]\}\text{ and}\\
\Rmc_2=&\Imc\cup\{\mn{Fever}(a)@[14,18],\mn{Fever}(a)@[29,34] \}\text{ with}\\
\Imc=&\{\mn{PositiveAntiD}(a)@\{-90\},\mn{GetBlood}(a,b)@[24,26]\}.
\end{align*}
Every $p$-repair $\Rmc$ is such that $\Jmc\psubseteq\Rmc$ with 
\begin{align*}
\Jmc=\Imc\cup\{&\mn{NoFever}(a)@[0,14), \mn{NoFever}(a)@(18,29),\\&\mn{Fever}(a)@(32,34]\}
\end{align*} 
and for every $t\in[14,18]\cup[29,32]$, either $\mn{Fever}(a)@\{t\}$ or $\mn{NoFever}(a)@\{t\}$ is pointwise included in $\Rmc$. 

\noindent Finally, every $i$-repair $\Rmc$ is such that $\Imc\subseteq \Rmc$ and contains:
\begin{itemize}
\item either two facts $\mn{NoFever}(a)@[0,t\rangle$, $\mn{Fever}(a)@\langle t,34]$, where $\rangle,\langle$ are either $],($ or $),[$ and $t\in[29, 32]$;
\item or three facts $\mn{NoFever}(a)@[0,t\rangle$, $\mn{Fever}(a)@\langle t,18]$, and $\mn{Fever}(a)@[29,34]$, where $t\in[14,18]$,
\begin{itemize}
\item $\rangle,\langle$ are either $],($ or $),[$ and 
\item if $t=18$, then $\rangle,\langle$ are $),[$;
\end{itemize}
\item or three facts $\mn{Fever}(a)@[14,t_1\rangle$, $\mn{NoFever}(a)@\langle t_1,t_2\rangle'$, $\mn{Fever}(a)@\langle' t_2,34]$, where $t_1\in[14,18]$,  $t_2\in[29,32]$, 
\begin{itemize}
\item $\rangle,\langle$ and $\rangle',\langle'$ are either $],($ or $),[$, 
\item if $t_1=14$, then $\rangle,\langle$ are $],($.
\end{itemize}
\end{itemize}
\end{example}

We can now extend the definitions of the brave, CQA and intersection semantics to use different kinds of repairs. 

\begin{definition} Consider a DatalogMTL query $(\Pi, q(\vec{v},\intvar))$, dataset $\Dmc$, tuple $\ans$ of constants from $\Dmc$ with $|\ans|=|\vec{v}|$, and interval $\iota$. 
Given $x\in\{p,i,s\}$ such that $\xreps{\Dmc,\Pi}\neq\emptyset$, 
we say that $\ans$ is an \emph{answer} to $(\Pi, q(\vec{v},\intvar))$ under
\begin{itemize}
\item \emph{$x$-brave semantics}, written $(\Dmc,\Pi)\bravemodels{x} q(\ans,\iota)$, if $(\Rmc,\Pi)\models q(\ans,\iota)$ for some $\Rmc\in\xreps{\Dmc,\Pi}$;
\item \emph{$x$-CQA semantics}, written $(\Dmc,\Pi)\cqamodels{x} q(\ans,\iota)$, if $(\Rmc,\Pi)\models q(\ans,\iota)$ for every $\Rmc\in\xreps{\Dmc,\Pi}$;
\item \emph{$x$-intersection semantics}, written $(\Dmc,\Pi)\intmodels{x} q(\ans,\iota)$, if $(\Imc,\Pi)\models q(\ans,\iota)$ where $\Imc=\bigsqcap_{\Rmc\in\xreps{\Dmc,\Pi}}\Rmc$.
\end{itemize}
\end{definition}

\begin{restatable}{proposition}{PropRelationshipSemantics}
For every query $(\Pi, q(\vec{v},\intvar))$, dataset $\Dmc$, tuple of constants $\ans$, and interval $\iota$,
$(\Dmc,\Pi)\intmodels{x} q(\ans,\iota)$ implies $(\Dmc,\Pi)\cqamodels{x} q(\ans,\iota)$, which implies $(\Dmc,\Pi)\bravemodels{x} q(\ans,\iota)$. 
None of the converse implications holds.
\end{restatable}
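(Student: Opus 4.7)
The plan is to establish the two forward implications by a direct monotonicity argument and then to exhibit a single small counterexample that simultaneously separates $\bravemodels{x}$ from $\cqamodels{x}$ from $\intmodels{x}$, with the construction working uniformly for all $x\in\{p,i,s\}$.

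For the implication from the intersection semantics to CQA, fix an $x$-repair $\Rmc\in\xreps{\Dmc,\Pi}$. By the very definition of pointwise intersection, every punctual fact $\alpha@\{t\}\in \Imc=\bigsqcap_{\Rmc'\in\xreps{\Dmc,\Pi}}\Rmc'$ satisfies $\Rmc\models \alpha@\{t\}$, so every model of $\Rmc$ is a model of $\Imc$. Hence every model of $(\Rmc,\Pi)$ is a model of $(\Imc,\Pi)$, and $(\Imc,\Pi)\models q(\ans,\iota)$ entails $(\Rmc,\Pi)\models q(\ans,\iota)$. For the implication from CQA to brave, recall that the definitions require $\xreps{\Dmc,\Pi}\neq\emptyset$, so anything true in every repair is true in at least one.

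For the strict separations, I would use the propositional program $\Pi=\{\,\bot\gets P\land Q,\ R\gets P,\ R\gets Q\,\}$, which is consistent (witnessed by the empty interpretation), together with $\Dmc=\{P@\{0\},Q@\{0\}\}$. For each $x\in\{p,i,s\}$, the only $x$-repairs are $\Rmc_1=\{P@\{0\}\}$ and $\Rmc_2=\{Q@\{0\}\}$: any strictly larger subset of $\Dmc$ in the corresponding order must pointwise include both punctual facts and thereby fire the denial rule. The query $P@\intvar$ satisfies $(\Dmc,\Pi)\bravemodels{x} P@\{0\}$ via $\Rmc_1$, while $(\Rmc_2,\Pi)\not\models P@\{0\}$ shows it is not an $x$-CQA answer, separating brave from CQA. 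The query $R@\intvar$ is entailed by both $(\Rmc_1,\Pi)$ and $(\Rmc_2,\Pi)$ via the rules for $R$, so is an $x$-CQA answer, yet neither $P@\{0\}$ nor $Q@\{0\}$ lies in both repairs pointwise, so $\Imc=\emptyset$ and $(\emptyset,\Pi)\not\models R@\{0\}$, separating CQA from intersection.

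The only subtlety I anticipate is the uniformity over $x\in\{p,i,s\}$: since all facts in $\Dmc$ are punctual and already in normal form, the three subset relations coincide on subsets of $\Dmc$, so the same pair of repairs arises in each case and the argument transfers verbatim. I do not expect a real obstacle beyond being careful to invoke only pointwise inclusion and the monotonicity of DatalogMTL entailment, rather than tacitly importing the atemporal CQA reasoning.
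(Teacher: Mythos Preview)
Your proof is correct and follows essentially the same approach as the paper: the forward implications are argued identically via monotonicity and non-emptiness of the repair set, and your counterexample is the paper's counterexample up to a renaming of predicates (the paper uses $\Pi=\{Q\leftarrow P,\ Q\leftarrow R,\ \bot\leftarrow P\land R\}$ with $\Dmc=\{P@\{0\},R@\{0\}\}$, querying $P@\{0\}$ and $Q@\{0\}$). Your remark on uniformity over $x$ via punctual facts is a nice explicit justification that the paper leaves implicit.
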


\begin{example}\label{ex:running-sem}
Consider $\Pi$ and $\Dmc$ from Example~\ref{ex:running-incons}. By examining the s-repairs given in Example~\ref{ex:running-repairs}, we can check that:
\begin{itemize}
\item $(\Dmc,\Pi)\intmodels{s} \mn{AntiDRisk}(a)@[-90,\infty)$, 
\item $(\Dmc,\Pi)\not\bravemodels{s} \mn{FevEp}(a)@\{t\}$ for every $t\in\Tbb$, 
\item $(\Dmc,\Pi)\not\bravemodels{s} \mn{PotFnhtr}(a)@\{t\}$ for every $t\in\Tbb$. 
\end{itemize}
With the $p$-repairs (Example~\ref{ex:running-repairs}), we obtain that:
\begin{itemize}
\item $(\Dmc,\Pi)\intmodels{p} \mn{AntiDRisk}(a)@[-90,\infty)$, 
\item $(\Dmc,\Pi)\intmodels{p} \mn{FevEp}(a)@(32,34]$, 
\item $(\Dmc,\Pi)\bravemodels{p} \mn{PotFnhtr}(a)@\{t\}$ for all $t\in[29,30]$,
\item $(\Dmc,\Pi)\not\cqamodels{p} \mn{PotFnhtr}(a)@\{t\}$ for every $t\in\Tbb$. 
\end{itemize}
From the form of the $i$-repairs (Example~\ref{ex:running-repairs}), we obtain that:
\begin{itemize}
\item $(\Dmc,\Pi)\intmodels{i} \mn{AntiDRisk}(a)@[-90,\infty)$, 
\item $(\Dmc,\Pi)\bravemodels{i} \mn{FevEp}(a)@[29,34]$,
\item $(\Dmc,\Pi)\not\cqamodels{i} \mn{FevEp}(a)@\{t\}$ for each $t\in\Tbb$,
\item $(\Dmc,\Pi)\bravemodels{i} \mn{PotFnhtr}(a)@\{t\}$ for all $t\in[29,30]$,
\item $(\Dmc,\Pi)\not\cqamodels{i} \mn{PotFnhtr}(a)@\{t\}$ for each $t\in\Tbb$. 
\end{itemize}
\end{example}


\section{Properties of the Framework}\label{sec:properties}

We study properties of $x$-conflicts, $x$-repairs, and semantics based upon them. The results hold for 
$\Tbb=\Qbb$ and $\Tbb=\Zbb$. 

\subsection{Properties of Repairs and Conflicts}\label{sec:properties-conf-rep}

We will consider in particular the following properties, which are well known in the case of atemporal knowledge bases. 
\begin{definition}
We say that $\Prop{i}$ holds if it holds for every dataset $\Dmc$ (in normal form) and (consistent) program~$\Pi$.
\begin{description}
\item[$\Prop{1}$:] $\xreps{\Dmc,\Pi}\neq\emptyset$.
\item[$\Prop{2}$:] $\Dmc$ is $\Pi$-inconsistent iff $\xconflicts{\Dmc,\Pi}\neq\emptyset$.
\item[$\Prop{3}$:] $\xreps{\Dmc,\Pi}$ and $\xconflicts{\Dmc,\Pi}$ are finite.
\item[$\Prop{4}$:] Every $\Bmc\in\xreps{\Dmc,\Pi}\cup\xconflicts{\Dmc,\Pi}$ is finite.
\item[$\Prop{5}$:] For every fact $\alpha@\iota$ pointwise included in $\Dmc$, $\alpha@\iota$ is pointwise included in every $x$-repair of $\Dmc$ \wrt $\Pi$ iff $\alpha@\iota$ has an empty pointwise intersection with every $x$-conflict of $\Dmc$ \wrt $\Pi$.
\end{description}
\end{definition}

The notions based on $\ssubseteq$ have all these properties, 
while those based on $\psubseteq$ do not have any, and those based on $\isubseteq$ only one ($i$-repairs and $i$-conflicts are finite by Lemma~\ref{lem:normal-form-subset}).
\begin{restatable}{proposition}{PropSubsetBased}\label{prop:PropSubsetBased}
Properties $\Prop{1}$-$\Prop{5}$ hold for $x=s$.
\end{restatable}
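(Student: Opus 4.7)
The plan is to exploit Lemma~\ref{lem:normal-form-subset}(1): since $\Dmc$ is in normal form, $\Bmc\ssubseteq\Dmc$ coincides with $\Bmc\subseteq\Dmc$. Consequently, $\sreps{\Dmc,\Pi}$ is exactly the set of (classical) maximal $\Pi$-consistent subsets of $\Dmc$ and $\sconflicts{\Dmc,\Pi}$ the set of minimal $\Pi$-inconsistent subsets of $\Dmc$, so we are back in the standard atemporal repair/conflict setting. Since $\Dmc$ is finite, its power set is finite and each element is finite, which immediately gives $\Prop{3}$ and $\Prop{4}$.

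For $\Prop{1}$, note that $\emptyset$ is $\Pi$-consistent (because $\Pi$ is assumed consistent), so by finiteness one can extend $\emptyset$ to a $\subseteq$-maximal $\Pi$-consistent subset of $\Dmc$, which is an $s$-repair. For $\Prop{2}$, monotonicity gives one direction (any $s$-conflict is an inconsistent subset of $\Dmc$), and if $\Dmc$ itself is $\Pi$-inconsistent then, being a finite inconsistent subset of itself, it contains a $\subseteq$-minimal $\Pi$-inconsistent subset, which is an $s$-conflict.

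The main work is $\Prop{5}$. The key observation is that in a normal-form dataset, any two distinct facts $\alpha@\iota_1, \alpha@\iota_2\in\Dmc$ over the same ground atom $\alpha$ admit a timepoint strictly between $\iota_1$ and $\iota_2$ that lies in neither interval (otherwise $\iota_1\cup\iota_2$ would be an interval). Hence, if $\alpha@\iota$ is pointwise included in $\Dmc$, the connected set $\iota$ must be contained in a single interval $\iota^*$ with $\alpha@\iota^* \in \Dmc$, and the same uniqueness applies inside any $s$-repair (which is itself in normal form). This lets me translate both sides of $\Prop{5}$ into statements about $\alpha@\iota^*$: (i) $\alpha@\iota$ is pointwise included in every $s$-repair iff $\alpha@\iota^*$ lies in every $s$-repair; (ii) $\alpha@\iota$ has empty pointwise intersection with every $s$-conflict iff $\alpha@\iota^*$ lies in no $s$-conflict (using that any $\alpha@\iota''\in\Cmc\subseteq\Dmc$ with $\iota''\cap\iota\neq\emptyset$ must coincide with $\alpha@\iota^*$). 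The statement then reduces to the classical duality on the finite set $\Dmc$: a fact $f\in\Dmc$ belongs to every maximal $\Pi$-consistent subset of $\Dmc$ iff $f$ belongs to no minimal $\Pi$-inconsistent subset, which is proved in each direction by extending a consistent witness to a maximal one inside the finite subset lattice of $\Dmc$.

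The main obstacle is precisely this translation step in $\Prop{5}$: one has to use normal form carefully to extract the unique ``parent'' fact $\alpha@\iota^*$ and verify that both the pointwise-inclusion and the pointwise-intersection conditions collapse to simple membership statements about $\alpha@\iota^*$. Everything else is routine atemporal repair reasoning, enabled by Lemma~\ref{lem:normal-form-subset}.
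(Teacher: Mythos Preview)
Your proposal is correct and follows essentially the same approach as the paper: both reduce $s$-repairs and $s$-conflicts to classical $\subseteq$-maximal consistent and $\subseteq$-minimal inconsistent subsets of the finite set $\Dmc$ (the paper isolates this as a separate Lemma~\ref{lem:subset}, which you derive directly from Lemma~\ref{lem:normal-form-subset}), and for $\Prop{5}$ both extract the unique ``parent'' fact $\alpha@\iota^*\in\Dmc$ containing $\iota$ and reduce the pointwise conditions to membership of $\alpha@\iota^*$ in repairs/conflicts, then invoke the standard duality. The only point you leave implicit is that $\ssubsetneq$ coincides with $\subsetneq$ for subsets of a normal-form $\Dmc$ (needed to fully justify the max/min characterization), but this is routine from normal form and your argument makes clear you see it.
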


\begin{restatable}{corollary}{CorSubsetBased}\label{CorSubsetBased}
$\bigcap_{\Rmc\in\sreps{\Dmc,\Pi}}\Rmc=\Dmc\setminus\bigcup_{\Cmc\in\sconflicts{\Dmc,\Pi}}\Cmc$.
\end{restatable}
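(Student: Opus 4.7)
The plan is to reduce the corollary to Property $\Prop{5}$ instantiated with $x=s$, already established in Proposition~\ref{prop:PropSubsetBased}, by exploiting the normal form of $\Dmc$ to convert between pointwise inclusion/intersection and ordinary set membership.

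First I would use Lemma~\ref{lem:normal-form-subset}(1): since $\Dmc$ is in normal form and the normal-form condition is preserved under taking subsets, $\ssubseteq$ coincides with ordinary $\subseteq$ among subsets of $\Dmc$. Hence every $s$-repair is a maximal $\Pi$-consistent subset of $\Dmc$ and every $s$-conflict is a minimal $\Pi$-inconsistent subset of $\Dmc$ under standard inclusion. In particular, both $\bigcap_{\Rmc \in \sreps{\Dmc,\Pi}} \Rmc$ and $\Dmc \setminus \bigcup_{\Cmc \in \sconflicts{\Dmc,\Pi}} \Cmc$ are subsets of $\Dmc$, so it suffices to show, for each $\alpha@\iota \in \Dmc$, that $\alpha@\iota$ is in every $s$-repair iff $\alpha@\iota$ belongs to no $s$-conflict.

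Next I would prove a short translation claim: for any $\Bmc \subseteq \Dmc$ and any $\alpha@\iota \in \Dmc$, the following are equivalent: (a) $\alpha@\iota \in \Bmc$; (b) $\alpha@\iota$ is pointwise included in $\Bmc$; (c) $\{\alpha@\iota\}\sqcap\Bmc \neq \emptyset$. The implications (a)$\Rightarrow$(b)$\Rightarrow$(c) are immediate, so only (c)$\Rightarrow$(a) requires argument. Given a witness $\alpha@\iota' \in \Bmc$ with $\iota \cap \iota' \neq \emptyset$, the union $\iota \cup \iota'$ is an interval, and since both $\alpha@\iota$ and $\alpha@\iota'$ lie in the normal-form dataset $\Dmc$, this forces $\iota' = \iota$, so $\alpha@\iota \in \Bmc$.

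Plugging this translation into Property $\Prop{5}$ for $x=s$ yields exactly the required pointwise equivalence: $\alpha@\iota$ belongs to every $s$-repair iff $\alpha@\iota$ is in no $s$-conflict, concluding the proof. There is no genuine obstacle here: Proposition~\ref{prop:PropSubsetBased} already absorbs the substantive maximality/minimality argument, and all that remains is the routine bookkeeping that, on a normal-form dataset, pointwise notions collapse to ordinary membership.
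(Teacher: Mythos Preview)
Your proposal is correct and follows essentially the same approach as the paper's own proof: reduce to $\Prop{5}$ and use that, on a normal-form dataset, pointwise inclusion and non-empty pointwise intersection collapse to ordinary membership for subsets of $\Dmc$. The only difference is that you spell out the translation claim (a)$\Leftrightarrow$(b)$\Leftrightarrow$(c) explicitly, whereas the paper asserts the corresponding chain of equivalences more tersely.
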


\begin{proposition}\label{prop:PropPointwiseIntervalBased}
None of the properties $\Prop{1}$-$\Prop{5}$ hold for $x=p$. For $x=i$, $\Prop{4}$ holds but properties $\Prop{1}$-$\Prop{3}$ and $\Prop{5}$ do not.
\end{proposition}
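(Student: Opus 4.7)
The plan is to handle each item separately. The single positive assertion ($\Prop{4}$ for $x=i$) follows from Lemma~\ref{lem:normal-form-subset}, and each negative assertion is handled by a dedicated counterexample. Two ideas recur: density of $\Qbb$ together with bracket/cut-point freedom yields uncountable families of repairs and conflicts (to violate $\Prop{3}$ and $\Prop{4}$ for $p$), and open-range temporal operators such as $\alwaysf_{(0,1)}$ and $\eventf_{(0,1)}$ yield ascending/descending chains of consistent/inconsistent sub-intervals with no extremal elements (to violate $\Prop{1}$ and $\Prop{2}$).

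First, I establish $\Prop{4}$ for $x=i$ directly from Lemma~\ref{lem:normal-form-subset}(2): any $i$-subset of a normal-form dataset has cardinality at most $|\Dmc|$, so every $i$-repair and $i$-conflict is finite. For the $\Prop{3}$ failures (both $x=p$ and $x=i$) and $\Prop{4}$ failure for $x=p$, I appeal to Examples~\ref{ex:running-conflicts} and~\ref{ex:running-repairs}: the families of $p$- and $i$-conflicts, and of $p$- and $i$-repairs, are parameterized by $t\in[14,18]\cup[29,32]$ and (for $i$-repairs) by additional bracket choices, giving uncountable families; and each $p$-repair contains uncountably many punctual facts (one per $t$ in the conflict region), violating $\Prop{4}$.

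For $\Prop{1}$ at $x=i$ I take $\Dmc=\{P@[0,1]\}$ and $\Pi=\{\bot\leftarrow\alwaysf_{(0,1)}P\}$: the dataset is $\Pi$-inconsistent because $(0,1)\subseteq[0,1]$, but a sub-interval $\iota\subseteq[0,1]$ is $\Pi$-consistent iff $(0,1)\not\subseteq\iota$, and any such $\iota$ admits a strict consistent extension (e.g., $[0,r]$ extends to $[0,r']$ for $r<r'<1$) while the upward limits $[0,1)$ and $(0,1]$ are themselves inconsistent, so $\ireps{\Dmc,\Pi}=\emptyset$. For $\Prop{2}$ at $x=i$ I use the same $\Dmc$ with $\Pi=\{\bot\leftarrow P\land\eventf_{(0,1)}P\}$: $\Dmc$ is inconsistent, but every inconsistent $i$-subset $\{P@\iota\}$ has $\iota$ non-punctual, and halving $\iota$ preserves the existence of a pair of points at distance in $(0,1)$, giving an infinite descending chain of inconsistent $i$-subsets with no minimum, so $\iconflicts{\Dmc,\Pi}=\emptyset$. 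For $\Prop{5}$ at $x=i$, the $\Prop{1}$ counterexample suffices: one checks that $\{P@(0,1)\}$ is an $i$-conflict, and the fact $P@\{0.5\}$ has non-empty pointwise intersection with it yet is vacuously in every (non-existent) $i$-repair, falsifying the biconditional.

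The hardest step will be constructing analogous counterexamples for $x=p$, because pointwise subsets are much more flexible: removing a single punctual fact often restores consistency, bypassing the obstructions that killed extremal elements in the interval-based setting. My plan is to cascade the effect of individual points through intensional predicates, using for instance a rule of the form $\alwaysf_{[0,1]}Q\leftarrow P$ so that a single punctual $P$-fact forces a continuum of $Q$-facts, combined with $\bot$-rules on $Q$ that cannot be repaired by removing finitely many pointwise $P$-facts. Once $\Prop{1}$ and $\Prop{2}$ have been refuted for $x=p$ by such a construction, $\Prop{5}$ for $x=p$ follows by the same vacuous-truth argument as in the $i$ case, while $\Prop{3}$ and $\Prop{4}$ for $x=p$ are already witnessed by Example~\ref{ex:running-repairs}.
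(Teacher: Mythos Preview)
Your proposal has two genuine gaps.

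\textbf{The $x=p$ cases for $\Prop{1}$ and $\Prop{2}$ are not proved.} You explicitly call this ``the hardest step'' and offer only a plan (cascading via $\alwaysf_{[0,1]}Q\leftarrow P$), not a construction. That plan is not convincing as stated: a single punctual $P$-fact produces a $Q$-interval, but deleting that one $P$-fact deletes the whole $Q$-interval, so it is unclear what prevents maximal consistent $p$-subsets from existing. You also note correctly that your $i$-example does \emph{not} transfer to $p$: for $\Dmc=\{P@[0,1]\}$, $\Pi=\{\bot\leftarrow\alwaysf_{(0,1)}P\}$, the sets $\{P@[0,r),P@(r,1]\}$ for $r\in(0,1)$ \emph{are} $p$-repairs, so $\Prop{1}$ holds there for $x=p$. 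Until you actually produce a working $p$-counterexample, the proof is incomplete.

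\textbf{Your counterexamples fail over $\Zbb$.} Section~\ref{sec:properties} states that the results hold for both $\Tbb=\Qbb$ and $\Tbb=\Zbb$, so each property must be refuted over each timeline. All of your constructions rely on density (cut-points in $(0,1)$, halving intervals, uncountably many $t\in[14,18]$). Over $\Zbb$, the range $(0,1)$ is empty, so $\alwaysf_{(0,1)}P$ is vacuously true and your program $\{\bot\leftarrow\alwaysf_{(0,1)}P\}$ is itself inconsistent (hence disallowed); your $\Prop{2}$ program becomes trivially consistent; and the running example has only finitely many conflict timepoints, so $\Prop{3}$ and $\Prop{4}$ are not refuted either.

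The paper avoids both problems with a single device: unbounded intervals. Example~\ref{ex:existence1} takes $\Dmc=\{P@(0,\infty)\}$ and $\Pi=\{\bot\leftarrow\alwaysf_{(0,\infty)}P\}$ and shows in a few lines that no $p$- or $i$-repair and no $p$- or $i$-conflict exists, simultaneously refuting $\Prop{1}$ and $\Prop{2}$ for both $x\in\{p,i\}$ and over both $\Qbb$ and $\Zbb$. The key is that ``shift right by $1$'' always produces a strictly smaller inconsistent tail or a strictly larger consistent prefix, and this works integrally. Similarly, the paper's examples for $\Prop{3}$, $\Prop{4}$, and $\Prop{5}$ (Examples~\ref{ex:inf-p-rep}--\ref{ex:relationship-conf-rep}) all use $\infty$-endpoints and give non-degenerate witnesses (repairs and conflicts both present) rather than relying on vacuous truth. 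Your $\Prop{5}$ argument via ``no repairs exist'' is formally valid over $\Qbb$, but it evaporates once you fix the $\Zbb$ issue, since the fix will typically restore existence of repairs.
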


In what follows, we will provide the counterexamples used to prove Proposition \ref{prop:PropPointwiseIntervalBased}, 
as well as additional examples that illustrate the properties of $x$-repairs and $x$-conflicts.

\subsubsection{Existence of $p$- and $i$-Repairs and Conflicts} 
A major difference between repairs and conflicts based on $\ssubseteq$ and those 
based on $\psubseteq$ or $\isubseteq$ is that the latter need not exist.

\begin{example}\label{ex:existence1}
Consider the following dataset and program.
\begin{align*}
\Dmc=\{P@(0,\infty)\} &&
\Pi=\{\bot\leftarrow \alwaysf_{(0,\infty)} P\}
\end{align*}
There is no $p$- or $i$-repair and no $p$- or $i$-conflict of $\Dmc$ \wrt $\Pi$. 

For $x\in\{p,i\}$, every $\Pi$-inconsistent $\Cmc\xsubseteq\Dmc$ in normal form is of the form $\{P@\langle t,\infty)\}$. Since $\Cmc'=\{P@(t+1,\infty)\}$ is $\Pi$-inconsistent and $\Cmc'\xsubsetneq \Cmc$, then $\Cmc$ is not an $x$-conflict.

Every $\Rmc\isubseteq \Dmc$ is either empty (hence not an $i$-repair since, \eg $\{P@\{1\}\}$ is $\Pi$-consistent) or of the form $\{P@\langle t_1,t_2\rangle\}$ with $\langle t_1,t_2\rangle\neq\emptyset$. If $t_2=\infty$, $\Rmc$ is $\Pi$-inconsistent. Otherwise, $\Rmc'=\{P@\langle t_1,t_2+1\rangle\}$ is $\Pi$-consistent and $\Rmc\isubsetneq\Rmc'\isubseteq \Dmc$. In both cases, $\Rmc$ is not an $i$-repair. 

For every $\Rmc\psubseteq\Dmc$ in normal form, if there is only one $t\in(0,\infty)$ such that $\Rmc\not\models P@\{t\}$, then $\Rmc$ contains $P@(t,\infty)$ so $\Rmc$ is $\Pi$-inconsistent. Hence, for every $\Pi$-consistent $\Rmc\psubseteq\Dmc$, there exist $t_1,t_2\in(0,\infty)$ such that $t_1<t_2$ and $\Rmc\not\models P@\{t_1\}$, $\Rmc\not\models P@\{t_2\}$. However, $\Rmc'=\Rmc\cup\{P@\{t_1\}\}$ is then $\Pi$-consistent and $\Rmc\psubsetneq\Rmc'\psubseteq \Dmc$ so $\Rmc$ is not a $p$-repair.  
\end{example}

Example~\ref{ex:existence2} shows that there is no relationship between the existence of $x$-conflicts and the existence of $x$-repairs. 

\begin{example}\label{ex:existence2}
Let $\Dmc_c=\Dmc\cup\{R@\{0\}\}$ and $\Pi_c=\Pi\cup\{\bot\leftarrow R\}$ with $\Dmc$ and $\Pi$ from Example~\ref{ex:existence1}. 
We can show as in Example~\ref{ex:existence1} that for $x\in\{p,i\}$, there is no $x$-repair of $\Dmc_c$ \wrt $\Pi_c$. However, $\{R@\{0\}\}$ is an $x$-conflict of $\Dmc_c$ \wrt $\Pi_c$. Now, let 
\begin{align*} 
\Dmc_r=\{&P@[0,\infty),Q@\{0\}\} \\
\Pi_r=\{&\bot\leftarrow Q\land \eventf_{(0,\infty)}\alwaysf_{(0,\infty)}P\}.
\end{align*} 
For $x\in\{p,i\}$, there is no $x$-conflict of $\Dmc_r$ \wrt $\Pi_r$. Indeed, every $\Pi_r$-inconsistent $\Cmc\psubseteq \Dmc$ has to be such that $\Cmc\models P@(t,\infty)$ for some $t>0$ and none of such $\Cmc$ is minimal \wrt $\xsubseteq$. 
Yet, $\{P@[0,\infty)\}$ is an $x$-repair of $\Dmc_r$ \wrt $\Pi_r$.
\end{example}

The next examples show there is no relationship between the existence of $p$-repairs and the existence of $i$-repairs, nor  between 
existence of $p$-conflicts and existence of $i$-conflicts. 

\begin{restatable}{example}{Exempleexistencethree}\label{ex:existence3}
The following $\Dmc_i$ and $\Pi_i$ have no $p$-repair (\cf Example~\ref{ex:existence1}) but $\{P@(-2,0),Q@\{0\}\}$ is an $i$-repair. 
\begin{align*} 
\Dmc_i=\{&P@(-2,\infty),Q@\{0\}\} \\
\Pi_i=\{&\bot\leftarrow\alwaysf_{(0,\infty)}P, \ \bot\leftarrow Q\land P\}
\end{align*} 
In the other direction, let $\Dmc_p=\{P@(-\infty,\infty),Q@\{0\} \}$ and 
\begin{align*} 
\Pi_p=\{& \bot\leftarrow\alwaysf_{[0,\infty)}P,\ \bot\leftarrow\alwaysp_{[0,\infty)}P, \ \bot\leftarrow P\land Q,\\& 
\bot\leftarrow Q\land \alwaysp_{(0,10)}P\land \eventf_{[10,\infty)}P, 
\\& \bot\leftarrow Q\land \alwaysf_{(0,10)}P\land \eventp_{[10,\infty)}P\}.
\end{align*} 
One can check that $\{Q@\{0\},P@(-10,0),P@(0,10)\}$ is a $p$-repair, but one can show that there is no $i$-repair. 
\end{restatable}

\begin{restatable}{example}{Exampleexistencefour}\label{ex:existence4}
$\Dmc_i=\{P@[0,\infty),Q@\{0\}\}$ is an $i$-conflict of itself \wrt 
\mbox{$\Pi_i=\{\bot\leftarrow P\land Q\land \eventf_{(0,\infty)}\alwaysf_{(0,\infty)}P\}$}. 
However,  there is no $p$-conflict of $\Dmc_i$ \wrt $\Pi_i$. Indeed, every $\Pi_i$-inconsistent dataset $\Cmc\psubseteq\Dmc_i$ in normal form has the form $\{Q@\{0\},P@\{0\},P@\langle t,\infty)\}$, and $\{Q@\{0\},P@\{0\},P@\langle t+1,\infty)\}$ is also $\Pi_i$-inconsistent. 

In the other direction, let $\Dmc_p=\{P@[0,\infty),Q@\{0\}\}$ and 
\begin{align*} 
\Pi_p=\{&\bot \leftarrow\alwaysf_{(0,\infty)}P,\ \bot\leftarrow Q \land\alwaysf_{[0,\infty)}\eventf_{[0,1)}P \}.
\end{align*} 
One can easily check that $\{Q@\{0\},P@\{k\}\mid k\in\Nbb\} $ is a $p$-conflict, but one can show that there is no $i$-conflict. 
\end{restatable}

\subsubsection{Size and Number of $p$- and $i$-Repairs and Conflicts} 
It follows from Lemma~\ref{lem:normal-form-subset} that the $i$-repairs and $i$-conflicts  of a dataset $\Dmc$ \wrt a program $\Pi$ contain at most as many facts as $\Dmc$, hence are finite. In contrast, we have seen in Example~\ref{ex:running-incons} that a $p$-repair may be infinite. Example~\ref{ex:inf-p-rep} shows that some datasets have \emph{only} infinite $p$-repairs \wrt some programs, and Example~\ref{ex:inf-p-conf} shows a similar result for $p$-conflicts.

\begin{restatable}{example}{ExampleInfPRep}\label{ex:inf-p-rep}
Consider the following dataset and program.
\begin{align*}
\Dmc=\{P@(0,\infty)\} &&
\Pi=\{\bot\leftarrow \alwaysf_{[0,2]} P \}
\end{align*}
There exist $p$-repairs of $\Dmc$ \wrt $\Pi$, such as $\{ P@(2k,2k+2)\mid k\in\Nbb\}$, but one can show that they are all infinite. 
\end{restatable}

\begin{restatable}{example}{ExampleInfPConf}\label{ex:inf-p-conf}
Consider the following dataset and program.
\begin{align*}
\Dmc=\{&P@[0,\infty),Q@\{0\}\}\\
\Pi=\{&\bot \leftarrow Q\land \alwaysf_{[0,\infty)}\eventf_{[0,2)}P \}
\end{align*}
There are $p$-conflicts of $\Dmc$ \wrt $\Pi$, such as $\{ Q@\{0\}, P@\{2k\}$ $\mid k\in\Nbb \}$, but one can show that they are all infinite. 
\end{restatable}

Moreover, for both $x=i$ and $x=p$, there can be infinitely many $x$-repairs / $x$-conflicts: 

\begin{example}
The following $\Dmc$ and $\Pi$ have infinitely many $p$- and $i$- repairs and conflicts even if the timeline is $(\Zbb,\leq)$:
\begin{align*}
\Dmc=\{P@[0,\infty), Q@[0,\infty)\}   && \Pi=\{\bot\leftarrow P\land Q\}.
\end{align*}
Indeed, for every $t\in[0,\infty)$, $\{P@\{t\},Q@\{t\}\}$ is a $p$- and an $i$-conflict, and $\{P@[0,t), Q@[t,\infty)\}$ is a $p$- and an $i$-repair. 
\end{example}

\subsubsection{Absence of Link Between $p/i$- Repairs and Conflicts}
Example~\ref{ex:relationship-conf-rep} shows that a fact may be pointwise included in all $p$-, or $i$-, repairs while it is also pointwise included in a $p$-, or $i$-, conflict, respectively, and, symmetrically, that a fact may have an empty pointwise intersection with all $p$-, or $i$-, conflicts but also with some $p$-, or $i$-, repair.

\begin{example}\label{ex:relationship-conf-rep}
Consider $\Dmc_i$ and $\Pi_i$ defined in Example~\ref{ex:existence3}. 
There is only one $i$-repair, $\{P@(-2,0),Q@\{0\}\}$, but $Q@\{0\}$ belongs to the $i$-conflict $\{P@\{0\},Q@\{0\}\}$. 
Symmetrically, $P@(0,\infty)$ has an empty intersection with every $i$-conflict but also with every $i$-repair. Indeed, $\{P@(0,\infty)\}$ is $\Pi_i$-inconsistent but is not minimal \wrt $\isubseteq$. 

For the $p$- case, we first consider again $\Dmc_i$ but extend $\Pi_i$ with $\bot\leftarrow Q\land \eventf_{[0,\infty)}P$. Now $\{P@(-2,0),Q@\{0\}\}$ is the only $p$-repair but $\{P@\{0\},Q@\{0\}\}$ is a $p$-conflict so $Q@\{0\}$ is in all $p$-repairs and in some $p$-conflict. 
For the other direction, consider $\Dmc=\{P@[0,\infty),Q@\{0\}, R@\{0\}\}$ and 
\begin{align*} 
\Pi=\{&\bot\leftarrow P\land Q\land \eventf_{(0,\infty)}\alwaysf_{(0,\infty)}P, \ \bot\leftarrow R\}.
\end{align*} 
The only $p$-conflict of $\Dmc$ \wrt $\Pi$ is $\{R@\{0\}\}$ (\cf Example~\ref{ex:existence4}) so $Q@\{0\}$ has an empty intersection with every $p$-conflict. Yet, $\{P@[0,\infty)\}$ is a $p$-repair that does not contain $Q@\{0\}$. 
\end{example}


\subsubsection{Case of Bounded-Interval Datasets over $\mathbb{Z}$}\label{sec:props-bounded-z}
We have seen that $p$- and $i$-repairs and conflicts 
need not exist, and even when they do, they may be infinite in size and/or number. 
Moreover, this holds not only for the dense timeline $(\Qbb, \leq)$, 
but also for $(\Zbb,\leq)$.  
We observe, however, that the negative results for $\Zbb$ crucially rely upon 
using $\infty$ or $-\infty$ as endpoints.  
This leads us to explore what happens when we adopt $\Tbb=\Zbb$ but 
restrict datasets to only use \emph{bounded intervals} (i.e., finite integers as endpoints). 

The following result summarizes the properties of repairs and conflicts in this 
setting, showing in particular that restricting to bounded-interval datasets suffices 
to ensure existence and finiteness of $p$- and $i$-repairs and conflicts:

\begin{restatable}{proposition}{PropertiesBoundedZ}\label{prop:PropertiesBoundedZ}
When $\Tbb=\Zbb$ and datasets $\Dmc$ are restricted to only use bounded intervals, 
 $\Prop{1}$-$\Prop{5}$ hold for $x=p$, $\Prop{1}$-$\Prop{4}$ hold for $x=i$, and $\Prop{5}$ does not hold for $x=i$. 
\end{restatable}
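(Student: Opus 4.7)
The plan exploits the finiteness induced by bounded integer intervals. Define $S(\Dmc) := \{(\alpha, t) \mid \alpha@\{t\} \text{ is pointwise included in } \Dmc\}$; since $\Dmc$ is a finite set of facts each with bounded integer interval, $|S(\Dmc)|<\infty$.

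For the $p$-case, I would first exhibit a bijection between normal-form $p$-subsets of $\Dmc$ and subsets of $S(\Dmc)$. Forward, $\Bmc\mapsto S(\Bmc)$; inverse, take $T\subseteq S(\Dmc)$, group pairs by predicate, and split each predicate's timepoints into maximal runs of consecutive integers, emitting one fact per run. On $\Zbb$, normal form corresponds exactly to this maximal-run decomposition, so the two maps are mutually inverse. The bijection is an order isomorphism ($\psubseteq \leftrightarrow \subseteq$) and preserves $\Pi$-consistency, since the constraint that $\Bmc$ imposes on a model $\mathfrak{M}$ of $\Pi$ is precisely ``$\mathfrak{M},t\models \alpha$ for each $(\alpha,t)\in S(\Bmc)$'', with no restriction on the remaining pairs. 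Properties $\Prop{1}$--$\Prop{4}$ then reduce to standard facts about the finite poset $(2^{S(\Dmc)},\subseteq)$: $\Prop{4}$ from $|S(\Dmc)|<\infty$; $\Prop{3}$ from $|2^{S(\Dmc)}|<\infty$; $\Prop{1}$ by extending $\emptyset$ (consistent since $\Pi$ is) to a maximal consistent element; $\Prop{2}$ by shrinking $\Dmc$ to a minimal inconsistent one. For $\Prop{5}$, the classical atemporal equivalence ``an element lies in every maximal consistent subset iff in no minimal inconsistent subset'' applies to each $(\alpha,t)\in S(\Dmc)$, and conjoining over $t\in\iota$ delivers the stated equivalence for $\alpha@\iota$.

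For the $i$-case, $\Prop{4}$ is Lemma~\ref{lem:normal-form-subset}. Property $\Prop{3}$ holds because each $\alpha@\iota\in\Dmc$ has $O(|\iota|^2)$ subintervals and any $i$-subset picks at most one per fact, so there are only finitely many $i$-subsets; $\Prop{1}$ and $\Prop{2}$ then follow from the usual extend-from-$\emptyset$ and shrink-from-$\Dmc$ arguments in this finite poset. To refute $\Prop{5}$ for $i$, I would use $\Dmc = \{P@[0,5], Q@\{2\}, R@\{3\}\}$ with $\Pi = \{\bot\leftarrow P\land Q,\ \bot\leftarrow P\land R\}$. The only $i$-conflicts are $\{P@\{2\}, Q@\{2\}\}$ and $\{P@\{3\}, R@\{3\}\}$, so $P@\{5\}$ has empty pointwise intersection with every $i$-conflict. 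Yet $\Rmc := \{R@\{3\}, P@[0,2]\}$ is an $i$-repair: every $\Bmc$ with $\Rmc \isubsetneq \Bmc \isubseteq \Dmc$ must either enlarge $P@[0,2]$ to a subinterval of $[0,5]$ reaching $\{3\}$ (clashing with $R@\{3\}$) or add $Q@\{2\}$ (clashing with $P@\{2\}$), hence none is $\Pi$-consistent; but $\Rmc$ does not pointwise include $P@\{5\}$, contradicting $\Prop{5}$.

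The most delicate step is proving that the $p$-case correspondence genuinely preserves $\Pi$-consistency: one must show that the partial interpretation encoded by $S(\Bmc)$ fully captures the constraint $\Bmc$ imposes on models of $\Pi$, which hinges on atoms being forced only at timepoints in $S(\Bmc)$ and freely assignable elsewhere. Certifying maximality of the $i$-repair in the counterexample similarly requires a brief case analysis over the possible $\isubsetneq$ extensions within $\Dmc$.
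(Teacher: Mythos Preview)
Your proposal is correct and follows essentially the same approach as the paper: the $p$-case via the bijection between normal-form $p$-subsets and subsets of the pointwise fact set (the paper's $\mn{tp}(\Dmc)$ is your $S(\Dmc)$), and the $i$-case via finiteness of subintervals plus the greedy extend/shrink arguments. Your counterexample refuting $\Prop{5}$ for $x=i$ differs from the paper's (which uses $\Dmc=\{P@[1,5], Q@\{3\}\}$ with $\Pi=\{\bot\gets Q\wedge\alwaysf_{[0,2]}P\}$) but is equally valid and arguably cleaner, since it needs no temporal operators.
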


\subsection{Comparing the Different Semantics} 
\label{sec:relationshipSem}

The remaining examples show the following proposition.
\begin{proposition}
For every $\sem\in\{\text{brave},\text{CQA},\cap\}$ and $x\neq y\in\{p,i,s\}$, 
there exist $\Dmc$ and $\Pi$ such that $\Dmc$ has $x$- and $y$-repairs \wrt $\Pi$, $(\Dmc,\Pi)\semmodels{y}q(\ans,\iota)$ and $(\Dmc,\Pi)\not\semmodels{x}q(\ans,\iota)$. 
\end{proposition}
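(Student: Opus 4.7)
The plan is to exhibit, for each pair of distinct repair types $x\neq y\in\{p,i,s\}$ and each semantics $\sem\in\{\text{brave},\text{CQA},\cap\}$, a concrete triple $(\Dmc,\Pi,q)$ witnessing $(\Dmc,\Pi)\semmodels{y}q$ while $(\Dmc,\Pi)\not\semmodels{x}q$. I would first harvest witnesses directly from the running Example~\ref{ex:running-sem}, and then construct small purpose-built datasets for the remainder.

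From Example~\ref{ex:running-sem} I would extract three pivotal witnesses. The query $\mn{FevEp}(a)@(32,34]$ is $p$-intersection-entailed but not $s$-brave-entailed, which by the monotonicity chains $\intmodels{p}\Rightarrow\cqamodels{p}\Rightarrow\bravemodels{p}$ and $\not\bravemodels{s}\Rightarrow\not\cqamodels{s}\Rightarrow\not\intmodels{s}$ handles the ordered pair $(x,y)=(s,p)$ across all three semantics. The query $\mn{PotFnhtr}(a)@\{29\}$ is $i$-brave-entailed but not $s$-brave-entailed, handling $(s,i)$ for brave. And the same query $\mn{FevEp}(a)@(32,34]$ is $p$-intersection-entailed but not $i$-CQA-entailed at any point, handling $(i,p)$ for CQA and intersection.

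For the remaining combinations I would deploy two small custom datasets. Let
\[\Dmc_1=\{P(c)@[0,10],\,Q_1(c)@[0,4],\,Q_2(c)@[6,10]\},\quad \Pi_1=\{\bot\leftarrow P(x)\land Q_j(x):j\in\{1,2\}\}.\]
An enumeration gives exactly four $i$-repairs---$\{P@[0,10]\}$, $\{P@(4,6),Q_1,Q_2\}$, $\{P@(4,10],Q_1\}$, $\{P@[0,6),Q_2\}$---all of which contain $P$ at time $5$, while the $s$-repair $\{Q_1,Q_2\}$ does not; hence $q=P(c)@\{5\}$ witnesses $(s,i)$ for CQA and intersection. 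For the $(i,p)$-brave case I would add a three-point alternation rule such as $R(x)\leftarrow A(x)\land\eventf_{\{1\}}B(x)\land\eventf_{\{2\}}A(x)$ to a dataset $\Dmc_2=\{A(c)@[0,2],B(c)@[0,2]\}$ with $\bot\leftarrow A(x)\land B(x)$: a $p$-repair can interleave $A(c)$ at $\{0,2\}$ with $B(c)@\{1\}$ and bravely derive $R(c)@\{0\}$, but no $i$-repair can, since any single $A$-interval covering both $0$ and $2$ also covers $1$ and clashes with $B$. For the three reverse ordered pairs $(p,s)$, $(p,i)$, $(i,s)$, I would enrich the same small programs with Datalog rules that privilege the coarser repair type---for instance, augmenting $\Pi_1$ with $R(x)\leftarrow\alwaysp_{[0,5]}P(x)$ and $R(x)\leftarrow Q_2(x)$ makes both $s$-repairs entail $R(c)@\{8\}$ while the $i$-repair $\{P@(4,10],Q_1\}$ loses it on $\{8\}$, witnessing $(i,s)$ for CQA---with analogous auxiliary rules for the remaining reverse cases.

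The main obstacle I anticipate is the $(i,p)$-brave construction, where I must verify by a short case analysis over single-interval placements of $A$ and $B$ that no $i$-repair realises the required alternation. Once this structural check is done and the remaining witnesses are verified against the finite sets of relevant repairs, the proposition follows by collecting the examples.
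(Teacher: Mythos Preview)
Your proposal covers roughly ten of the eighteen required $(x,y,\sem)$ witness combinations concretely, but the remaining ones—the ``reverse'' pairs $(x,y)\in\{(p,s),(p,i),(i,s)\}$ beyond the single $(i,s)$-CQA instance you spell out—are handwaved, and for the brave semantics they are genuinely harder than your sketch suggests.

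The central obstacle is this: whenever every $s$-repair can be extended to an $i$-repair (respectively a $p$-repair), monotonicity of entailment forces $(\Dmc,\Pi)\bravemodels{s}q\Rightarrow(\Dmc,\Pi)\bravemodels{i}q$ (respectively $\bravemodels{p}$). This extension property holds in the bounded, finite-interval settings your custom datasets $\Dmc_1,\Dmc_2$ inhabit, so ``enriching the same small programs'' with auxiliary rules cannot produce a witness with $\bravemodels{s}q$ but $\not\bravemodels{i}q$. The paper's witnesses for these cases (Examples~\ref{ex:brave-s-not-imply-i} and~\ref{ex:brave-i-s-not-imply-p}) work precisely by using unbounded intervals together with rules such as $\bot\leftarrow Q\land\eventf_{(0,\infty)}\alwaysf_{(0,\infty)}P$ that force a \emph{unique} $i$- or $p$-repair which necessarily drops the query atom, while a separate $s$-repair retains it. Your outline gives no indication of this mechanism.

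A secondary gap: your concrete $(i,s)$-CQA witness $R(c)@\{8\}$ does not extend to the other two semantics. For $\cap$, the pointwise intersection of your two $s$-repairs $\{P@[0,10]\}$ and $\{Q_1,Q_2\}$ is empty, so $\not\intmodels{s}R(c)@\{8\}$; for brave, $\{P@[0,10]\}$ is itself an $i$-repair, so $\bravemodels{i}R(c)@\{8\}$. The paper's trick here (Example~\ref{ex:s-not-imply-x}) is to engineer a \emph{unique} $s$-repair by adding a rule like $\bot\leftarrow\alwaysf_{[0,10]}P$ that disqualifies the $P$-only subset, so that the $s$-intersection is non-trivial.

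Finally, a minor correction: your claim of ``exactly four $i$-repairs'' for $\Dmc_1$ is false—there are infinitely many, parametrised by the cut points between $P$'s interval and the $Q_j$'s. Your conclusion that all of them contain $P@\{5\}$ is nevertheless correct, since any cut point lies in $[0,4]\cup[6,10]$.
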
 

Example~\ref{ex:p-not-imply-x} shows the case $y=p$ and $x\in \{i,s\}$. 
\begin{example}\label{ex:p-not-imply-x}
Consider our running example and recall from Example~\ref{ex:running-sem} that $(\Dmc,\Pi)\intmodels{p} \mn{FevEp}(a)@\{34\}$ (hence $(\Dmc,\Pi)\cqamodels{p} \mn{FevEp}(a)@\{34\}$) while $(\Dmc,\Pi)\not\cqamodels{x} \mn{FevEp}(a)@\{34\}$ (hence $(\Dmc,\Pi)\not\intmodels{x} \mn{FevEp}(a)@\{34\}$) for $x\in\{i,s\}$. Moreover, if we consider $\Pi'$ that extends $\Pi$ with $$Q(x)\leftarrow \mn{Fever}(x)\ \Umc_{(0,4)}(\mn{NoFever}(x)\ \Umc_{(0,4)}\mn{Fever}(x)),$$ $(\Dmc,\Pi')\bravemodels{p} Q(a)@\{14\}$ but $(\Dmc,\Pi')\not\bravemodels{x} Q(a)@\{14\}$ for $x\in\{i,s\}$. 
\end{example}

The case $y=s$ and $x\in \{p,i\}$ is shown by Example~\ref{ex:s-not-imply-x} for $\sem\in\{\cap,\text{CQA}\}$ and Example~\ref{ex:brave-s-not-imply-i} for $\sem=\text{brave}$.

\begin{example}\label{ex:s-not-imply-x}
Consider $\Dmc=\{P@[0,10],Q@\{5\}\}$ and $$\Pi=\{\bot\leftarrow P\land Q, \ \bot\leftarrow \alwaysf_{[0,10]}P\}.$$ It is easy to check that $\{Q@\{5\}\}$ is the only $s$-repair so that $(\Dmc,\Pi)\intmodels{s}Q@\{5\}$. 
However, $\{P@(0,10]\}$ is a $p$- and $i$-repair so  for $x\in \{p,i\}$, $(\Dmc,\Pi)\not\cqamodels{x}Q@\{5\}$. 
\end{example}

\begin{restatable}{example}{ExampleBraveSNotImplyI}\label{ex:brave-s-not-imply-i}
Consider $\Dmc_r$ and $\Pi_r$ from Example~\ref{ex:existence2}.
\begin{align*} 
\Dmc_r=\{&P@[0,\infty),Q@\{0\}\} \\
\Pi_r=\{&\bot\leftarrow Q\land \eventf_{(0,\infty)}\alwaysf_{(0,\infty)}P\}
\end{align*} 
Since $\{Q@\{0\}\}$ is an $s$-repair, $(\Dmc_r,\Pi_r)\bravemodels{s}Q@\{0\}$. 
However, for $x\in \{p,i\}$, one can show that the only $x$-repair is $\{P@[0,\infty)\}$. 
Hence $(\Dmc_r,\Pi_r)\not\bravemodels{x}Q@\{0\}$. 
\end{restatable}

Example~\ref{ex:i-not-imply-s} illustrates the case $y=i$ and $x=s$ for $\sem\in\{\cap,\text{CQA}\}$ and Example~\ref{ex:brave-i-not-imply-s}  shows this case for $\sem=\text{brave}$. 

\begin{example}\label{ex:i-not-imply-s}
In Example~\ref{ex:brave-s-not-imply-i}, the only $i$-repair is $\{P@[0,\infty)\}$ so $(\Dmc_r,\Pi_r)\intmodels{i}P@[0,\infty)$. 
However, $\{Q@\{0\}\}$ is an $s$-repair so $(\Dmc_r,\Pi_r)\not\cqamodels{s}P@[0,\infty)$. 
\end{example}

\begin{example}\label{ex:brave-i-not-imply-s}
Consider our running example and recall from Example~\ref{ex:running-sem} that $(\Dmc,\Pi)\bravemodels{i} \mn{FevEp}(a)@\{29\}$ while $(\Dmc,\Pi)\not\bravemodels{s} \mn{FevEp}(a)@\{29\}$.
\end{example}

Example~\ref{ex:i-not-imply-p} illustrates the case $y=i$ and $x=p$ for $\sem\in\{\cap,\text{CQA}\}$ and Example~\ref{ex:brave-i-s-not-imply-p} shows this case for $\sem=\text{brave}$.

\begin{example}\label{ex:i-not-imply-p}
Let $\Dmc=\{T@\{0\}, P@[0,4], Q@[0,4]\}$ and 
$$\Pi=\{\bot \leftarrow P\land Q, \ R\leftarrow P\ \Umc_{(0,4)}Q\ \Umc_{(0,4)}P, \ \bot \leftarrow R\land T\}.$$
The $i$-repairs are of the form $\{T@\{0\}, P@[0,t\rangle, Q@\langle t,4]\}$ or $\{T@\{0\}, Q@[0,t\rangle, P@\langle t,4]\}$ so $(\Dmc,\Pi)\intmodels{i} T@\{0\}$. 
However, $\Rmc=\{P@[0,1], Q@(1,3), P@[3,4]\}$ is a $p$-repair 
(note that $(\Rmc,\Pi)\models R@\{0\}$, so $\Rmc\cup\{T@\{0\}\}$ is $\Pi$-inconsistent). Hence $(\Dmc,\Pi)\not\cqamodels{p}T@\{0\}$. 
\end{example}

\begin{restatable}{example}{ExampleBraveISnotImplyP}\label{ex:brave-i-s-not-imply-p}
Consider $\Dmc=\{P@[0,\infty), Q@\{5\}\}$ and $$\Pi=\{\bot\leftarrow P\land Q, \ \bot\leftarrow Q\land \eventf_{[0,\infty)}\alwaysf_{[0,\infty)}P\}.$$ 
Since $\{P@[0,5),Q@\{5\}\}$ is an $i$-repair, $(\Dmc,\Pi)\bravemodels{i}Q@\{5\}$. 
However, one can show that the only $p$-repair is $\{P@[0,\infty)\}$. 
Hence $(\Dmc,\Pi)\not\bravemodels{p}Q@\{5\}$. 
\end{restatable}


\section{Data Complexity Analysis}\label{sec:datacomplexity}
We explore the computational properties of our inconsistency handling framework. 
Specifically, we analyze the data complexity of 
recognizing $x$-conflicts and $x$-repairs, generating a single $x$-conflict or $x$-repair, and testing query entailment under the $x$-brave, $x$-CQA, and $x$-intersection semantics. 
For this initial study, we focus on cases where $x$-repairs are guaranteed to exist: (i) $x=s$, and (ii) bounded datasets over $\Zbb$. 

We recall that in DatalogMTL, consistency checking and query entailment are \pspace-complete \wrt data complexity \cite{DBLP:conf/ijcai/WalegaGKK19}, and \pspace-completeness holds for many fragments (such as core and linear) \cite{DBLP:conf/ijcai/WalegaGKK20} as well as for DatalogMTL over $\Zbb$  \cite{DBLP:conf/kr/WalegaGKK20}. We also consider some \emph{tractable fragments} for which these tasks can be performed in \ptime \wrt data complexity: 
\nonrecDatalog, \corediamond, and \lineardiamond (over $\Qbb$ or $\Zbb$) and propositional DatalogMTL over $\Zbb$ \cite{DBLP:journals/jair/BrandtKRXZ18,DBLP:conf/ijcai/WalegaGKK20,DBLP:conf/kr/WalegaGKK20}.

\emph{All results stated in this section are w.r.t.\ data complexity, i.e.\ the input size is the size of $\Dmc$. We assume a binary encoding of numbers, with rationals given as pairs of integers.
} 

\subsection{Results for $s$-Repairs and $s$-Conflicts}\label{sec:complexitysubset} 
We can obtain \pspace upper bounds for all tasks 
by adapting known procedures for reasoning with subset repairs and conflicts in the atemporal setting,  
cf.\ \cite{DBLP:conf/rweb/BienvenuB16}. 
Specifically, an $s$-repair or $s$-conflict can be generated by a greedy approach (add / delete facts one by one while 
preserving (in)consistency),
 and query entailment under the three semantics can be done via a `guess and check' approach.

\begin{restatable}{proposition}{PropPSPACESubsetGenChecking}\label{prop:PropPSPACESubsetGenChecking}
For arbitrary DatalogMTL programs $\Pi$, 
(i) the size of $\Bmc\in\sconflicts{\Dmc,\Pi}\cup\sreps{\Dmc,\Pi}$ is polynomially bounded in the size of $\Dmc$, 
(ii) it can be decided in \pspace\ 
whether $\Bmc\in\sconflicts{\Dmc,\Pi}$ or $\Bmc\in\sreps{\Dmc,\Pi}$, 
and (iii) a single $s$-conflict (resp.\ $s$-repair) can be generated in \pspace. 
Moreover, for $\sem\in\{\text{brave},\text{CQA},\cap\}$, 
query entailment under $s$-\sem is \pspace-complete. 
\end{restatable}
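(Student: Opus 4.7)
The plan is to reduce everything to standard DatalogMTL reasoning via Lemma~\ref{lem:normal-form-subset}(1): since $\Dmc$ is in normal form, $\Bmc \ssubseteq \Dmc$ collapses to $\Bmc \subseteq \Dmc$, so $s$-repairs (resp.\ $s$-conflicts) are just the classical maximal $\Pi$-consistent (resp.\ minimal $\Pi$-inconsistent) subsets of $\Dmc$. This immediately gives (i), as $|\Bmc| \leq |\Dmc|$ for any such $\Bmc$.

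For (ii), to verify $\Bmc \in \sreps{\Dmc,\Pi}$ I would check that $\Bmc$ is $\Pi$-consistent (one \pspace\ call to DatalogMTL consistency) and that $\Bmc\cup\{\alpha@\iota\}$ is $\Pi$-inconsistent for every $\alpha@\iota\in\Dmc\setminus\Bmc$ (at most $|\Dmc|$ further \pspace\ calls, executed sequentially to reuse space). The check for $\Bmc\in\sconflicts{\Dmc,\Pi}$ is symmetric: $\Bmc$ must be $\Pi$-inconsistent and each $\Bmc\setminus\{\alpha@\iota\}$ must be $\Pi$-consistent. For (iii), the classical greedy approach works: initialise $\Bmc = \emptyset$ (consistent because $\Pi$ is assumed consistent) and, processing the facts of $\Dmc$ one at a time, add each fact whenever the result is still $\Pi$-consistent; the outcome is a maximal $\Pi$-consistent subset, i.e., an $s$-repair. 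Symmetrically, when $\Dmc$ is $\Pi$-inconsistent, start from $\Bmc = \Dmc$ and greedily delete facts while preserving $\Pi$-inconsistency to obtain an $s$-conflict. Each procedure uses $|\Dmc|$ consistency tests and thus runs in \pspace.

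For query entailment, the $s$-brave case is handled by nondeterministically guessing $\Bmc \subseteq \Dmc$, verifying via (ii) that $\Bmc$ is an $s$-repair, and checking $(\Bmc,\Pi)\models q(\ans,\iota)$; since $\Bmc$ has polynomial size and all subchecks are in \pspace, the overall procedure lies in \pspace\ by Savitch's theorem. The $s$-CQA case is dual, guessing a repair that refutes the answer. For $s$-intersection I would appeal to Corollary~\ref{CorSubsetBased}, which identifies the `safe' facts $\Dmc \setminus \bigcup_{\Cmc}\Cmc$ with $\bigcap_{\Rmc}\Rmc$. A short argument relying on the fact that in a normal-form dataset at most one fact per predicate can contain any given point shows that $\bigsqcap_{\Rmc}\Rmc$ and $\bigcap_{\Rmc}\Rmc$ induce the same interpretation. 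A fact $\alpha@\iota \in \Dmc$ is unsafe iff there is a $\Pi$-consistent $\Bmc' \subseteq \Dmc\setminus\{\alpha@\iota\}$ with $\Bmc'\cup\{\alpha@\iota\}$ $\Pi$-inconsistent---a \pspace\ condition---so the safe facts can be computed in polynomial space and standard DatalogMTL entailment invoked on them.

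For the matching \pspace\ lower bound, when $\Dmc$ is already $\Pi$-consistent we have $\sreps{\Dmc,\Pi}=\{\Dmc\}$ and all three semantics collapse to classical entailment, which is \pspace-hard \wrt data complexity~\cite{DBLP:conf/ijcai/WalegaGKK19}. I expect the main obstacle to be the $s$-intersection case, where avoiding enumeration of the potentially exponentially many $s$-repairs requires both the syntactic identification via Corollary~\ref{CorSubsetBased} and the normal-form argument equating $\bigcap_{\Rmc}\Rmc$ with $\bigsqcap_{\Rmc}\Rmc$.
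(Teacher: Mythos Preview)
Your proof is correct and, for parts (i)--(iii) as well as the $s$-brave, $s$-CQA, and lower-bound arguments, essentially identical to the paper's.

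The one place you diverge is the $s$-intersection upper bound. The paper handles non-entailment under $s$-intersection by a single nondeterministic guess: it guesses a set $\{\alpha_1@\iota_1,\dots,\alpha_n@\iota_n\}\subseteq\Dmc$ to be removed together with witnessing repairs $\Bmc_1,\dots,\Bmc_n$ (with $\alpha_i@\iota_i\notin\Bmc_i$), verifies each $\Bmc_i$ is an $s$-repair, and checks that $(\Dmc\setminus\{\alpha_1@\iota_1,\dots,\alpha_n@\iota_n\},\Pi)\not\models q(\ans,\iota)$, concluding via $\npspace=\pspace$. You instead compute the intersection deterministically: invoking Corollary~\ref{CorSubsetBased}, you iterate over the facts of $\Dmc$ and, for each, test in \pspace\ whether it lies in some $s$-conflict via the characterization ``$\exists\,\Bmc'\subseteq\Dmc\setminus\{\alpha@\iota\}$ consistent with $\Bmc'\cup\{\alpha@\iota\}$ inconsistent''; you then evaluate the query on the surviving facts. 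Both routes are sound. Yours has the advantage of making explicit the passage from the pointwise intersection $\bigsqcap_\Rmc\Rmc$ (used in the semantics) to the set-theoretic $\bigcap_\Rmc\Rmc$, which the paper's procedure also relies on but leaves implicit; the paper's route is a bit more uniform with the brave and CQA cases and avoids the per-fact iteration.
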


If we consider tractable DatalogMTL fragments, 
we obtain better bounds for the recognition and generation tasks: 
\begin{restatable}{proposition}{PropPTIMESubsetSizeGen}\label{PropPTIMESubsetSizeGen}
For tractable DatalogMTL fragments, the tasks of testing whether $\Bmc\in\sconflicts{\Dmc,\Pi}$ (resp.\ $\Bmc\in\sreps{\Dmc,\Pi}$) and generating a single $s$-conflict (resp.\ $s$-repair) can be done in \ptime.  
\end{restatable}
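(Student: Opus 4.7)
The plan is to revisit the procedures underlying Proposition~\ref{prop:PropPSPACESubsetGenChecking} and observe that their only non-trivial subroutines are $\Pi$-consistency tests on subsets of $\Dmc$, which can be carried out in \ptime\ for each of the tractable fragments cited (\nonrecDatalog, \corediamond, \lineardiamond, and propositional DatalogMTL over $\Zbb$). Since Proposition~\ref{prop:PropPSPACESubsetGenChecking}(i) ensures that every $\Bmc\in\sconflicts{\Dmc,\Pi}\cup\sreps{\Dmc,\Pi}$ has size polynomial in $|\Dmc|$, each of these consistency tests has input polynomial in $|\Dmc|$, so each individual test runs in \ptime\ \wrt data complexity.

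For recognition, I would proceed as follows. To test $\Bmc\in\sconflicts{\Dmc,\Pi}$, first verify $\Bmc\subseteq\Dmc$ (hence $\Bmc$ inherits normal form from $\Dmc$), then check $\Pi$-inconsistency of $\Bmc$ in \ptime, and finally check \emph{minimality} by testing $\Pi$-consistency of $\Bmc\setminus\{f\}$ for every $f\in\Bmc$; there are at most $|\Bmc|\leq|\Dmc|$ such tests, each in \ptime. Similarly, to test $\Bmc\in\sreps{\Dmc,\Pi}$, verify $\Bmc\subseteq\Dmc$ (inheriting normal form), check $\Pi$-consistency of $\Bmc$, and check \emph{maximality} by testing $\Pi$-inconsistency of $\Bmc\cup\{f\}$ for every $f\in\Dmc\setminus\Bmc$; again at most $|\Dmc|$ tests are needed.

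For generation, I would adapt the standard greedy constructions. To produce a single $s$-repair, start from $\Rmc:=\emptyset$, enumerate the facts of $\Dmc$ in some fixed order, and add each fact to $\Rmc$ whenever doing so preserves $\Pi$-consistency; by an immediate induction the final $\Rmc$ is $\Pi$-consistent and maximal among $\Pi$-consistent subsets of $\Dmc$ (so it is an $s$-repair, as normal form is again inherited from $\Dmc$). Symmetrically, to produce a single $s$-conflict, start from $\Cmc:=\Dmc$ (which is $\Pi$-inconsistent as otherwise there is nothing to generate), and for each $f\in\Dmc$ remove $f$ from $\Cmc$ whenever $\Cmc\setminus\{f\}$ remains $\Pi$-inconsistent. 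In both cases one performs $|\Dmc|$ consistency tests, each polynomial by assumption, yielding an overall \ptime\ procedure.

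The only subtlety — and the main thing to justify carefully rather than a real obstacle — is that the \ptime\ data complexity bounds for $\Pi$-consistency cited in Section~\ref{sec:datacomplexity} apply when the ``data'' is an arbitrary subset of the original dataset (including the augmented subset $\Bmc\cup\{f\}$ in the repair-maximality check, and the greedily grown intermediate sets). This is immediate, since such sets are themselves DatalogMTL datasets of size at most $|\Dmc|+1$ over the same program, so the cited tractability results apply directly.
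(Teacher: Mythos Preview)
Your proposal is correct and follows essentially the same approach as the paper: both argue that the recognition and greedy generation procedures from Proposition~\ref{prop:PropPSPACESubsetGenChecking} require only polynomially many consistency checks, each of which runs in \ptime\ for the tractable fragments. You spell out the procedures in more detail than the paper does (the paper simply refers back to the previous proof), but the underlying argument is identical.
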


We can use the \ptime\ upper bounds on recognizing $s$-repairs to obtain 
(co)\np upper bounds for query entailment in tractable DatalogMTL fragments. 
Moreover, for specific fragments, we can show these bounds are tight. 
\begin{restatable}{proposition}{PropTractableFragsSubsetNP}\label{prop:PropTractableFragsSubsetNP}
For tractable DatalogMTL fragments: query entailment\footnotemark under s-brave (resp.~s-CQA, s-intersection) semantics is in \np (resp.~\conp). 
Matching lower bounds hold in \nonrecDatalog\ and \lineardiamond (and in \corediamond in the case of s-CQA). The lower bounds hold even for bounded datasets and $\Tbb=\Zbb$.  
\end{restatable}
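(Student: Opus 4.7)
\mypar{Proof proposal}
For the upper bounds, my plan is a guess-and-check strategy exploiting two earlier results: $s$-repairs have polynomial size in $|\Dmc|$ (Proposition~\ref{prop:PropPSPACESubsetGenChecking}), and, in tractable fragments, both query entailment and recognition of $s$-repairs and $s$-conflicts are in \ptime\ (Proposition~\ref{PropPTIMESubsetSizeGen}). For $s$-brave, guess $\Rmc\subseteq\Dmc$ and verify in \ptime\ that $\Rmc\in\sreps{\Dmc,\Pi}$ and $(\Rmc,\Pi)\models q(\ans,\iota)$. Dually, \conp\ membership for $s$-CQA follows by guessing an $s$-repair $\Rmc$ with $(\Rmc,\Pi)\not\models q(\ans,\iota)$.

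The $s$-intersection case is more delicate. By Corollary~\ref{CorSubsetBased}, $\bigcap_{\Rmc\in\sreps{\Dmc,\Pi}}\Rmc = \Dmc\setminus\bigcup_{\Cmc\in\sconflicts{\Dmc,\Pi}}\Cmc$, and since $\Dmc$ is in normal form this set has the same models as $\Imc=\bigsqcap_{\Rmc\in\sreps{\Dmc,\Pi}}\Rmc$, so I may treat $\Imc$ as the finite set of facts of $\Dmc$ lying in no $s$-conflict. For \conp\ membership, I would certify the complement $(\Imc,\Pi)\not\models q$ by guessing $\Imc^*\subseteq\Dmc$ together with, for each $\alpha@\iota'\in\Dmc\setminus\Imc^*$, a polynomially sized $s$-conflict $\Cmc_{\alpha@\iota'}\ni\alpha@\iota'$, and then verifying in \ptime\ that each $\Cmc_{\alpha@\iota'}$ is an $s$-conflict and that $(\Imc^*,\Pi)\not\models q(\ans,\iota)$. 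Correctness rests on the equivalence: $(\Imc,\Pi)\not\models q$ iff there exists $\Imc^*\subseteq\Dmc$ with $\Imc\subseteq\Imc^*$ and $(\Imc^*,\Pi)\not\models q$. The forward direction sets $\Imc^*=\Imc$; for the reverse, $\Imc\subseteq\Imc^*$ turns any counter-model of $\Imc^*\cup\Pi$ into a counter-model of $\Imc\cup\Pi$. The conflict certificates enforce $\Imc\subseteq\Imc^*$ because they show each dropped fact belongs to some $s$-conflict, hence to $\Dmc\setminus\Imc$.

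For the matching lower bounds, I would adapt the standard reductions from propositional (un)satisfiability used for inconsistency-tolerant query answering over atemporal ontology languages such as \dllite. \np-hardness of $s$-brave in \nonrecDatalog\ and \lineardiamond\ is obtained by reducing from 3SAT: introduce two mutually conflicting facts per variable so that $s$-repairs correspond to truth assignments, and design rules that derive a goal atom $G$ exactly when the assignment satisfies every clause; the query asks for $G$. \conp-hardness of $s$-CQA in the same fragments and additionally in \corediamond\ follows by reducing dually from the complement: rules are arranged so that $G$ holds in every $s$-repair iff the formula is unsatisfiable. For $s$-intersection in \nonrecDatalog\ and \lineardiamond, a similar reduction can be tailored so that $G$ is derivable only from facts lying in every $s$-repair; then $s$-CQA and $s$-intersection entailment of $G$ coincide on these instances and the \conp\ lower bound transfers. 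All gadgets use only punctual facts or short bounded-integer intervals, so the lower bounds persist for bounded datasets over $\Tbb=\Zbb$.

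The trickiest step is the \conp\ upper bound for $s$-intersection: computing $\Imc$ through $|\Dmc|$ independent \conp\ membership tests would only yield $\Delta^{p}_{2}$, so collapsing to \conp\ hinges on the single-shot guess-and-verify certificate made possible by Corollary~\ref{CorSubsetBased}. On the lower-bound side, the most delicate reduction is $s$-CQA in \corediamond: the restriction to single-literal bodies with only $\eventp$ forces information to propagate through small temporal shifts, so the gadget must encode clause satisfaction via chains of $\eventp$-rules over bounded integer intervals while aligning the conflict structure with variable assignments.
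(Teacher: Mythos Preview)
\mypar{Upper bounds} Your argument is correct and follows the same guess-and-check template as the paper. The only difference is in the $s$-intersection case: the paper guesses, for each fact to be removed, an $s$-\emph{repair} omitting it, whereas you guess an $s$-\emph{conflict} containing it. Both certificates are polynomial and verifiable in \ptime, and both are justified by Corollary~\ref{CorSubsetBased}; your version is the natural dual of the paper's.

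\mypar{Lower bounds} Here your plan stays at too high a level, and the $s$-intersection sketch does not match a construction that clearly works. The standard trick, which the paper uses, is not to make $s$-CQA and $s$-intersection coincide on the query, but to take the $s$-brave reduction, add a fresh fact $G'$ and a rule $\bot\gets G\wedge G'$; then $G'$ survives in every $s$-repair (hence lies in the intersection) iff no $s$-repair entails $G$, i.e.\ iff the formula is unsatisfiable. Your phrasing ``$G$ derivable only from facts lying in every $s$-repair'' does not obviously yield a reduction from UNSAT. You should also be aware that \nonrecDatalog\ and \lineardiamond\ require \emph{different} $s$-brave reductions: the paper's \nonrecDatalog\ reduction uses $\Umc$ to check that every clause is satisfied, which is unavailable in \lineardiamond; there one instead propagates an auxiliary predicate recursively via $\eventp_{\{2\}}$. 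Finally, your remark that the $s$-CQA lower bound in \corediamond\ needs ``chains of $\eventp$-rules over bounded integer intervals'' overstates the difficulty: a single $\eventp_{[0,\infty)}$ per predicate, combined with binary $\bot$-rules, already gives a \corediamond\ program, and the encoding is direct rather than via bounded chains.
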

\begin{proof}[Proof sketch]
To illustrate, we provide the reduction from SAT used to show $\np$-hardness of 
$s$-brave semantics in \nonrecDatalog. 
Given a CNF $\varphi = c_1\land...\land c_m$ over variables $v_1, ..., v_n$,  consider the 
\nonrecDatalog\ 
program and dataset:
\begin{align*}
\Pi'  =  \{&N'(v)\leftarrow \eventp_{[0,\infty)}N(v),\  N'(v)\leftarrow \eventf_{[0,\infty)}N(v),
\\&
 \bot\leftarrow P(v)\land N'(v),\ Q'\leftarrow S\ \mathcal{U}_{(0,\infty)} M,
 \\&
 S\leftarrow\eventp_{[0,2)} P(v),\ S\leftarrow\eventp_{[0,2)} N(v)\}\\
\Dmc'  =  \{&P(v_j)@\{2k\}\ |\ v_j\in c_k\}\ \cup\ \{N(v_j)@\{2k\}\ |\ \lnot v_j\in c_k\}\\&
 \cup\ \{M@\{2m+2\}\}
\end{align*}
Then
 $\varphi$ is satisfiable iff $(\Dmc',\Pi')\bravemodels{s} Q'@\{2\}$. 
\end{proof}

The hardness results for  \nonrecDatalog\ are somewhat surprising in view of the 
$\aczero$ data complexity and FO$<$-rewritability of query entailment in \nonrecDatalog \cite{DBLP:journals/jair/BrandtKRXZ18},
as a result from \cite{DBLP:conf/ijcai/BienvenuR13} shows how to transfer FO-rewritability results from classical to 
brave and intersection semantics. However, the latter result 
relies upon the fact that in the considered setting of atemporal ontologies, the existence of a rewriting guarantees a data-independent 
bound on the size of minimal inconsistent subsets and minimal consistent query-entailing subsets.
As the preceding reduction shows, such a property fails to hold in \nonrecDatalog (observe that 
the minimal consistent query-entailing subsets in $\Dmc'$ have size $m+1$). 

In \corediamond, by contrast, 
Walega et al.\ \shortcite{DBLP:conf/ijcai/WalegaGKK20,DBLP:conf/kr/WalegaGKK20}
have shown that every minimal $\Pi$-inconsistent subset contains at most two facts,
and query entailment can be traced back to a single fact. 
This is the key to our next result: 

\footnotetext{Restricted to queries with punctual intervals for \lineardiamond\ and \corediamond: \citeauthor{DBLP:conf/ijcai/WalegaGKK20}~\shortcite{DBLP:conf/ijcai/WalegaGKK20} give results for consistency checking, and reductions from query entailment to consistency checking for non-punctual queries use constructs not available in these two fragments, cf.\ discussion in \cite{DBLP:conf/ijcai/WalegaGKK20}.}

\begin{restatable}{proposition}{PropSubsetCoreDiamond}\label{prop:PropSubsetCoreDiamond}
\corediamond query entailment$^1$ under s-brave and s-intersection semantics is in \ptime. 
\end{restatable}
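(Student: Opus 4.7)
The plan is to exploit two known structural properties of \corediamond established in \cite{DBLP:conf/ijcai/WalegaGKK20,DBLP:conf/kr/WalegaGKK20}: (a) every minimal $\Pi$-inconsistent subset of a dataset contains at most two facts; and (b) whenever $(\Dmc,\Pi)\models q(\ans,\iota)$, there is already a \emph{single} fact $\alpha@\iota'\in\Dmc$ with $(\{\alpha@\iota'\},\Pi)\models q(\ans,\iota)$. Property (a) implies in particular that any $\Pi$-consistent subset of $\Dmc$ extends greedily to an $s$-repair, as a two-fact obstruction would have to show up immediately.

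For $s$-brave semantics, I will show that $(\Dmc,\Pi)\bravemodels{s} q(\ans,\iota)$ iff there exists $\alpha@\iota'\in\Dmc$ such that (i) $\{\alpha@\iota'\}$ is $\Pi$-consistent and (ii) $(\{\alpha@\iota'\},\Pi)\models q(\ans,\iota)$. The forward direction applies (b) to an $s$-repair that witnesses brave entailment, noting that any subset of an $s$-repair is itself $\Pi$-consistent; the backward direction extends such an $\alpha@\iota'$ to an $s$-repair using (a) and invokes monotonicity of classical entailment. Since both (i) and (ii) are testable in \ptime per candidate fact (the data complexities of consistency checking and of query entailment in \corediamond are polynomial), iterating over $\alpha@\iota'\in\Dmc$ yields a \ptime algorithm.

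For $s$-intersection semantics, I will first argue that the pointwise intersection $\Imc:=\bigsqcap_{\Rmc\in\sreps{\Dmc,\Pi}}\Rmc$ and the set-theoretic intersection $\Imc':=\bigcap_{\Rmc\in\sreps{\Dmc,\Pi}}\Rmc$ entail the same punctual facts, and hence the same queries. This relies on $\Dmc$ being in normal form: for each atom $\alpha$ and timepoint $t$ there is at most one $\alpha@\iota''\in\Dmc$ with $t\in\iota''$, so any interval fact contributing $\alpha@\{t\}$ to every $\Rmc$ must coincide across all $\Rmc$ and therefore belong to $\Imc'$. By Corollary~\ref{CorSubsetBased}, $\Imc'=\Dmc\setminus\bigcup_{\Cmc\in\sconflicts{\Dmc,\Pi}}\Cmc$. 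Using (a), a fact $\alpha@\iota'\in\Dmc$ lies in some $s$-conflict iff there exists $\beta@\iota''\in\Dmc$ (possibly equal to $\alpha@\iota'$) such that $\{\alpha@\iota',\beta@\iota''\}$ is $\Pi$-inconsistent, which is decidable in \ptime. So $\Imc'$ is computable in polynomial time, and a final polynomial-time query entailment check on $(\Imc',\Pi)$ completes the procedure.

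The main obstacle is the careful equivalence between the pointwise intersection $\Imc$, which may be infinite, and its finite set-theoretic counterpart $\Imc'$; this is what lets us sidestep an infinite representation and appeal directly to \corediamond's polynomial query entailment. The remainder of the argument is a clean combination of properties (a) and (b) with Corollary~\ref{CorSubsetBased}.
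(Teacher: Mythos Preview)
Your approach is essentially the same as the paper's: both exploit the two structural properties of \corediamond (binary conflicts and single-fact query support) together with Corollary~\ref{CorSubsetBased}. Your treatment of $s$-intersection is in fact organized slightly more cleanly than the paper's: you compute $\Imc'=\bigcap_{\Rmc}\Rmc$ explicitly (via the conflict characterization) and perform one entailment check, whereas the paper iterates over single candidate facts in $\Imc'$ using property~(b) again. Both are equivalent and polynomial.

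There is one small gap in your $s$-brave characterization. Property~(b) as stated in \cite{DBLP:conf/ijcai/WalegaGKK20,DBLP:conf/kr/WalegaGKK20} says that $(\Dmc,\Pi)\models\beta@\{t\}$ implies \emph{either} $\Pi\models\beta@\{t\}$ \emph{or} a single fact suffices. Your biconditional ``$(\Dmc,\Pi)\bravemodels{s} q(\ans,\iota)$ iff there exists $\alpha@\iota'\in\Dmc$ \ldots'' omits the first disjunct, and so fails when the unique $s$-repair is $\emptyset$ (i.e.\ $\Dmc=\emptyset$, or every singleton $\{\alpha@\iota'\}\subseteq\Dmc$ is $\Pi$-inconsistent) yet $\Pi\models q(\ans,\iota)$. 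The paper's procedure explicitly includes the check ``$\Pi\models\beta@\{t\}$'' as a separate case; you should add it too. (Your intersection argument is unaffected, since the final entailment test on $(\Imc',\Pi)$ already subsumes this case by monotonicity.) As a minor aside, the greedy extension of a $\Pi$-consistent subset to an $s$-repair does not actually require property~(a); it holds for arbitrary programs.
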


For propositional DatalogMTL, we even get tractability for $s$-CQA semantics -- 
notable in view of the notorious intractability of CQA semantics 
even in restricted atemporal settings. The proof relies upon rather intricate
automata constructions, which build upon and significantly extend those given in 
\cite{DBLP:conf/kr/WalegaGKK20} for consistency checking. 

\begin{restatable}{proposition}{PropSubsetPropFrag}\label{prop:PropSubsetPropFrag}
When $\Tbb=\Zbb$, propositional DatalogMTL query entailment under s-brave, s-CQA, and s-intersection semantics is in \ptime (more precisely, \textsc{NC1}-complete). 
\end{restatable}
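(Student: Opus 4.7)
The plan is to build on the automata-based technique of \citeauthor{DBLP:conf/kr/WalegaGKK20}~\shortcite{DBLP:conf/kr/WalegaGKK20} for propositional DatalogMTL consistency over $\Zbb$. Recall that in that setting every program $\Pi$ admits a logspace-constructible deterministic automaton $\autpi$, operating on an ultimately periodic encoding of the integer timeline, whose accepted words correspond to models of $\Pi$, and a fact-enforcing automaton can be intersected in with it to test $\Pi$-consistency of $(\Dmc,\Pi)$. Emptiness of the resulting product automaton is decidable in \textsc{NC1}, yielding the consistency bound. The first step is to lift this construction so that a run does not just encode an interpretation, but also a \emph{choice} of subset $\Rmc \subseteq \Dmc$.

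Concretely, I would enlarge the input alphabet at each timepoint to carry, in addition to the type information already used by $\autpi$, one bit per fact of $\Dmc$ indicating whether that fact is retained in $\Rmc$; the number of facts is polynomial in $|\Dmc|$, so the alphabet blows up only polynomially. On top of $\autpi$, I would then construct a consistency component that forces the interpretation to satisfy every retained fact, yielding an automaton $\autsubset$ whose accepted words encode pairs $(\Rmc,\mathfrak{M})$ with $\Rmc \subseteq \Dmc$ $\Pi$-consistent and $\mathfrak{M}\models(\Rmc,\Pi)$. The main technical step is then to enforce \emph{maximality}: I would add, for each fact $\alpha@\iota \in \Dmc$, a tracker component that checks the implication ``if the bit for $\alpha@\iota$ is 0 everywhere, then along this word there is no extension of $\mathfrak{M}$ that also satisfies $\alpha@\iota$''. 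Because propositional DatalogMTL over $\Zbb$ has ultimately periodic canonical models of polynomially bounded period, maximality reduces to the existence of a witness timepoint in $\iota$ at which adding $\alpha$ collides with a rule body; this can be represented as a polynomial-size disjunction of local constraints. Running these trackers in parallel gives an automaton $\autrep$ of polynomial size whose accepted words are exactly the pairs $(\Rmc,\mathfrak{M})$ with $\Rmc \in \sreps{\Dmc,\Pi}$ and $\mathfrak{M}\models(\Rmc,\Pi)$.

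From $\autrep$ the three semantics can be handled uniformly. For $s$-brave, I would intersect $\autrep$ with the query-recognizing automaton $\autpiq$ (already available from the consistency framework of Walega et al., extended with the query atom) and test non-emptiness. For $s$-CQA, I would build the automaton $\autrep \cap \overline{\autpiq}$ (the query complement is taken on an already-deterministic automaton, so no exponential blowup) and check emptiness: entailment holds iff no repair disagrees with the query. For $s$-intersection, I would use Corollary~\ref{CorSubsetBased}, which identifies $\bigsqcap_{\Rmc\in\sreps{\Dmc,\Pi}}\Rmc$ with $\Dmc$ minus the union of $s$-conflicts; by Proposition~\ref{prop:PropSubsetBased} this is $\Pi$-consistent, so one first filters $\Dmc$ fact-by-fact using a CQA-style check at the timepoint level (a fact $\alpha@\{t\}$ survives iff it lies in every repair), then runs the classical query-entailment automaton on the surviving set. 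Each of these automata has polynomial size, and their emptiness is reducible to a regular-language membership problem on an ultimately periodic word, which is in \textsc{NC1}; tight \textsc{NC1}-hardness is inherited directly from consistency checking in the same fragment.

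The main obstacle is the maximality constraint in $\autrep$. Naively, checking that every omitted fact creates a conflict when reinserted seems to require a universal quantifier over $\Dmc \setminus \Rmc$ nested inside an existential one over interpretations, which could force complementation of an exponential automaton. The key to avoiding this is the observation, specific to propositional DatalogMTL over $\Zbb$, that an $s$-conflict involving $\alpha@\iota$ is certified by a bounded local pattern on the timeline (because conflicts are minimal inconsistent sets of facts whose joint effect on the canonical model is punctual in the periodic encoding). This lets maximality be expressed as a conjunction of polynomially many local properties, one per fact of $\Dmc$, each checkable by a small deterministic component that can be combined in a product without leaving $\textsc{NC1}$. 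Once this lemma is in place, the three semantics follow uniformly as sketched above.
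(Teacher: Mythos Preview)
Your high-level plan---encode a candidate subset $\Rmc$ alongside the interpretation and use automata over the Walega et al.\ framework---is the right starting point, and indeed is what the paper does. However, there is a genuine gap in your treatment of \emph{maximality}, and a related architectural difference that matters for the $\textsc{NC1}$ bound.

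\textbf{The maximality argument does not go through as written.} To certify that $\Rmc$ is an $s$-repair you must show, for every omitted fact $\alpha@\iota\in\Dmc\setminus\Rmc$, that $\Rmc\cup\{\alpha@\iota\}$ is $\Pi$-inconsistent, i.e.\ that \emph{no} model of $(\Rmc\cup\{\alpha@\iota\},\Pi)$ exists. Your tracker, however, runs along a \emph{particular} model $\mathfrak{M}$ of $(\Rmc,\Pi)$ and checks whether $\mathfrak{M}$ itself can be extended to satisfy $\alpha@\iota$. These are different questions: $\Rmc\cup\{\alpha@\iota\}$ may well be $\Pi$-consistent via some model $\mathfrak{M}'$ that is not an extension of $\mathfrak{M}$, so your automaton would wrongly accept $(\Rmc,\mathfrak{M})$ as a repair. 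The claim that inconsistency of $\Rmc\cup\{\alpha@\iota\}$ is witnessed by a ``bounded local pattern'' on the timeline is not justified---derivations of $\bot$ in DatalogMTL can chain through many rule applications spread across the timeline, and the periodicity of canonical models does not by itself localize the \emph{reason} for inconsistency to a single timepoint in~$\iota$.

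\textbf{How the paper handles this, and why it matters for $\textsc{NC1}$.} The paper never attempts to check maximality ``along the run''. Instead, all automata are constructed to depend \emph{only on $\Pi$}, not on $\Dmc$: the input word over an enriched alphabet encodes $\Dmc$ together with one (or several) candidate subsets, and the automaton checks syntactic inclusion and (in)consistency by reading the word. Maximality is then expressed over an alphabet $\overline{\Sigma^3}$ carrying a third candidate $\Dmc''$: one builds an NFA $\autbetter$ accepting words where $\Dmc'\subsetneq\Dmc''\subseteq\Dmc$ and $\Dmc''$ is $\Pi$-consistent, projects away the $\Dmc''$-component, and \emph{complements}. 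Because $\autbetter$ is $\Dmc$-independent, its (possibly exponential) complement is of \emph{constant} size in data complexity, and the final test is membership of a polynomial-size word in a fixed NFA---hence $\textsc{NC1}$. Your automata, by contrast, carry one bit and one tracker per fact of $\Dmc$, so their size grows with $|\Dmc|$; even setting the maximality issue aside, membership in a polynomial-size NFA is not obviously in $\textsc{NC1}$, and your remark that ``the query complement is taken on an already-deterministic automaton'' is unsupported (the base automata are NFAs). The paper's $s$-intersection construction similarly avoids computing the repair intersection explicitly and instead uses a $\overline{\Sigma^4}$-word with further projection/complementation layers.
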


\subsection{Results for Bounded-Interval Datasets over $\Zbb$} \label{sec:boundedz}
We start by considering interval-based notions and 
observe that even if the binary encoding of endpoint integers leads to exponentially many 
choices for which sub-interval to retain for a given input fact,    
$i$-conflicts and $i$-repairs are of polynomial size 
and can be effectively recognized and generated. 
This allows us to establish the same general upper bounds for $x=i$ as we obtained for $x=s$. 

\begin{restatable}{proposition}{PropPolySizeIntervalBZ}\label{prop:PropPolySizeIntervalBZ}
When $\Tbb=\Zbb$ and only bounded-interval datasets are considered, 
the results stated in Proposition~\ref{prop:PropPSPACESubsetGenChecking} for the case $x=s$ hold in the case $x=i$. 
\end{restatable}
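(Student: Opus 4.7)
The plan is to lift all four parts of Proposition~\ref{prop:PropPSPACESubsetGenChecking} from $x=s$ to $x=i$ under the bounded-interval restriction on $\Zbb$. The starting observation is that, by Lemma~\ref{lem:normal-form-subset}, every $\Bmc \in \ireps{\Dmc,\Pi} \cup \iconflicts{\Dmc,\Pi}$ has at most $|\Dmc|$ facts; since $\Bmc \isubseteq \Dmc$ and the intervals of $\Dmc$ are bounded, every interval endpoint in $\Bmc$ is an integer in the finite range of $\Dmc$'s endpoints and is encoded in polynomially many bits. This yields the polynomial-size bound for item~(i), and in particular guarantees that candidate $i$-subsets can be manipulated in polynomial space.

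For recognition (item~(ii)) I would mimic the \pspace\ algorithm used for $s$-repairs: $\Bmc \isubseteq \Dmc$ is checked syntactically, $\Pi$-(in)consistency of $\Bmc$ is in \pspace\ by the standard DatalogMTL bound, and minimality (for conflicts) or maximality (for repairs) is a universal statement over polynomial-size candidate $\isubseteq$-subsets/supersets, each requiring an inner \pspace\ consistency check, so the whole procedure stays in \pspace\ by Savitch's theorem. For generation (item~(iii)), I would use a greedy procedure starting from $\Cmc := \Dmc$: iterate over each fact $\alpha@\iota \in \Cmc$, first trying to delete it while preserving $\Pi$-inconsistency, and, if deletion fails, using binary search on each endpoint to shrink $\iota$ to a minimal sub-interval that keeps $\Cmc$ $\Pi$-inconsistent. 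Monotonicity of $\Pi$-inconsistency with respect to $\psubseteq$ makes the binary search well-defined and bounds each endpoint adjustment to logarithmically many \pspace\ consistency calls; once no single-fact modification preserves inconsistency, monotonicity also guarantees that the result is $\isubseteq$-minimal, hence an $i$-conflict. An $i$-repair is obtained dually, by first pruning $\Dmc$ to consistency and then re-expanding facts endpoint-by-endpoint until no further expansion preserves consistency. For query entailment (item~(iv)), the \pspace\ upper bounds for $i$-brave and $i$-CQA follow by the standard guess-and-verify paradigm combining item~(ii) with the \pspace\ entailment oracle, while \pspace-hardness transfers directly from classical DatalogMTL entailment: on $\Pi$-consistent inputs the unique $i$-repair is $\Dmc$ itself, so every repair-based semantics coincides with classical entailment.

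The main obstacle is the $i$-intersection semantics, where $\Imc = \bigsqcap_{\Rmc \in \ireps{\Dmc,\Pi}} \Rmc$ is defined implicitly over a family of $i$-repairs that may be exponential in number. I expect to handle this by computing membership of punctual facts in $\Imc$ on the fly through the \pspace\ test ``for every $i$-repair $\Rmc$, $\Rmc \models \alpha@\{t\}$'' (a universal quantifier over polynomial-size $i$-repairs with a \pspace\ inner check), and running the standard \pspace\ DatalogMTL entailment procedure for $(\Imc,\Pi) \models q(\ans,\iota)$ while consulting this oracle whenever the procedure needs to know whether a punctual fact lies in $\Imc$. Care will be needed to ensure that the procedure only queries polynomially-represented punctual facts during its course, so that the overall computation stays in \pspace.
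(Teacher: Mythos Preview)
Your treatment of item~(i), item~(ii), conflict generation in item~(iii), and the $i$-brave and $i$-CQA upper bounds is essentially the paper's. One minor omission in your $i$-repair generation: when a fact $\alpha@[t_1,t_2]\in\Dmc$ is currently absent from the partial repair, locating \emph{some} $t^*\in[t_1,t_2]$ with $\Rmc\cup\{\alpha@\{t^*\}\}$ $\Pi$-consistent (a starting point from which to expand) is not amenable to binary search, because consistency of that singleton addition is not monotone in $t^*$. The paper handles this by iterating over all (exponentially many) candidates $t^*$ in polynomial space; your ``endpoint-by-endpoint'' expansion presupposes such a starting point without saying how to obtain it.

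The substantive gap is the $i$-intersection case. Running the \pspace\ entailment procedure ``while consulting an oracle'' for punctual membership in $\Imc$ is not a well-posed \pspace\ algorithm: the data-complexity bound is stated for an \emph{explicit} dataset, and $\Imc$ spelled out as punctual facts may have exponential size. The key structural observation you are missing is that for each $\alpha@\iota\in\Dmc$, the set $\{t\in\iota \mid \Rmc\models\alpha@\{t\}\text{ for every }\Rmc\in\ireps{\Dmc,\Pi}\}$ is an intersection of sub-intervals of $\iota$, hence itself a (possibly empty) interval; consequently $\Imc$ in normal form has at most $|\Dmc|$ facts and polynomial size. The paper exploits this directly: it shows that for each $\alpha@\iota\in\Dmc$ there exist at most two $i$-repairs $\Bmc_1,\Bmc_2$ such that $\Bmc_1\sqcap\Bmc_2$ already excludes every timepoint of $\iota$ absent from the full intersection (intuitively, one witnesses the left boundary and one the right). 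Guessing at most $2|\Dmc|$ repairs, verifying each in \pspace, computing their pointwise intersection (polynomial size), and testing non-entailment on that explicit dataset yields an $\npspace=\pspace$ procedure. Your oracle idea can be repaired by first computing the interval representation of $\Imc$ in \pspace\ (e.g.\ via binary search using the punctual-membership test and the interval structure), then feeding it explicitly to the entailment procedure---but both the interval observation and this reduction to an explicit input are absent from your plan as written.
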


We further show that when we consider tractable fragments, one can tractably recognize or generate an $i$-conflict, using binary search to identify optimal endpoints. 

\begin{restatable}{proposition}{PropGenerateIntervalBased}\label{prop:PropGenerateIntervalBased}
For tractable DatalogMTL fragments: when $\Tbb=\Zbb$ and only bounded-interval datasets are considered,  
it can be decided in \ptime 
whether $\Bmc\in\iconflicts{\Dmc,\Pi}$ and a single $i$-conflict can be generated in \ptime. 
\end{restatable}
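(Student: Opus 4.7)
The plan is to exploit that for tractable DatalogMTL fragments consistency checking is polynomial, and to combine this with a binary search over integer endpoints, which is feasible because endpoints are given in binary and intervals are bounded. Throughout, since $\Tbb=\Zbb$ and datasets contain only bounded intervals, we may write every interval canonically as $[a,b]$ with $a,b\in\Zbb$.

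For the recognition task, the plan is to check the four defining conditions directly: being in normal form and satisfying $\Bmc\isubseteq\Dmc$ are syntactic checks, and $\Pi$-inconsistency of $\Bmc$ is in \ptime by assumption. The remaining issue is minimality under $\isubsetneq$. Here, monotonicity of $\Pi$-inconsistency (supersets of inconsistent sets are inconsistent) lets us check it locally: for each $\alpha@[a,b]\in\Bmc$ we verify that (i) $\Bmc\setminus\{\alpha@[a,b]\}$ is $\Pi$-consistent, and (ii)--(iii) the sets obtained by replacing $\alpha@[a,b]$ with $\alpha@[a+1,b]$ or $\alpha@[a,b-1]$ (when non-empty) are $\Pi$-consistent. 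Any proper sub-interval of $[a,b]$ is contained in one of these two minimal shrinkings, so consistency of the corresponding replacement is inherited by monotonicity. This is a polynomial number of polynomial-time checks.

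For generation, the plan is a greedy procedure starting from $\Bmc := \Dmc$ (which is $\Pi$-inconsistent, otherwise no $i$-conflict exists). For each fact $\alpha@[a,b]\in\Dmc$ in some fixed order, first test whether $\Bmc\setminus\{\alpha@[a,b]\}$ is $\Pi$-inconsistent; if so, delete the fact. Otherwise, shrink $[a,b]$ to a minimal inconsistent sub-interval by a two-stage binary search: first find the largest $a^*\in[a,b]$ such that replacing $\alpha@[a,b]$ by $\alpha@[a^*,b]$ keeps $\Bmc$ inconsistent; then with $a^*$ fixed, find the smallest $b^*\in[a^*,b]$ such that replacing by $\alpha@[a^*,b^*]$ keeps $\Bmc$ inconsistent. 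Monotonicity ensures both search spaces are monotone, so binary search is valid, and each binary search performs $O(\log(b-a))$ consistency checks, polynomial in the binary encoding of the input. Normal form and the $\isubseteq$-relation with $\Dmc$ are preserved throughout, since we only delete or shrink existing facts.

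The main subtlety, i.e.\ the step where some care is needed, will be to argue that the two-stage output $[a^*,b^*]$ is indeed minimal under $\isubseteq$. For any $[a',b']\subsetneq[a^*,b^*]$, either $a'>a^*$, so $[a',b']\subseteq[a^*+1,b]$, which is $\Pi$-consistent by maximality of $a^*$; or $a'=a^*$ and $b'<b^*$, so $[a',b']\subseteq[a^*,b^*-1]$, which is $\Pi$-consistent by minimality of $b^*$. In either case monotonicity gives $\Pi$-consistency of the replacement with $[a',b']$. Combined with the deletion check performed first, this shows that no $\Bmc'\isubsetneq\Bmc$ is $\Pi$-inconsistent, so the returned $\Bmc$ lies in $\iconflicts{\Dmc,\Pi}$, and the entire procedure runs in \ptime.
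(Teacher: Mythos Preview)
Your proposal is correct and follows essentially the same approach as the paper: both verify membership in $\iconflicts{\Dmc,\Pi}$ via the syntactic checks plus per-fact consistency tests on the one-step shrinkings (and deletion in the punctual case), and both generate an $i$-conflict by greedily processing facts of $\Dmc$, attempting deletion first and otherwise performing the two-stage binary search (maximize the left endpoint, then minimize the right endpoint). Your explicit justification that any proper sub-interval of $[a^*,b^*]$ is contained in either $[a^*+1,b]$ or $[a^*,b^*-1]$, and hence inherits consistency by monotonicity, is a detail the paper leaves implicit but relies on in the same way.
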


The argument does not apply to $i$-repairs, and we leave open the precise complexity of $i$-repair recognition in this case (we only get a \conp upper bound). 
However, we can still obtain a tight complexity result for $i$-brave semantics since we do not need to get a complete $i$-repair in this case.

\begin{restatable}{proposition}{PropRepairSemIntervalBoundedZ}\label{prop:PropRepairSemIntervalBoundedZ}
For tractable DatalogMTL fragments: when $\Tbb=\Zbb$ and only bounded-interval datasets are considered, query entailment$^1$ under $i$-brave (resp.\ $i$-CQA, $i$-intersection) is in \np (resp.\ in $\piptwo$).  
Lower \np (resp.\ \conp) bounds hold for \nonrecDatalog\ and \lineardiamond (and for \corediamond in the case of $i$-CQA semantics). 
\end{restatable}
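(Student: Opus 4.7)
I would establish the upper bounds by guess-and-check algorithms that exploit the polynomial bounds of Proposition~\ref{prop:PropPolySizeIntervalBZ} together with the polynomial-time consistency checking and query entailment available in tractable DatalogMTL fragments. The lower bounds would be derived by a careful inspection of the reductions used in Proposition~\ref{prop:PropTractableFragsSubsetNP}, exploiting the fact that those reductions produce datasets of punctual facts, on which $i$- and $s$-repairs coincide.

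For the \np\ bound on $i$-brave, I would guess a polynomial-size $\Bmc$ in normal form with $\Bmc\isubseteq\Dmc$, verify $\Pi$-consistency and $(\Bmc,\Pi)\models q(\ans,\iota)$ in \ptime, and appeal to monotonicity of DatalogMTL entailment plus property $\Prop{1}$ (guaranteed by Proposition~\ref{prop:PropertiesBoundedZ}) to conclude that $\Bmc$ extends to an $i$-repair $\Rmc$ with $\Bmc\isubseteq\Rmc$ that still entails the query. For $i$-CQA, I would show the complement is in \sigmaptwo: guess a polynomial-size candidate $\Rmc\isubseteq\Dmc$, verify in \ptime\ that it is $\Pi$-consistent and does \emph{not} entail the query, and use a \conp\ oracle to verify maximality by ruling out the existence of a $\Pi$-consistent $\Rmc'$ with $\Rmc\isubsetneq\Rmc'\isubseteq\Dmc$ (the candidate $\Rmc'$ has polynomial size by Proposition~\ref{prop:PropPolySizeIntervalBZ}). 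For $i$-intersection, I would exploit the fact that, in tractable fragments, $(\Imc,\Pi)\models q(\ans,\iota)$ can be witnessed by a polynomial-size support $S$ of punctual atoms drawn from $\Imc$, where each membership $\alpha@\{t\}\in\Imc$ unfolds into a universal statement over $i$-repairs; a careful rearrangement of quantifiers together with the polynomial bounds on $i$-repairs places the problem in \piptwo.

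For the lower bounds, I would revisit the reductions of Proposition~\ref{prop:PropTractableFragsSubsetNP}, in particular the SAT reduction reproduced in its proof sketch for \nonrecDatalog\ (and the analogous reductions for \lineardiamond\ and \corediamond\ given in the appendix). These reductions produce datasets consisting entirely of punctual facts with bounded integer timestamps (e.g., $P(v_j)@\{2k\}$, $N(v_j)@\{2k\}$, $M@\{2m+2\}$). On such purely punctual datasets the relations $\isubseteq$ and $\ssubseteq$ coincide, so the set of $i$-repairs equals the set of $s$-repairs; the \np\ hardness of $s$-brave (resp.\ \conp\ hardness of $s$-CQA) therefore transfers verbatim to $i$-brave (resp.\ $i$-CQA), yielding the claimed lower bounds for \nonrecDatalog, \lineardiamond, and \corediamond.

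The main obstacle is the $i$-intersection upper bound. Unlike CQA, intersection semantics requires reasoning about the pointwise intersection $\Imc$, whose definition embeds an additional quantifier (``for every $i$-repair'') inside the maximality check characterizing $i$-repairs. The delicate step is to argue that, in the tractable fragments we consider, a derivation of $q(\ans,\iota)$ from $(\Imc,\Pi)$ can be supported by a polynomial-size set of atoms from $\Imc$, so that $i$-intersection entailment reduces to a conjunction of polynomially many atom-membership checks, each in \piptwo, yielding an overall \piptwo\ upper bound rather than the naive $\Pi^p_3$.
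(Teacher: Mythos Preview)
Your treatment of the lower bounds and of the $i$-brave and $i$-CQA upper bounds matches the paper's proof: the reductions from Proposition~\ref{prop:PropTractableFragsSubsetNP} use only punctual facts so $s$- and $i$-repairs coincide, $i$-brave is handled by guessing a consistent $\Bmc\isubseteq\Dmc$ and appealing to monotonicity, and $i$-CQA sits in $\piptwo$ because $i$-repair recognition is in \conp\ for tractable fragments.

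The genuine gap is your $i$-intersection argument. Your plan relies on the claim that, in tractable fragments, entailment of $q(\ans,\iota)$ from $\Imc$ is witnessed by a \emph{polynomial-size} set of punctual atoms from $\Imc$. This is not justified and is in fact the wrong intuition here: the paper explicitly observes (just after Proposition~\ref{prop:PropTractableFragsSubsetNP}) that in \nonrecDatalog\ minimal query-entailing subsets have no data-independent bound, which is precisely why FO-rewritability does not transfer. Even granting polynomial supports, your quantifier accounting is shaky: guessing a support $S$ and then, for each atom in $S$, checking membership in $\Imc$ (a $\piptwo$ predicate) gives an outer existential quantifier over $S$ on top of a $\piptwo$ body, which does not obviously collapse to $\piptwo$.

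The paper takes a different route that avoids supports entirely. It proves a structural lemma: for each fact $\alpha@\iota\in\Dmc$, \emph{at most two} $i$-repairs $\Bmc_1,\Bmc_2$ suffice so that every timepoint of $\iota$ absent from $\bigsqcap_{\Rmc}\Rmc$ is already absent from $\Bmc_1\sqcap\Bmc_2$ (this uses that each $i$-repair retains at most one sub-interval of $\iota$, and pairwise-intersecting intervals on $\Zbb$ have a common point). Hence the complement of $i$-intersection entailment is witnessed by at most $2|\Dmc|$ many $i$-repairs: guess them, verify each is an $i$-repair with an \np\ oracle, compute their pointwise intersection (polynomial size since intersecting finitely many interval-facts yields interval-facts), and check non-entailment in \ptime. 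This places the complement in $\sigmaptwo$ and the problem in $\piptwo$. You should replace your support-based sketch with this ``two repairs per fact'' lemma.
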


The situation for pointwise notions is starkly different:

\begin{restatable}{proposition}{PropExpsizePointwise}\label{prop:PropExpsizePointwise}
When $\Tbb=\Zbb$ and only bounded-interval datasets are considered, 
there exist $\Dmc$ and $\Pi$ such that every $\Bmc\in\pconflicts{\Dmc,\Pi}$ (resp.\ $\Bmc\in\preps{\Dmc,\Pi}$) is exponentially large w.r.t.\ the size of $\Dmc$.
\end{restatable}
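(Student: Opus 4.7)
The plan is to build, for each integer $n$, a polynomial-size bounded-interval instance $(\Dmc_n,\Pi_n)$ over $\Tbb=\Zbb$ such that every $p$-repair (resp.\ $p$-conflict) of $\Dmc_n$ w.r.t.\ $\Pi_n$ contains $\Omega(2^n)$ facts. For any $\Bmc\psubseteq\Dmc_n$ in normal form, the facts of $\Bmc$ are in bijection with the maximal $\Zbb$-runs of its pointwise $P$-coverage $S_\Bmc\subseteq[0,2^n]\cap\Zbb$, so counting facts reduces to lower-bounding the number of maximal runs of $S_\Bmc$, by combining constraints from $\Pi_n$-(in)consistency with constraints from $\psubseteq$-maximality/minimality.

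For the $p$-repair case, I would take $\Dmc_n=\{P@[0,2^n]\}$ and $\Pi_n=\{\bot\leftarrow\alwaysf_{[0,2]}P\}$. Unfolding the $\Zbb$-semantics, $\Pi_n$-inconsistency of $\Bmc\psubseteq\Dmc_n$ is equivalent to $S_\Bmc$ containing three consecutive integers, so every $p$-repair $\Rmc$ has runs of length $\leq 2$. The $\psubseteq$-maximality of $\Rmc$ requires that for every gap $t\in([0,2^n]\cap\Zbb)\setminus S_\Rmc$, adding $t$ creates a triple $\{s,s+1,s+2\}\subseteq S_\Rmc\cup\{t\}$, which forces one of $\{t-2,t-1\}$, $\{t-1,t+1\}$, $\{t+1,t+2\}$ to lie in $S_\Rmc$. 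A short case analysis shows that no interior position in a gap of length $\geq 3$ can satisfy any of these (each candidate ``neighbor'' lies inside the gap itself), so gaps also have length $\leq 2$. With runs and gaps both bounded by $2$, a fresh run must start within every four consecutive positions, yielding at least $\lceil(2^n+1)/4\rceil$ facts, exponential in $|\Dmc_n|=O(n)$.

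For the $p$-conflict case, I would take $\Dmc_n=\{Q@\{0\},P@[0,2^n]\}$ and $\Pi_n=\{\bot\leftarrow Q\land\alwaysf_{[0,2^n]}\eventf_{[0,2)}P\}$. Unfolding the semantics, $\Bmc\psubseteq\Dmc_n$ is $\Pi_n$-inconsistent iff $Q@\{0\}$ is pointwise in $\Bmc$ and $S_\Bmc$ \emph{hits} every pair $\{s,s+1\}$ for $s\in[0,2^n]\cap\Zbb$. Minimality of a $p$-conflict $\Cmc$ under $\psubseteq$ means every $t\in S_\Cmc$ is essential in this hitting set, which forces $t-1\notin S_\Cmc$ or $t+1\notin S_\Cmc$ (otherwise removing $t$ preserves hitting). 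This rules out three consecutive elements, giving runs of length $\leq 2$, while the hitting-set property itself forbids two consecutive gap positions, giving gaps of length $\leq 1$. Thus a fresh run starts within every three consecutive positions, yielding at least $\lceil(2^n+1)/3\rceil$ facts. The existence of at least one $p$-conflict follows from the $\Pi_n$-inconsistency of $\Dmc_n$ combined with $\Prop{2}$ for $p$, granted by Proposition~\ref{prop:PropertiesBoundedZ}.

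The main obstacle is not producing a single large witness (the periodic pattern $\{P@[4k,4k+1]\}_k$ for repairs and an alternating hitting set for conflicts both do the job) but showing the lower bound holds for \emph{every} admissible $\Bmc$. The delicate step is the structural classification of the $S_\Bmc$ patterns under the full maximality/minimality conditions, including verifying that boundary positions near $0$ and $2^n$ (where one ``neighbor'' on the far side is absent) only shrink the admissible run/gap lengths and hence only \emph{increase} the required number of facts, so the interior estimate remains a valid global lower bound. The normal-form assumption is crucial throughout, since it ensures that distinct maximal runs translate into distinct facts and no run can be artificially split to inflate the count.
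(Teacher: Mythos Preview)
Your argument is correct and establishes the proposition as stated, but it takes a different route from the paper and is, in the $p$-conflict case, slightly weaker in a way worth noting.

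For $p$-repairs, the paper uses two propositions $A,B$ with the constraints $\bot\gets A\wedge B$, $\bot\gets A\wedge\eventp_{\{1\}}A$, $\bot\gets B\wedge\eventp_{\{1\}}B$ over $\Dmc=\{A@[0,2^n],B@[0,2^n]\}$, which forces every $p$-repair to be a strict alternation of $2^n+1$ singletons. Your single-proposition construction with $\bot\leftarrow\alwaysf_{[0,2]}P$ is more elementary: one rule, one predicate, and the run/gap analysis is pleasantly transparent. Both use a program that is independent of $n$, so both are genuine data-complexity witnesses. Your bound $\lceil(2^n+1)/4\rceil$ is looser than the paper's $2^n+1$, but that is immaterial for the statement.

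For $p$-conflicts, the paper uses a fixed program built around $C\gets A\wedge\eventf_{\{1\}}B$, $C\gets B\wedge\eventf_{\{1\}}A$ and $\bot\gets S\wedge(C\,\Umc_{(0,\infty)}E)$, with the data supplying start/end markers $S,E$; the argument that every $p$-conflict is large is then a somewhat delicate case analysis on $A$-runs. Your hitting-set reformulation via $\bot\leftarrow Q\land\alwaysf_{[0,2^n]}\eventf_{[0,2)}P$ is considerably cleaner, and the ``runs $\leq 2$, gaps $\leq 1$'' count goes through without fuss. The one caveat is that your program carries $2^n$ inside the range of $\alwaysf$, so $\Pi_n$ varies with $n$. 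This still proves the proposition as literally written (size measured w.r.t.\ $|\Dmc|$), but it is not a data-complexity statement in the strict sense of Section~\ref{sec:datacomplexity}, where the program is held fixed. The paper's $\Umc$-based construction achieves a fixed program by letting the data (via the marker $E@\{2^{n+2}\}$) delimit the range instead; if you want to match that strength, you would need a similar trick, and the simplicity advantage largely evaporates.
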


We thus only obtain \expspace complexity upper bounds. 

\begin{restatable}{proposition}{PropExpspacePointwise}\label{prop:PropExpspacePointwise}
When $\Tbb=\Zbb$ and only bounded-interval datasets are considered, all tasks considered in Proposition~\ref{prop:PropPSPACESubsetGenChecking} for $x=s$ can be done in \expspace in the case $x=p$. 
\end{restatable}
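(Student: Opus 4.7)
The plan is to transfer the arguments of Proposition~\ref{prop:PropPSPACESubsetGenChecking} one exponential level up, exploiting the fact that although Proposition~\ref{prop:PropExpsizePointwise} tells us $p$-conflicts and $p$-repairs can have exponential size, they are never larger than exponential. The first step is to make this upper bound explicit. Since $\Tbb=\Zbb$ and all endpoints in $\Dmc$ are bounded integers whose absolute value is at most some $N\le 2^{|\Dmc|}$, the set of timepoints covered by the intervals of $\Dmc$ has cardinality at most $O(|\Dmc|\cdot N)$, which is exponential in $|\Dmc|$. For any $\Bmc\psubseteq\Dmc$ in normal form, the $\Zbb$-normal-form condition (two facts whose intervals are $\Zbb$-adjacent must merge) bounds $|\Bmc|$ by this same exponential quantity, so any single $p$-conflict or $p$-repair admits a representation of exponential size in $|\Dmc|$.

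With this size bound, recognizing membership of $\Bmc$ in $\pconflicts{\Dmc,\Pi}$ or $\preps{\Dmc,\Pi}$ reduces to: (i) verifying $\Bmc\psubseteq\Dmc$ by direct iteration; (ii) deciding (in)consistency of $\Bmc$ with $\Pi$, which is in \pspace\ in the combined input size and hence in \expspace\ w.r.t.\ $|\Dmc|$ since $|\Bmc|$ is at most exponential in $|\Dmc|$; and (iii) checking minimality or maximality by enumerating the exponentially many candidate punctual facts $\alpha@\{t\}$ with $t$ in the relevant finite timepoint set, using an \expspace\ call for each to verify that removing (resp.\ adding) it breaks inconsistency (resp.\ consistency). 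Generating a single $p$-conflict or $p$-repair uses the same greedy scheme as in the $s$-case: start from $\Dmc$ or $\emptyset$ and walk through the exponentially many punctual facts one by one, keeping the current set only when the desired property is preserved. The total number of iterations is exponential and each iteration makes one \expspace\ call, so the whole procedure fits in \expspace.

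For query entailment we apply guess-and-check. Under $p$-brave semantics, we nondeterministically write a candidate $\Rmc$ of exponential size, verify $\Rmc\in\preps{\Dmc,\Pi}$ by the recognition procedure above, and test $(\Rmc,\Pi)\models q(\ans,\iota)$ by DatalogMTL entailment on a dataset of exponential size (hence in \expspace\ w.r.t.\ $|\Dmc|$); Savitch's theorem $\mn{NExpSpace}=\expspace$ gives a deterministic \expspace\ bound. The $p$-CQA case is the dual. For $p$-intersection we build an explicit representation of $\Imc=\bigsqcap_{\Rmc\in\preps{\Dmc,\Pi}}\Rmc$ by asking, for each of the exponentially many relevant punctual facts $\alpha@\{t\}$, whether it is pointwise included in every $p$-repair; each such test is in co-\expspace, hence in \expspace, and the final entailment check $(\Imc,\Pi)\models q(\ans,\iota)$ is again in \expspace\ on the exponentially-sized~$\Imc$.

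The main obstacle is making sure all size and complexity bookkeeping is sound. In particular, one must argue carefully that the \pspace\ bound for DatalogMTL consistency in combined complexity translates faithfully to an \expspace\ bound when the dataset is exponentially large in $|\Dmc|$, that intermediate greedy candidates never blow past exponential size, and that the minimality tests respect the interaction between $\psubseteq$ and normal form over $\Zbb$—specifically, that removing or adding a single punctual fact from a normal-form set and re-normalising yields a genuine $\psubsetneq$-strict modification, so that the greedy procedure indeed terminates at a minimal (resp.\ maximal) element of the $\psubseteq$ order.
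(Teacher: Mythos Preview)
Your approach is essentially the same as the paper's: reduce to an exponential-size set of punctual facts and reuse the $s$-case machinery. The paper packages this via Lemma~\ref{claim-boundedZ}, which states that $\Rmc\in\preps{\Dmc,\Pi}$ iff $\Rmc$ is in normal form and $\mn{tp}(\Rmc)$ is a $\subseteq$-maximal $\Pi$-consistent subset of $\mn{tp}(\Dmc)=\{\alpha@\{t\}\mid\Dmc\models\alpha@\{t\}\}$ (and dually for conflicts); this lets the whole proof be literally ``apply the procedures of Proposition~\ref{prop:PropPSPACESubsetGenChecking} to $\mn{tp}(\Dmc)$ instead of $\Dmc$''. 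Two small points on your write-up: (i) you say ``\pspace\ in the combined input size'', but DatalogMTL consistency is \expspace-complete in combined complexity---what you actually need (and what the paper uses) is the \pspace\ \emph{data} complexity bound, which on an exponential-size dataset gives \expspace\ in $|\Dmc|$; (ii) the worry you flag in your last paragraph about normal form and $\psubsetneq$-strictness is exactly what Lemma~\ref{claim-boundedZ} settles, so invoking it would let you drop that caveat. Your treatment of $p$-intersection (computing $\Imc$ pointwise rather than the paper's guess-and-check) is a harmless variation and also lands in \expspace.
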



\section{Conclusion and Future Work}
This paper provides a first study of inconsistency handling in DatalogMTL,
a prominent formalism for reasoning on temporal data. Due to facts having 
associated time intervals, there are different natural ways to define conflicts and repairs. 
Our results show that these alternative notions can differ significantly with regards to
basic properties (existence, finiteness, or size). For $s$-conflicts and $s$-repairs, 
we provided a detailed picture of the data complexity landscape, 
with tight complexity results for several DatalogMTL fragments. Notably, we proved
that query entailment in propositional DatalogMTL over $\Zbb$ is tractable for all three $s$-repair-based semantics. 

We see many relevant avenues for future work. First, there remain several open questions
regarding the complexity of reasoning with $i$- and $p$-repairs and conflicts in the bounded-interval $\Zbb$ setting. 
We are most interested in trying to extend our tractability results for $s$-repair-based semantics to $i$-repairs
and are reasonably optimistic that this can be done
(with significantly more involved constructions). 
It would also be interesting to consider DatalogMTL with negation or spatio-temporal predicates. 
A nice theoretical question is to consider the decidability of $i$- and $p$-repair / conflict existence
in unrestricted settings. A more practical direction is to try to devise practical SAT- or SMT-based algorithms
for the identified (co)\np cases, as has been done in some atemporal settings, cf.\ \cite{DBLP:conf/kr/BienvenuB22}. 
There are also further variants of our notions that are worth exploring, such as 
quantitative notions of $x$-repairs, e.g.\ to take into account how much 
the endpoints have been adjusted in an $i$-repair.

\section*{Acknowledgments}
This work was supported by the ANR AI Chair INTENDED (ANR-19-CHIA-0014) and the ANR PRAIRIE 3IA Institute (ANR-19-P3IA-0001). 

\bibliographystyle{named}
\bibliography{ijcai25}

\newpage
\appendix


\section{Proofs for Section~\ref{sec:definitions}}

\lemNormalFormSubset*
\begin{proof}
(1) By definition, $\Bmc'\ssubseteq \Bmc$ implies $\Bmc'\subseteq \Bmc$ even if $\Bmc$ is not in normal form. 
If $\Bmc$ is in normal form and $\Bmc'\subseteq \Bmc$, then $\Bmc'\psubseteq\Bmc$ and for every $\alpha@\iota\in\Bmc$, there is at most one $\alpha@\iota'\in\Bmc'$ such that $\iota'\subseteq\iota$ (otherwise, since $\Bmc'\subseteq\Bmc$, this would contradict the fact that $\Bmc$ is in normal form). Hence $\Bmc'\isubseteq\Bmc$ and $\Bmc'\subseteq \Bmc$, \ie $\Bmc'\ssubseteq\Bmc$. 

\noindent (2) Since $\Bmc'\isubseteq \Bmc$ implies $\Bmc'\psubseteq \Bmc$, for every $\alpha@\iota'\in\Bmc'$, it holds that $\alpha@\iota'$ is pointwise included in $\Bmc$. Since $\Bmc$ is in normal form, this implies that there exists $\alpha@\iota\in\Bmc$ such that $\iota'\subseteq\iota$. Hence, for every $\alpha@\iota'\in\Bmc'$, there exists $\alpha@\iota\in\Bmc$ such that $\iota'\subseteq\iota$. Since $\Bmc'\isubseteq \Bmc$ also implies that for every $\alpha@\iota\in\Bmc$, there is at most one $\alpha@\iota'\in\Bmc'$ with $\iota'\subseteq\iota$, we obtain that the number of facts in $\Bmc'$ is bounded by the number of facts in $\Bmc$.
\end{proof}

\PropRelationshipSemantics*
\begin{proof}
Assume that $(\Dmc,\Pi)\intmodels{x} q(\ans,\iota)$. This means that $(\Imc,\Pi)\models q(\ans,\iota)$ where $\Imc=\bigsqcap_{\Rmc\in\xreps{\Dmc,\Pi}}\Rmc=\{\alpha@\{t\}\mid   \Rmc\models\alpha@\{t\}  \text{ for each } \Rmc\in\xreps{\Dmc,\Pi}\}$. 
Let $\Rmc\in\xreps{\Dmc,\Pi}$ and $\mathfrak{M}\models(\Rmc,\Pi)$. For every $\alpha@\{t\}\in\Imc$, $\Rmc\models\alpha@\{t\}$ so there exists $\alpha@\iota'\in\Rmc$ such that $t\in\iota'$. It follows that $ \mathfrak{M},t\models \alpha$, \ie $\mathfrak{M}$ is a model of $\alpha@\{t\}$. Hence $\mathfrak{M}\models(\Imc,\Pi)$, so for every $t\in\iota$, $\mathfrak{M},t\models Q(\vec{\tau})_{[\vec{v}\leftarrow \ans]}$, where $q(\vec{v},\intvar)=Q(\vec{\tau})@\intvar$. We obtain that $(\Rmc,\Pi)\models q(\ans,\iota)$, so $(\Dmc,\Pi)\cqamodels{x} q(\ans,\iota)$.

Now $(\Dmc,\Pi)\cqamodels{x} q(\ans,\iota)$ means that $(\Rmc,\Pi)\models q(\ans,\iota)$ for every $\Rmc\in\xreps{\Dmc,\Pi}$, and since the definition of $x$-CQA and $x$-brave semantics requires that $\xreps{\Dmc,\Pi}\neq\emptyset$, this means that $(\Dmc,\Pi)\bravemodels{x} q(\ans,\iota)$.

To see that none of the converse implications holds, consider $\Pi=\{Q\leftarrow P, Q\leftarrow R, \bot \leftarrow P\land R\}$ and $\Dmc=\{P@\{0\},R@\{0\}\}$. For every $x\in\{p,i,s\}$, $(\Dmc,\Pi)\bravemodels{x} P@\{0\}$ while $(\Dmc,\Pi)\not\cqamodels{x} P@\{0\}$, and $(\Dmc,\Pi)\cqamodels{x} Q@\{0\}$ while  $(\Dmc,\Pi)\not\intmodels{x} Q@\{0\}$.
\end{proof}

\section{Proofs for Section~\ref{sec:properties-conf-rep}}

\begin{lemma}\label{lem:subset}
If $\Dmc$ is a dataset (in normal form) and $\Pi$ is a (consistent) DatalogMTL program, $\Rmc\in\sreps{\Dmc,\Pi}$ iff $\Rmc$ is a $\subseteq$-maximal $\Pi$-consistent subset of $\Dmc$, and $\Cmc\in\sconflicts{\Dmc,\Pi}$ iff $\Cmc$ is a $\subseteq$-minimal $\Pi$-inconsistent subset of $\Dmc$.
\end{lemma}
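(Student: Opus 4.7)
The plan is to reduce the $\ssubseteq$-based definitions to ordinary set inclusion by repeatedly invoking Lemma~\ref{lem:normal-form-subset} together with the fact that any subset of a dataset in normal form is itself in normal form (if two facts in $\Bmc\subseteq\Dmc$ shared an atom with intervals whose union is an interval, those same two facts would witness a violation of normal form in $\Dmc$).

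First, I would handle membership. Given $\Rmc$ with $\Rmc\subseteq\Dmc$, $\Rmc$ is in normal form by the observation above, so Lemma~\ref{lem:normal-form-subset}(1) yields $\Rmc\ssubseteq\Dmc$, and conversely $\Rmc\ssubseteq\Dmc$ implies $\Rmc\subseteq\Dmc$ by definition. Hence the conditions ``$\Rmc$ in normal form and $\Rmc\ssubseteq\Dmc$'' in the $s$-repair definition collapse to ``$\Rmc\subseteq\Dmc$''. The same holds for $\Cmc$ in the $s$-conflict definition.

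The main step is to show that for $\Rmc,\Rmc'\subseteq\Dmc$, we have $\Rmc\ssubsetneq\Rmc'$ iff $\Rmc\subsetneq\Rmc'$; symmetrically for $\Cmc,\Cmc'$. The direction $\ssubsetneq\Rightarrow\subsetneq$ is easy: $\Rmc\ssubsetneq\Rmc'$ unpacks to $\Rmc\ssubseteq\Rmc'$ (hence $\Rmc\subseteq\Rmc'$) together with $\Rmc\psubsetneq\Rmc'$, which supplies some fact $\alpha@\iota\in\Rmc'$ not pointwise included in $\Rmc$; since a fact belonging to $\Rmc$ is trivially pointwise included in $\Rmc$, we get $\alpha@\iota\in\Rmc'\setminus\Rmc$, so $\Rmc\subsetneq\Rmc'$. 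For the converse, given $\Rmc\subsetneq\Rmc'$, Lemma~\ref{lem:normal-form-subset}(1) applied to $\Rmc'$ (which is in normal form) gives $\Rmc\ssubseteq\Rmc'$; pick any $\alpha@\iota\in\Rmc'\setminus\Rmc$. Here is the point requiring a small argument: since $\Rmc\subseteq\Dmc$ and $\Dmc$ is in normal form, any $\alpha@\iota''\in\Rmc$ with $\iota''\cap\iota\neq\emptyset$ would give two facts $\alpha@\iota''$, $\alpha@\iota$ in $\Dmc$ whose intervals overlap (hence whose union is an interval), forcing $\iota=\iota''$ and contradicting $\alpha@\iota\notin\Rmc$. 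Therefore no such $\iota''$ exists, so $\alpha@\iota$ is not pointwise included in $\Rmc$, yielding $\Rmc'\notpsubseteq\Rmc$ and hence $\Rmc\psubsetneq\Rmc'$. Combined with $\Rmc\ssubseteq\Rmc'$, this gives $\Rmc\ssubsetneq\Rmc'$.

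With these two equivalences, the $s$-repair condition ``no $\Pi$-consistent $\Rmc'$ with $\Rmc\ssubsetneq\Rmc'\ssubseteq\Dmc$'' becomes ``no $\Pi$-consistent $\Rmc'$ with $\Rmc\subsetneq\Rmc'\subseteq\Dmc$'', which is exactly $\subseteq$-maximality among $\Pi$-consistent subsets of $\Dmc$; the analogous collapse for $\sconflicts{\Dmc,\Pi}$ gives $\subseteq$-minimality among $\Pi$-inconsistent subsets. The only genuinely delicate step is the normal-form argument ruling out overlapping intervals used above; everything else is bookkeeping around the definitions.
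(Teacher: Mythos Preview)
Your proposal is correct and follows essentially the same approach as the paper's proof: both reduce the $\ssubseteq$-based definitions to ordinary $\subseteq$ via Lemma~\ref{lem:normal-form-subset}(1) and the observation that subsets of a normal-form dataset are themselves in normal form. The paper's proof is terser---it simply asserts that $\Rmc\subsetneq\Rmc'\subseteq\Dmc$ implies $\Rmc\ssubsetneq\Rmc'\ssubseteq\Dmc$ ``by Lemma~\ref{lem:normal-form-subset}'' without spelling out why $\Rmc'\notpsubseteq\Rmc$ holds---whereas you explicitly supply the normal-form argument (distinct $\alpha$-facts in $\Dmc$ cannot have overlapping intervals, so the missing fact $\alpha@\iota$ cannot be pointwise covered by $\Rmc$), which is exactly the detail the paper leaves to the reader.
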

\begin{proof}
Let $\Rmc\in\sreps{\Dmc,\Pi}$, \ie $\Rmc$ is in normal form, $\Pi$-consistent and such that $\Rmc\ssubseteq\Dmc$ and there is no $\Pi$-consistent $\Rmc'$ such that $\Rmc\ssubsetneq\Rmc'\ssubseteq\Dmc$. Since $\Rmc\ssubseteq\Dmc$, then $\Rmc\subseteq\Dmc$. Moreover, there is no $\Pi$-consistent $\Rmc'$ such that $\Rmc\subsetneq\Rmc'\subseteq\Dmc$ (since by Lemma~\ref{lem:normal-form-subset}, this would imply $\Rmc\ssubsetneq\Rmc'\ssubseteq\Dmc$). Hence $\Rmc$ is a $\subseteq$-maximal $\Pi$-consistent subset of $\Dmc$. 

In the other direction, if $\Rmc$ is a $\subseteq$-maximal $\Pi$-consistent subset of $\Dmc$, then (i) $\Rmc$ is in normal form since $\Dmc$ is in normal form, (ii) by Lemma~\ref{lem:normal-form-subset}, $\Rmc\ssubseteq \Dmc$, and (iii) there is no $\Pi$-consistent $\Rmc'$ such that $\Rmc\ssubsetneq\Rmc'\ssubseteq\Dmc$ (since this would imply that $\Rmc\subsetneq\Rmc'\subseteq\Dmc$). Hence $\Rmc\in\sreps{\Dmc,\Pi}$. 

The case of $s$-conflicts is similar.
\end{proof}

\PropSubsetBased*
\begin{proof}
\noindent$\Prop{1}$: By Lemma~\ref{lem:subset} and monotonicity of DatalogMTL, one can obtain an $s$-repair $\Rmc$ of $\Dmc$ \wrt $\Pi$ as follows: start from $\Rmc=\emptyset$ and greedily add facts from $\Dmc$ to $\Rmc$ one by one, except if adding the fact makes $\Rmc$ $\Pi$-inconsistent. 
\smallskip

\noindent$\Prop{2}$: If $\Dmc$ is $\Pi$-inconsistent, by Lemma~\ref{lem:subset} and monotonicity of DatalogMTL, one can obtain an $s$-conflict $\Cmc$ of $\Dmc$ \wrt $\Pi$ as follows: start from $\Cmc=\Dmc$ and greedily removes facts one by one, except if removing the fact makes $\Cmc$ $\Pi$-consistent. 
In the other direction, if there exists an $s$-conflict $\Cmc\subseteq\Dmc$, since DatalogMTL is monotone, $\Dmc$ is $\Pi$-inconsistent. 

\smallskip

\noindent$\Prop{3}$-$\Prop{4}$: If $\Bmc$ is an $s$-repair or an $s$-conflict, $\Bmc\subseteq\Dmc$ so since $\Dmc$ is finite, so is $\Bmc$. Moreover, $|\sreps{\Dmc,\Pi}|$ and $|\sconflicts{\Dmc,\Pi}|$ are bounded by the number of subsets of $\Dmc$, hence are finite.
\smallskip

\noindent$\Prop{5}$: Let $\alpha@\iota$ be pointwise included in $\Dmc$. Since $\Dmc$ is in normal form, there exists a unique $\alpha@\iota'\in\Dmc$ such that $\iota\subseteq\iota'$. 
Hence $\Rmc\models\alpha@\iota$ for every $\Rmc\in\sreps{\Dmc,\Pi}$ iff $\alpha@\iota'\in\Rmc$ for every $\Rmc\in\sreps{\Dmc,\Pi}$ (since $\Rmc\subseteq\Dmc$). 
Moreover, $\alpha@\iota'\in\Rmc$ for every $\Rmc\in\sreps{\Dmc,\Pi}$ iff $\alpha@\iota'\notin\Cmc$ for every $\Cmc\in\sconflicts{\Dmc,\Pi}$. Indeed, if $\alpha@\iota'\in\Cmc$ for some $s$-conflict $\Cmc$, $\Cmc\setminus\{\alpha@\iota'\}$ is $\Pi$-consistent thus can be extended to an $s$-repair $\Rmc$ that does not contain $\alpha@\iota'$ (as in the proof of $\Prop{1}$), and if $\alpha@\iota'\notin\Rmc$ for some $s$-repair $\Rmc$, $\Rmc\cup\{\alpha@\iota'\}$ is $\Pi$-inconsistent and can be reduced to an $s$-conflict $\Cmc$ that contains $\alpha@\iota'$ (as in the proof of $\Prop{2}$). 
Finally, $\alpha@\iota'\notin\Cmc$ for every $\Cmc\in\sconflicts{\Dmc,\Pi}$ iff $\alpha@\iota$ has an empty pointwise intersection with every $\Cmc\in\sconflicts{\Dmc,\Pi}$ (since $\Cmc\subseteq \Dmc$).  
It follows that $\Rmc\models\alpha@\iota$ for every $\Rmc\in\sreps{\Dmc,\Pi}$ iff $\alpha@\iota$ has an empty pointwise intersection with every $\Cmc\in\sconflicts{\Dmc,\Pi}$. 
\end{proof}

\CorSubsetBased*
\begin{proof}
This follows from Proposition \ref{prop:PropSubsetBased} ($\Prop{5}$) and the fact that $s$-repairs and $s$-conflicts are subsets of $\Dmc$ (so that their intersections or unions are also subsets of $\Dmc$), and that $\Dmc$ is in normal form: for every $\alpha@\iota\in\Dmc$, $\alpha@\iota\in \bigcap_{\Rmc\in\sreps{\Dmc,\Pi}}\Rmc$ iff $\alpha@\iota$ is pointwise included in every $s$-repair iff $\alpha@\iota$ has an empty intersection with every $s$-conflicts iff $\alpha@\iota\notin\bigcup_{\Cmc\in\sconflicts{\Dmc,\Pi}}\Cmc$.
\end{proof}

\Exempleexistencethree*
\begin{proof}
Assume for a contradiction that $\Rmc$ is an $i$-repair. If $Q@\{0\}\notin\Rmc$, $\Rmc=\{P@\langle t_1,t_2\rangle\}$ for some $t_1\neq-\infty$ and $t_2\neq\infty$ and $\Rmc'=\{P@\langle t_1,t_2+1\rangle\}$ is $\Pi_p$-consistent, contradicting the maximality of $\Rmc$. Hence $\Rmc=\{Q@\{0\}, P@\langle t_1,t_2\rangle\}$ with $\langle t_1,t_2\rangle\subseteq (0,\infty)$ or $\langle t_1,t_2\rangle\subseteq (-\infty,0)$. In the first case, if $t_2 =\infty$, then $\Rmc$ is $\Pi_p$-inconsistent and otherwise, $\Rmc'=\{Q@\{0\}, P@\langle t_1,t_2+1\rangle\}$ is $\Pi_p$-consistent and $\Rmc\isubsetneq\Rmc'\isubseteq\Dmc_p$. The second case is similar.
\end{proof}

\Exampleexistencefour*
\begin{proof}
Assume for a contradiction that $\Cmc$ is an $i$-conflict. If $Q@\{0\}\notin\Cmc$, $\Cmc=\{P@\langle t,\infty)\}$ is not minimal since $\{P@\langle t+1,\infty)\}$ is $\Pi_p$-inconsistent. 
Hence $\Cmc=\{Q@\{0\}, P@\langle t_1,t_2\rangle\}$. If $t_2<\infty$, $\Cmc$ is $\Pi_p$-consistent. But if $t_2=\infty$, $\Cmc'=\Cmc\setminus\{Q@\{0\}\}$ is $\Pi_p$-inconsistent and $\Cmc'\isubsetneq\Cmc$. 
\end{proof}

\ExampleInfPRep*
\begin{proof}
Indeed, let $\Rmc\psubseteq \Dmc$ be a finite set of facts. If $\Rmc$ contains a fact of the form $P@\langle t,\infty)$, it is $\Pi$-inconsistent. Otherwise, there exists $t\in\Tbb$ such that $t>t_2$ for every $P@\langle t_1,t_2\rangle\in\Rmc$, so if $\Rmc$ is $\Pi$-consistent, then $\Rmc'=\Rmc\cup\{P@\{t+2\}\}$ is $\Pi$-consistent and  $\Rmc\psubsetneq\Rmc'\psubseteq\Dmc$. 
\end{proof}

\ExampleInfPConf*
\begin{proof}
Indeed, let $\Cmc\psubseteq\Dmc$ be a finite set of facts. 
If $\Cmc$ contains a fact of the form $P@\langle t,\infty)$, then $\Cmc'=\Cmc\setminus\{P@\langle t,\infty)\}\cup\{P@\langle t, t+10), P@(t+10,\infty)\}$ is such that $\Cmc'\psubsetneq\Cmc$ and is $\Pi$-inconsistent iff $\Cmc$ is $\Pi$-inconsistent. 
Otherwise, there exists $t\in\Tbb$ such that $t>t_2$ for every $P@\langle t_1,t_2\rangle\in\Cmc$ and $\Cmc$ is $\Pi$-consistent.
\end{proof}

\subsection{Proof of Proposition~\ref{prop:PropertiesBoundedZ}}
Note that because we work with finite timepoints from $\Zbb$, 
it suffices to consider intervals with square brackets
$[, ]$, since e.g.\ the interval $(2,5)$ can be equivalently represented as $[3,4]$, both intervals 
corresponding to the same set $\{3,4\}$ of timepoints.
We will therefore assume w.l.o.g.\  
that all datasets provided as input in the following proofs use only $[$ and $]$.

\PropertiesBoundedZ*

To prove the preceding proposition, we fix some DatalogMTL program $\Pi$ and dataset $\Dmc$, 
over timeline $(\Zbb,\leq)$, such that $\Dmc$ is in normal form and its facts only use bounded intervals. 
We treat the cases of $x=i$ and $x=p$ separately.

\medskip

\noindent\textbf{Proof of Proposition \ref{prop:PropertiesBoundedZ} for $x=i$} \smallskip

\noindent We show the first four properties hold. \smallskip

\noindent$\Prop{1}$: We show how we can construct an $i$-repair $\Rmc$ in a greedy fashion by iterating over the facts in $\Dmc$. 
We start with $\Rmc_0=\emptyset$. At each stage $1 \leq k \leq n= |\Dmc|$, we pick a fact $\alpha@[i,j] \in \Dmc$
that has not yet been considered and set $\Rmc_{k+1}$ equal to $\Rmc_k \cup \{\alpha@[i',j'] \}$, with 
$[i',j']$ being some maximal subinterval of $[i,j]$ such that $\Rmc_k \cup \{\alpha@[i',j'] \}$ is $\Pi$-consistent.
Should all intervals lead to inconsistency with $\Pi$, then we let $\Rmc_{k+1} = \Rmc_k$. 
It follows from this construction that the final dataset $\Rmc_n$ is $\Pi$-consistent. Moreover, there cannot exist any
$\Rmc'$ with $\Rmc\isubsetneq\Rmc'\isubseteq\Dmc$ that is $\Pi$-consistent (otherwise this would contradict the maximality of the intervals chosen during the construction). It follows that $\ireps{\Dmc,\Pi}\neq\emptyset$.
\smallskip

\noindent$\Prop{2}$: If $\Dmc$ is $\Pi$-inconsistent, then we can greedily construct a conflict. To do so,
we start with $\Cmc_0=\Dmc$, and at each stage $1 \leq k \leq n= |\Dmc|$, we pick a fact $\alpha@[i,j] \in \Dmc$ that has not yet been considered 
and set $\Cmc_{k+1}$ equal to $\Cmc_{k}\setminus\{\alpha@[i,j]\}\cup\{\alpha@[i',j']\}$ with $[i',j']$ being a minimal subinterval of $[i,j]$ such that $\Cmc_{k+1}$ is $\Pi$-inconsistent. If $\Cmc_{k}\setminus\{\alpha@[i,j]\}$ is already $\Pi$-inconsistent, then $[i',j']=\emptyset$ and $\Cmc_{k+1}=\Cmc_{k}\setminus\{\alpha@[i,j]\}$. 
It follows from this construction that the final dataset $\Cmc_n$ is $\Pi$-inconsistent. Moreover, there cannot exist any $\Pi$-inconsistent 
$\Cmc'$ with $\Cmc'\isubsetneq\Cmc$ (otherwise this would contradict the minimality of the intervals chosen during the construction). It follows that $\iconflicts{\Dmc,\Pi}\neq\emptyset$. In the other direction, if there exists an $i$-conflict $\Cmc\isubseteq\Dmc$, since this implies that $\Dmc\models\Cmc$ and DatalogMTL is monotone, $\Dmc$ is $\Pi$-inconsistent.
\smallskip

\noindent$\Prop{3}$: Let $\Bmc\isubseteq \Dmc$. For every $\Bmc\isubseteq \Dmc$ and $\alpha@\iota'\in\Bmc$, there exists $\alpha@\iota\in\Dmc$ such that $\iota'\subseteq\iota$ (\cf proof of point 2 of Lemma~\ref{lem:normal-form-subset}). 
Since given a bounded interval $[i,j]$ over $\Zbb$, there are finitely many subintervals of $[i,j]$, the number of facts that may belong to some $\Bmc\isubseteq \Dmc$ is finite, and the number of $\Bmc$ such that $\Bmc\isubseteq \Dmc$ is finite. It follows that $\ireps{\Dmc,\Pi}$ and $\iconflicts{\Dmc,\Pi}$ are finite.
\smallskip

\noindent$\Prop{4}$: $\Prop{4}$ already holds without the assumption that $\Tbb=\Zbb$ (Proposition~\ref{prop:PropPointwiseIntervalBased}), by point 2 of Lemma~\ref{lem:normal-form-subset}.

\medskip

Finally we prove that Property $\Prop{5}$ does not hold for $i$-repairs and $i$-conflicts. 
Consider $\Dmc= \{P@[1,5], Q@\{3\}\}$ and program $\Pi= \{\bot \gets Q \wedge \alwaysf_{[0,2]} P \}$. 
Then $\Rmc=\{P@[4,5], Q@\{3\}\}$ is an $i$-repair: it is $\Pi$-consistent,
in normal form, and any $\Rmc'$ with $\Rmc\isubsetneq\Rmc'\isubseteq\Dmc$
must be of the form $\{P@[u,5], Q@\{3\}\}$ with $u \in \{1, 2, 3\}$, hence 
$\Pi$-inconsistent. We can further observe that there is a unique $i$-conflict:
$\Cmc =\{P@[3,5], Q@\{3\}\}$. Finally observe that $P@[1,2]$ has an empty 
pointwise intersection with the unique conflict $\Cmc$ but is not pointwise included 
in the $i$-repair $\Rmc$. 

\bigskip

\noindent\textbf{Proof of Proposition \ref{prop:PropertiesBoundedZ} for $x=p$}

Given a set of facts $\Bmc$, let $\mn{tp}(\Bmc)=\{\alpha@\{t\}\mid \Bmc\models\alpha@\{t\}\}$. If $\Tbb=\Zbb$ and $\Bmc$ uses only bounded intervals, $\mn{tp}(\Bmc)$ is a finite set of facts with punctual intervals. 
Moreover, for every two sets of facts $\Bmc$ and $\Bmc'$, $\Bmc'\psubseteq\Bmc$ iff $\mn{tp}(\Bmc')\subseteq\mn{tp}(\Bmc)$.

\begin{lemma}\label{claim-boundedZ}
If $\Tbb=\Zbb$, $\Dmc$ is a dataset that uses only bounded intervals, and $\Pi$ is a DatalogMTL program, $\Rmc\in\preps{\Dmc,\Pi}$ iff $\Rmc$ is in normal form and $\mn{tp}(\Rmc)$ is a $\subseteq$-maximal $\Pi$-consistent subset of $\mn{tp}(\Dmc)$; and $\Cmc\in\pconflicts{\Dmc,\Pi}$ iff $\Cmc$ is in normal form and $\mn{tp}(\Cmc)$ is a $\subseteq$-minimal $\Pi$-inconsistent subset of $\mn{tp}(\Dmc)$.
\end{lemma}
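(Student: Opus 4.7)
The overall approach is to establish a bijection between normal-form sets of facts that are pointwise-included in $\Dmc$ and ordinary subsets of the finite set $\mn{tp}(\Dmc)$, and to show that this bijection preserves $\Pi$-(in)consistency and maps $\psubseteq$ to $\subseteq$. Once this is in place, the characterizations of $p$-repairs and $p$-conflicts fall out immediately from their definitions.

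First I would set up the correspondence. Given $S \subseteq \mn{tp}(\Dmc)$, I define $\mn{nf}(S)$ to be the set obtained by merging, for each predicate, any maximal $\subseteq$-chains of consecutive punctual integer facts into a single fact on a larger bounded interval. Since $\Tbb = \Zbb$, this merging process is unambiguous and yields a unique set in normal form such that $\mn{tp}(\mn{nf}(S)) = S$, and $\mn{nf}(\mn{tp}(\Bmc)) = \Bmc$ whenever $\Bmc$ is already in normal form. Moreover $\mn{nf}(S) \psubseteq \Dmc$ because every $\alpha@\{t\} \in S \subseteq \mn{tp}(\Dmc)$ satisfies $\Dmc \models \alpha@\{t\}$. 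Conversely, for any normal-form $\Bmc \psubseteq \Dmc$ we have $\mn{tp}(\Bmc) \subseteq \mn{tp}(\Dmc)$ by the equivalence $\Bmc' \psubseteq \Bmc \iff \mn{tp}(\Bmc') \subseteq \mn{tp}(\Bmc)$ stated just before the lemma. These two maps are mutually inverse, giving the bijection.

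Next I would observe that $\Bmc$ and $\mn{tp}(\Bmc)$ share the same class of interpretation-models: $\mathfrak{M}$ satisfies $\alpha@\iota$ precisely when $\mathfrak{M}$ satisfies $\alpha@\{t\}$ for every $t \in \iota$. Hence for any normal-form $\Bmc$, $\Bmc$ is $\Pi$-consistent iff $\mn{tp}(\Bmc)$ is $\Pi$-consistent; equivalently, for any $S \subseteq \mn{tp}(\Dmc)$, $S$ is $\Pi$-consistent iff $\mn{nf}(S)$ is. Combined with the bijection, this shows that normal-form $\Pi$-consistent (resp.\ $\Pi$-inconsistent) pointwise-subsets of $\Dmc$ correspond under $\mn{tp}$ to $\Pi$-consistent (resp.\ $\Pi$-inconsistent) subsets of $\mn{tp}(\Dmc)$, and the correspondence is inclusion-preserving in both directions.

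Finally I would conclude both biconditionals. For the repair case, if $\Rmc \in \preps{\Dmc,\Pi}$, then $\Rmc$ is normal-form and $\Pi$-consistent with $\Rmc \psubseteq \Dmc$, so $\mn{tp}(\Rmc) \subseteq \mn{tp}(\Dmc)$ is $\Pi$-consistent; any strict $\subseteq$-superset $S$ of $\mn{tp}(\Rmc)$ inside $\mn{tp}(\Dmc)$ would yield $\mn{nf}(S)$ with $\Rmc \psubsetneq \mn{nf}(S) \psubseteq \Dmc$ and $\mn{nf}(S)$ $\Pi$-consistent, contradicting maximality. The converse direction is symmetric. The conflict case is analogous, with minimality in place of maximality and $\Pi$-inconsistency in place of $\Pi$-consistency. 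The only step that needs care is showing $\mn{nf}$ is well-defined and matches the normal-form condition over $\Zbb$; this is where the assumption $\Tbb = \Zbb$ with bounded intervals is essential, since over $\Qbb$ or with unbounded intervals one cannot always realize $S \subseteq \mn{tp}(\Dmc)$ as the $\mn{tp}$ of a finite normal-form set of facts. I expect this verification of $\mn{nf}$ to be the only subtle point; the rest of the argument is routine once the bijection and the preservation of consistency are in hand.
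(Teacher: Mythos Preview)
Your proof is correct and follows essentially the same approach as the paper: exploit the equivalence $\Bmc'\psubseteq\Bmc \iff \mn{tp}(\Bmc')\subseteq\mn{tp}(\Bmc)$ together with the fact that $\Bmc$ and $\mn{tp}(\Bmc)$ have the same models, and translate the maximality/minimality conditions back and forth.

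One simplification worth noting: the paper avoids introducing the normalization map $\mn{nf}$ altogether. In the forward direction for repairs, when a strict superset $S$ of $\mn{tp}(\Rmc)$ inside $\mn{tp}(\Dmc)$ is $\Pi$-consistent, the paper uses $S$ itself (a set of punctual facts, not in normal form) as the witness $\Rmc'$ contradicting maximality of $\Rmc$. This works because the definition of $p$-repair only requires $\Rmc$ to be in normal form, not the competing $\Rmc'$. Consequently, the step you flag as ``the only subtle point''---well-definedness of $\mn{nf}$ over bounded $\Zbb$---is not actually needed for the argument, and the proof goes through without it. Your version is still correct, just slightly less economical.
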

\begin{proof}
Let $\Rmc\in\preps{\Dmc,\Pi}$, \ie $\Rmc$ is in normal form, $\Pi$-consistent and such that $\Rmc\psubseteq\Dmc$ and there is no $\Pi$-consistent $\Rmc'$ such that $\Rmc\psubsetneq\Rmc'\psubseteq\Dmc$. 
It follows that $\mn{tp}(\Rmc)$ is $\Pi$-consistent, $\mn{tp}(\Rmc)\subseteq\mn{tp}(\Dmc)$  and there is no $\Pi$-consistent $\Rmc'$ such that $\mn{tp}(\Rmc)\subsetneq\Rmc'\subseteq\mn{tp}(\Dmc)$ (otherwise it would hold that $\Rmc\psubsetneq\Rmc'\psubseteq\Dmc$). 
Hence $\mn{tp}(\Rmc)$ is a $\subseteq$-maximal $\Pi$-consistent subset of $\mn{tp}(\Dmc)$. 

In the other direction, if $\Rmc$ is in normal form and $\mn{tp}(\Rmc)$ is a $\subseteq$-maximal $\Pi$-consistent subset of $\mn{tp}(\Dmc)$, then 
$\Rmc$ is $\Pi$-consistent and such that $\Rmc\psubseteq\Dmc$, and there is no $\Pi$-consistent $\Rmc'$ such that $\Rmc\psubsetneq\Rmc'\psubseteq\Dmc$ (otherwise it would hold that $\mn{tp}(\Rmc)\subsetneq\mn{tp}(\Rmc')\subseteq\mn{tp}(\Dmc)$). 
Hence $\Rmc\in\preps{\Dmc,\Pi}$. 

The case of $p$-conflicts is similar.
\end{proof}

Properties $\Prop{1}$ to $\Prop{4}$ follows from Lemma~\ref{claim-boundedZ} as in the proof of Proposition~\ref{prop:PropSubsetBased}, using the fact that normalization and $\mn{tp}$ do not change set finiteness when applied in the context of $\Tbb=\Zbb$ and bounded intervals.

For Property $\Prop{5}$, let $\alpha@\iota$ be pointwise included in $\Dmc$. First, $\Rmc\models\alpha@\iota$ for every $\Rmc\in\preps{\Dmc,\Pi}$ iff for each $t\in\iota$, $\alpha@\{t\}\in\mn{tp}(\Rmc)$ for every $\Rmc\in\preps{\Dmc,\Pi}$. 
Moreover, $\alpha@\{t\}\in\mn{tp}(\Rmc)$ for every $\Rmc\in\preps{\Dmc,\Pi}$ iff $\alpha@\{t\}\notin\mn{tp}(\Cmc)$ for every $\Cmc\in\pconflicts{\Dmc,\Pi}$. Indeed, if $\alpha@\{t\}\in\mn{tp}(\Cmc)$ for some $p$-conflict $\Cmc$, $\mn{tp}(\Cmc)\setminus\{\alpha@\{t\}\}$ is $\Pi$-consistent thus can be extended then normalized into a $p$-repair $\Rmc$ such that $\Rmc\not\models\alpha@\{t\}$, and if $\alpha@\{t\}\notin\mn{tp}(\Rmc)$ for some $p$-repair $\Rmc$, $\tp(\Rmc)\cup\{\alpha@\{t\}\}$ is $\Pi$-inconsistent and can be reduced then normalized into a $p$-conflict $\Cmc$ such that $\Cmc\models\alpha@\{t\}$. 
It follows that $\Rmc\models\alpha@\iota$ for every $\Rmc\in\preps{\Dmc,\Pi}$ iff $\alpha@\iota$ has an empty pointwise intersection with every $\Cmc\in\pconflicts{\Dmc,\Pi}$.

\section{Proofs for Section~\ref{sec:relationshipSem}}
\ExampleBraveSNotImplyI*
\begin{proof}
We show that for $x\in \{p,i\}$, the only $x$-repair is $\{P@[0,\infty)\}$. 
Let $\Rmc\psubseteq \Dmc$ be in normal form such that 
$Q@\{0\}\in\Rmc$. If $\Rmc$ contains a fact of the form $P@\langle t,\infty)$, then $\Rmc$ is $\Pi_r$-inconsistent. Otherwise, there is $t\in\Tbb$ such that for every $P@\langle t_1,t_2\rangle\in\Rmc$, $t_2<t$, and $\Rmc'=\{P@[0,t], Q@\{0\}\}$ is $\Pi_r$-consistent  
and such that $\Rmc\psubsetneq\Rmc'\xsubseteq\Dmc_r$. Thus $\Rmc$ is not an $x$-repair. 
\end{proof}

\ExampleBraveISnotImplyP*
\begin{proof}
We show that the only $p$-repair is $\{P@[0,\infty)\}$. 
Let $\Rmc\psubseteq\Dmc$ be in normal form such that $Q@\{5\}\in\Rmc$. 
If $\Rmc$ contains a fact of the form $P@\langle t,\infty)$, $\Rmc$ is $\Pi$-inconsistent. Otherwise, there is $t\in\Tbb$ such that for every $P@\langle t_1,t_2\rangle\in\Rmc$, $t_2<t$, and $\Rmc'=\Rmc\cup\{P@\{t\}\}$ is $\Pi$-consistent iff $\Rmc$ is, and is such that $\Rmc\psubsetneq\Rmc'\psubseteq\Dmc$. Hence $\Rmc$ is not a $p$-repair. 
\end{proof}

\section{Proofs for Section~\ref{sec:complexitysubset}}
\PropPSPACESubsetGenChecking*
\begin{proof}
By definition, every dataset $\Bmc\in\sreps{\Dmc,\Pi}\cup\sconflicts{\Dmc,\Pi}$ is such that 
$\Bmc \ssubseteq\Dmc$, hence $\Bmc \subseteq \Dmc$, so the size of $\Bmc$ cannot exceed that of $\Dmc$. 

We recall that for (unrestricted) DatalogMTL programs, consistency checking and query entailment are \pspace-complete \wrt data complexity. This holds both for $\Tbb=\Qbb$ and for $\Tbb=\Zbb$ \cite{DBLP:conf/ijcai/WalegaGKK19,DBLP:conf/kr/WalegaGKK20}. It follows that we can decide in \pspace\ whether a given subset $\Bmc \subseteq \Dmc$ is an $s$-conflict or $s$-repair. Indeed, by Lemma~\ref{lem:subset}, $\Bmc\in\sconflicts{\Dmc,\Pi}$ iff $\Bmc$ is a $\subseteq$-minimal $\Pi$-inconsistent subset of $\Dmc$. This can be tested by (i) checking that $\Bmc$ is $\Pi$-inconsistent, and (ii) 
testing, for each $\alpha@\iota \in \Bmc$, whether $\Bmc \setminus \{\alpha@\iota\}$ is $\Pi$-consistent. Similarly, due to Lemma~\ref{lem:subset}, 
$\Bmc\in\sreps{\Dmc,\Pi}$ iff $\Bmc$ is a $\subseteq$-maximal $\Pi$-consistent subset of $\Dmc$. It therefore suffices to test (i) whether $\Bmc$ is $\Pi$-consistent, and (ii) whether $\Bmc \cup \{\alpha@\iota\}$ is $\Pi$-inconsistent  for each $\alpha@\iota \in \Dmc\setminus\Bmc$. As both procedures involve only a polynomial number of consistency checks, we obtain \pspace\ procedures for recognizing $s$-conflicts and $s$-repairs. 

Given a $\Pi$-inconsistent dataset $\Dmc$, a single $s$-conflict $\Cmc\in\sconflicts{\Dmc,\Pi}$ 
can be obtained by starting with $\Dmc$ and removing facts until any further removal yields a $\Pi$-consistent dataset.
Such a procedure runs in \pspace\ when consistency checking is in \pspace. In a similar manner, we can generate a single $\Rmc\in\sreps{\Dmc,\Pi}$
by starting from the empty set and adding facts from $\Dmc$ until any further addition leads to $\Pi$-inconsistency. Again, due to the need to perform only polynomially many consistency checks, each in \pspace, we obtain a \pspace generation procedure. 

The lower bounds for query entailment under $s$-repair semantics come from the consistent case. Since by Lemma~\ref{lem:subset}, $s$-repairs are $\subseteq$-maximal $\Pi$-consistent subsets of $\Dmc$, we obtain the upper bounds using the following `standard' procedures for brave, CQA and intersection semantics: 
\begin{itemize}
\item To decide if $(\Dmc,\Pi)\bravemodels{s} q(\ans,\iota)$, guess $\Rmc\subseteq\Dmc$ and check that $\Rmc\in\sreps{\Dmc,\Pi}$ 
and that $(\Rmc,\Pi)\models q(\ans,\iota)$.

\item To decide if $(\Dmc,\Pi)\not\cqamodels{s} q(\ans,\iota)$, guess $\Rmc\subseteq\Dmc$ and check that $\Rmc\in\sreps{\Dmc,\Pi}$ and that $(\Rmc,\Pi)\not\models q(\ans,\iota)$.

\item To decide if $(\Dmc,\Pi)\not\intmodels{s} q(\ans,\iota)$, guess $\{\alpha_1@\iota_1,\dots,\alpha_n@\iota_n\}\subseteq\Dmc$ together with $\Bmc_1,\dots,\Bmc_n$ such that $\Bmc_i\subseteq\Dmc$ and $\alpha_i@\iota_i\notin\Bmc_i$ for $1\leq i\leq n$, and check that $\Bmc_i\in\sreps{\Dmc,\Pi}$ for $1\leq i\leq n$ and $(\Dmc\setminus\{\alpha_1@\iota_1,\dots,\alpha_n@\iota_n\},\Pi)\not\models q(\ans,\iota)$.
\end{itemize}
Using $\npspace=\pspace$, we get \pspace\ procedures for the three semantics. 
\end{proof}

\PropPTIMESubsetSizeGen*
\begin{proof}
We can use the same procedures as sketched in the proof of Proposition \ref{prop:PropPSPACESubsetGenChecking} to test whether 
a given subset $\Bmc \subseteq \Dmc$ belongs to $\sconflicts{\Dmc,\Pi}$, or to generate a single $\Bmc \in \sconflicts{\Dmc,\Pi}$, and likewise for $\sreps{\Dmc,\Pi}$. As these procedures involve polynomially many consistency checks, and such checks can be done in \ptime\ for the considered tractable DatalogMTL fragments,
we obtain \ptime\ upper bounds for the $s$-conflict and $s$-repair recognition and generation tasks. 
\end{proof}

\noindent{\textbf{Proposition 8.}}
\emph{For tractable DatalogMTL fragments: query entailment$^1$ under s-brave (resp.~s-CQA, s-intersection) semantics is in \np (resp.~\conp). Matching lower bounds hold in \nonrecDatalog\ and \lineardiamond (and in \corediamond in the case of s-CQA). The lower bounds hold even for bounded datasets and $\Tbb=\Zbb$. }
\begin{proof}
We can use the same procedures as described in the proof of Proposition \ref{prop:PropPSPACESubsetGenChecking} 
to perform query entailment under the $s$-brave, $s$-CQA, and $s$-intersection semantics. As $s$-repairs are of polynomial size (Proposition \ref{prop:PropPSPACESubsetGenChecking}) and  $s$-repair checking is in \ptime\ for all tractable fragments (Proposition \ref{PropPTIMESubsetSizeGen}), we obtain an \np\ procedure for $s$-brave semantics and \conp\ procedures for  $s$-CQA and $s$-intersection semantics. 
\medskip

We now turn to the lower bounds, starting with the case of the $s$-CQA semantics. The proof is by reduction from (UN)SAT. Let $\varphi = c_1\land...\land c_m$ be a conjunction of clauses over variables $v_1, ..., v_n$. 
We define a program $\Pi$ and dataset $\Dmc$ as follows.
\begin{align*}
\Pi = \{&N'(v)\leftarrow \eventp_{[0,\infty)}N(v), \ P'(v)\leftarrow \eventp_{[0,\infty)}P(v),\\ 
&\bot\leftarrow P'(v)\land N'(v),\ Q\leftarrow \eventp_{[0,\infty)}U,
\\&\bot \leftarrow P(v)\land U, \  \bot\leftarrow N(v)\land U
\} \\
\Dmc=\{&P(v_j)@\{2k\}\ |\ v_j\in c_k\} \cup \{N(v_j)@\{2k\}\ |\ \lnot v_j\in c_k\}, \\
& \cup\{U@\{2k\}\ |\ 1\leq k \leq m\}
\end{align*}
Note that since the punctual facts use only even integers, $\Dmc$ is in normal form even if we consider $\Tbb=\Zbb$. 
We show that $\varphi$ is unsatisfiable iff $(\Dmc,\Pi)\cqamodels{s} Q@\{2m\}$. This establishes \conp-hardness of query entailment under $s$-CQA semantics for \nonrecDatalog, \lineardiamond, and \corediamond, since $\Pi$ belongs to all three fragments. Moreover, $\Dmc$ is bounded and uses endpoints from $\Zbb$. \smallskip

\noindent($\Rightarrow$) Assume that $\varphi$ is unsatisfiable, and let $\Rmc$ be an $s$-repair of $\Dmc$ \wrt $\Pi$. Let $\nu$ be the valuation of $v_1, ..., v_n$ such that $\nu(v_j)=\true$ iff $P(v_j)@\{2k\}\in\Rmc$ for some $1\leq k \leq m$. Since $\nu$ does not satisfy $\varphi$, there exists a clause $c_\ell$ such that
\begin{itemize}
\item  $\nu(v_j)=\false$, hence $P(v_j)@\{2\ell\}\notin\Rmc$, for every $v_j\in c_\ell$ 
\item $\nu(v_j)=\true$, hence $P(v_j)@\{2k\}\in\Rmc$ for some $1 \leq k \leq m$, for every $\neg v_j\in c_\ell$. Due to the consistency of $\Rmc$ and the rules 
$N'(v)\leftarrow \eventp_{[0,\infty)}N(v)$, $P'(v)\leftarrow \eventp_{[0,\infty)}P(v)$, and $ \bot\leftarrow P'(v)\land N'(v)$, it follows that 
$N(v_j)@\{2\ell\}\notin\Rmc$, for every $\neg v_j\in c_\ell$. 
\end{itemize}
By maximality of $\Rmc$, it follows that $U@\{2\ell\}\in\Rmc$, and by $Q\leftarrow \eventp_{[0,\infty)}U$, we must have $(\Rmc,\Pi)\models Q@\{2m\}$. It follows that $(\Dmc,\Pi)\cqamodels{s} Q@\{2m\}$.
\smallskip

\noindent($\Leftarrow$) In the other direction, assume that $(\Dmc,\Pi)\cqamodels{s} Q@\{2m\}$ and let $\nu$ be a valuation of $v_1, ..., v_n$. Let $\Rmc$ be an $s$-repair that contains every $P(v_j)@\{2k\}$ such that $\nu(v_j)=\true$ and every $N(v_j)@\{2k\}$ such that $\nu(v_j)=\false$. Since $(\Rmc,\Pi)\models Q@\{2m\}$, there exists $1\leq k\leq m$ such that $U@\{2k\}\in\Rmc$. Hence, there is no $v_j$ such that $P(v_j)@\{2k\}$ or $N(v_j)@\{2k\}$ is in $\Rmc$, \ie $\nu$ does not satisfy $c_k$. It follows that $\varphi$ is unsatisfiable. \medskip

For the $s$-brave semantics in \nonrecDatalog, we give a reduction from SAT. 
Given $\varphi = c_1\land...\land c_m$ over variables $v_1, ..., v_n$, 
we consider the following program and dataset:
\begin{align*}
\Pi'  =  \{&N'(v)\leftarrow \eventp_{[0,\infty)}N(v),\  N'(v)\leftarrow \eventf_{[0,\infty)}N(v),
\\&
 \bot\leftarrow P(v)\land N'(v),\ Q'\leftarrow S\ \mathcal{U}_{(0,\infty)} M,
 \\&
 S\leftarrow\eventp_{[0,2)} P(v),\ S\leftarrow\eventp_{[0,2)} N(v)\}\\
\Dmc'  =  \{&P(v_j)@\{2k\}\ |\ v_j\in c_k\}\ \cup\ \{N(v_j)@\{2k\}\ |\ \lnot v_j\in c_k\}\\&
 \cup\ \{M@\{2m+2\}\}
\end{align*}
We show that $\varphi$ is satisfiable iff $(\Dmc',\Pi')\bravemodels{s} Q'@\{2\}$. 
As the program $\Pi'$ is formulated in \nonrecDatalog, this yields the desired \np\ lower bound for \nonrecDatalog. \smallskip

\noindent($\Rightarrow$) Assume that $\varphi$ is satisfiable and let $\nu$ be a valuation of $v_1, ..., v_n$ that satisfies $\varphi$. 
Let $\Rmc$ be an $s$-repair that contains every $P(v_j)@\{2k\}$ such that $\nu(v_j)=\true$ and every $N(v_j)@\{2k\}$ such that $\nu(v_j)=\false$. 
Since $\nu$ satisfies $\varphi$, for every clause $c_k$, either there exists $v_j\in c_k$ such that $\nu(v_j)=\true$, \ie $P(v_j)@\{2k\}\in\Rmc$, or there exists $\neg v_j\in c_k$ such that $\nu(v_j)=\false$, \ie  $N(v_j)@\{2k\}\in\Rmc$. 
By the rules $S\leftarrow\eventp_{[0,2)} P(v)$ and $S\leftarrow\eventp_{[0,2)} N(v)$, it follows that for every $1\leq k\leq m$, $(\Rmc,\Pi')\models S@[2k,2k+2)$, \ie $(\Rmc,\Pi')\models S@[2,2m+2)$. By the rule $Q'\leftarrow S\ \mathcal{U}_{(0,\infty)} M$, it follows that $(\Rmc,\Pi')\models Q'@\{2\}$. Hence $(\Dmc',\Pi')\bravemodels{s} Q'@\{2\}$. \smallskip

\noindent($\Leftarrow$) In the other direction, assume that $(\Dmc',\Pi')\bravemodels{s} Q'@\{2\}$ and let $\Rmc$ be an $s$-repair of $\Dmc'$ \wrt $\Pi'$ such that $(\Rmc,\Pi')\models Q'@\{2\}$. 
Let $\nu$ be the valuation of $v_1, ..., v_n$ such that $\nu(v_j)=\true$ iff $P(v_j)@\{2k\}\in\Rmc$ for some $1\leq k\leq m$. 
Since $(\Rmc,\Pi')\models Q'@\{2\}$ and $M@\{2m+2\}$ is the only occurence of $M$ in $\Dmc'$ and cannot be inferred via $\Pi'$, it must be the case that $(\Rmc,\Pi')\models S@(2,2m+2)$. Again, by examining $\Dmc'$ and $\Pi'$, this can only be achieved if for every $1\leq k\leq m$, there is some $P(v_j)@\{2k\}$ or $N(v_j)@\{2k\}$ in $\Rmc$. 
If $P(v_j)@\{2k\}\in\Rmc$, then $\nu(v_j)=\true$ and if $N(v_j)@\{2k\}\in\Rmc$, 
by consistency of $\Rmc$ \wrt $N'(v)\leftarrow \eventp_{[0,\infty)}N(v)$,  $N'(v)\leftarrow \eventf_{[0,\infty)}N(v)$, and $\bot\leftarrow P(v)\land N'(v)$, then $\nu(v_j)=\false$. 
It follows that $\nu$ satisfies each clause $c_k$, hence the formula $\varphi$. \medskip

To show \conp-hardness of query entailment under the $s$-intersection semantics in \nonrecDatalog, we consider $$\Pi'' = \Pi'\cup\{\bot\leftarrow Q'\land Q''\}\text{ and }\Dmc'' = \Dmc'\cup\{Q''@\{2\}\}.$$ We argue that $(\Dmc'',\Pi'') \intmodels{s}Q''@\{2\}$ iff $\varphi$ is unsatisfiable. \smallskip

\noindent($\Rightarrow$) Assume that $\varphi$ is unsatisfiable, and let $\Rmc$ be an $s$-repair of $\Dmc$ \wrt $\Pi$. Let $\nu$ be the valuation of $v_1, ..., v_n$ such that $\nu(v_j)=\true$ iff $P(v_j)@\{2k\}\in\Rmc$ for some $1\leq k\leq m$. 
As in the $s$-CQA case, since $\nu$ does not satisfy $\varphi$, there exists a clause $c_k$ such that for every $v_j\in c_k$, $P(v_j)@\{2k\}\notin\Rmc$, and for every $\neg v_j\in c_k$, $N(v_j)@\{2k\}\notin\Rmc$. Hence $(\Rmc,\Pi'')\not\models S@\{t\}$ for $t\in[2k,2k+2)$, so $(\Rmc,\Pi'')\not\models S@(2,2m+2)$ and since $M$ occurs only at time $m+1$, $(\Rmc,\Pi'')\not\models Q'@\{2\}$. By maximality of $\Rmc$, it follows that $Q''@\{2\}\in\Rmc$. Hence $(\Dmc'',\Pi'') \intmodels{s}Q''@\{2\}$. \smallskip

\noindent($\Leftarrow$) In the other direction, assume that $(\Dmc'',\Pi'') \intmodels{s}Q''@\{2\}$ and let $\nu$ be a valuation of $v_1, ..., v_n$. Let $\Rmc$ be an $s$-repair that contains every $P(v_j)@\{2k\}$ such that $\nu(v_j)=\true$ and every $N(v_j)@\{2k\}$ such that $\nu(v_j)=\false$. Since $Q''@\{2\}\in\Rmc$, by the rule $\bot\leftarrow Q'\land Q''$, $(\Rmc,\Pi'')\not\models Q'@\{2\}$. It follows that there exists $t\in(2,2m+2)$ such that $(\Rmc,\Pi'')\not\models S@\{t\}$, which implies that there is some $1\leq k\leq m$ such that there is no $P(v_j)@\{2k\}$ or $N(v_j)@\{2k\}$ in $\Rmc$. It follows that $\nu$ does not satisfy $c_k$. Hence $\varphi$ is unsatisfiable.\medskip

The preceding reductions for $s$-brave and $s$-intersection semantics use programs with $\mathcal{U}$ and thus do not yield hardness results for \lineardiamond. We provide a different but similar reduction from SAT to show \np-hardness of query entailment under $s$-brave semantics in  \lineardiamond. 
Starting again from a SAT instance $\varphi = c_1\land...\land c_m$ over variables $v_1, ..., v_n$, we consider the program and dataset: 
\begin{align*}
\Pi'_{\mathsf{lin}}  =  \{&N'(v)\leftarrow \eventp_{[0,\infty)}N(v),\  P'(v)\leftarrow \eventp_{[0,\infty)}P(v),
\\&
 \bot\leftarrow P'(v)\land N'(v),\ S \leftarrow F, 
 \\&
 S\leftarrow P(v) \wedge \eventp_{\{2\}} S,\ S\leftarrow N(v) \wedge \eventp_{\{2\}}S\}\\
\Dmc'_{\mathsf{lin}}  =  \{&P(v_j)@\{2k\}\ |\ v_j\in c_k\}\ \cup\ \{N(v_j)@\{2k\}\ |\ \lnot v_j\in c_k\}\\&
 \cup\ \{F@\{0\}\}
\end{align*}
Using arguments similar to those given previously, it can be shown that $\varphi$ is satisfiable 
iff $(\Dmc'_\mathsf{lin},\Pi'_\mathsf{lin})\bravemodels{s} S@\{2m\}$. Indeed, each repair $\Rmc$ induces
a valuation $\nu_\Rmc$ of $v_1, \ldots, v_n$, and we can use $(\Rmc,\Pi'_\mathsf{lin})\models S@\{2m\}$
and the rules for transmitting $S$ to argue that each clause is satisfied by $\nu_\Rmc$. Conversely, 
every satisfying valuation allows us to build such a repair $\Rmc$ with $(\Rmc,\Pi'_\mathsf{lin})\models S@\{2m\}$.  \medskip

Finally, we adapt the preceding reduction to show \conp-hardness in the case of $s$-intersection semantics in \lineardiamond. Given a CNF 
$\varphi = c_1\land...\land c_m$ over variables $v_1, ..., v_n$, we consider the program and dataset:
$$\Pi''_{\mathsf{lin}}=\Pi'_{\mathsf{lin}} \cup \{\bot \gets S \wedge S'\} \quad \Dmc'_{\mathsf{lin}}= \Dmc'_{\mathsf{lin}} \cup \{S'@\{2m\}\}$$ 
It is not hard to see that $\varphi$ is unsatisfiable 
iff $(\Dmc''_\mathsf{lin},\Pi_\mathsf{lin}'')\intmodels{s} S'@\{2m\}$. Indeed, $S'@\{2m\}$ belongs to the intersection 
of $s$-repairs iff there is no repair that entails $S@\{2m\}$ iff there is no satisfying valuation for $\varphi$. 
\end{proof}

\PropSubsetCoreDiamond*
\begin{proof}
The following properties of \corediamond have been shown in \cite{DBLP:conf/kr/WalegaGKK20,DBLP:conf/ijcai/WalegaGKK20}, for both the integer and rational timelines:
\begin{itemize}
\item If $\Dmc$ is $\Pi$-inconsistent, then there exists $\alpha_1@\iota_1, \alpha_2 @ \iota_2 \in \Dmc$
such that $\{\alpha_1@\iota_1, \alpha_2 @ \iota_2\}$ is $\Pi$-inconsistent, i.e.\ every $s$-conflict contains at most 2 facts. 
\item If $(\Dmc, \Pi) \models \beta@\{t\}$, then either $\Pi \models \beta@\{t\}$ (e.g.\ if $\beta$ is a nullary predicate and there is a rule $\beta \gets \top$) 
or there exists $\alpha@\iota \in \Dmc$
such that $(\{\alpha@\iota\}, \Pi) \models \beta@\{t\}$. 
\end{itemize}
It follows that to decide whether $(\Dmc, \Pi) \bravemodels{s} \beta@\{t\}$ (note that here we focus on queries with \emph{punctual} intervals)
it suffices to check whether:
\begin{itemize}
\item $\Pi \models \beta@\{t\}$, or 
\item there exists $\alpha@\iota \in \Dmc$ such that $\{\alpha@\iota\}$ is $\Pi$-consistent and 
$(\{\alpha@\iota\}, \Pi) \models \beta@\{t\}$
\end{itemize}
Indeed, a $\Pi$-consistent $\{\alpha@\iota\}\subseteq\Dmc$ can always be greedily extended into a $s$-repair of $\Dmc$ \wrt $\Pi$. 
Thus, we need only perform a linear number of consistency and punctual query entailment checks.
Such checks being possible in \ptime\ for \corediamond, we obtain a \ptime procedure for punctual query entailment under $s$-brave semantics. 

We can proceed similarly for $s$-intersection semantics. To decide whether $(\Dmc, \Pi) \intmodels{s} \beta@\{t\}$, we must check whether:
\begin{itemize}
\item $\Pi \models \beta@\{t\}$, or 
\item there exists $\alpha@\iota \in \Dmc$ such that
\begin{itemize}
\item $\{\alpha@\iota\}$ is $\Pi$-consistent, 
\item $(\{\alpha@\iota\}, \Pi) \models \beta@\{t\}$, and
\item for every $\alpha'@\iota' \in \Dmc$, if $\{\alpha'@\iota'\}$ is $\Pi$-consistent, then the set $\{\alpha@\iota, \alpha'@\iota'\}$ is also $\Pi$-consistent,
\end{itemize}
\end{itemize}
and then return yes if one of these two conditions is satisfied. Note that the described procedure runs in \ptime\ as it involves only quadratically many consistency and entailment checks, and each such check is possible in \ptime. The correctness of this procedure follows from the aforementioned properties of entailment in \corediamond, together with Corollary \ref{CorSubsetBased}, which tells us that $\alpha@\iota \in \Dmc$ belongs to every $s$-repair iff 
it appears in no $s$-conflict, which is checked by the third point of the second condition, since $s$-conflicts contain at most two facts in this case. 
\end{proof}

\subsection{Proof of Proposition \ref{prop:PropSubsetPropFrag}}
We first restate the proposition we wish to prove. 
\PropSubsetPropFrag* 

Let us start by considering a propositional DatalogMTL program $\Pi$, dataset $\Dmc$, and query 
$Q@\langle t_B, t_E \rangle$. Following \cite{DBLP:conf/kr/WalegaGKK20}, we will assume w.l.o.g.\ that $0$ is the 
least integer mentioned in the input dataset $\Dmc$, and will denote by $d$ the largest integer 
mentioned in $\Dmc$ (note that $\Dmc$ may also use $\infty$ or $-\infty$ as endpoints). 
Moreover, as we work over $\Zbb$, 
we may assume w.l.o.g.\ that any interval $\langle t_1, t_2\rangle$ mentioned in $\Dmc$ is such that $\langle = [$ if $t_1 \in \Zbb$ and 
$\rangle = ]$ if $t_2 \in \Zbb$ (i.e. we use '$($' and '$)$' only for endpoints $-\infty$ and $\infty$). 
We further assume, as we have done throughout the paper, that $\Dmc$ is in normal form. 
Finally, we assume w.l.o.g.\ that all propositions in $\Dmc$, as well as the query proposition $Q$, occur in $\Pi$. 

To simplify the presentation, we focus on the case in which the query is $Q@[t_B, t_E]$, where the query interval has endpoints $t_B, t_E$
such that $0 < t_B \leq t_E < d$ and uses square brackets. 
The proof is straightforwardly adapted to handle queries using $-\infty$ or $\infty$ as endpoints,
or integers outside the range $(0,d)$. 

The starting point for the proof will be the original and succinct NFA constructions
from \cite{DBLP:conf/kr/WalegaGKK20}, which were used to show that, when $\Tbb=\Zbb$, 
consistency checking in propositional DatalogMTL is in $\mathsf{NC}^1 \subseteq \ptime$. 
We will slightly modify these consistency-checking NFAs to make them better suited to our purposes
and then use the modified succinct NFA as a building block for constructing more complex NFAs 
for testing query entailment under $s$-brave, $s$-CQA, and $s$-intersection semantics. 

As part of our procedure, we will need to test query entailment under 
classical semantics, which will be done via a standard reduction to consistency checking,
using the program 
$$\Pi^Q = \Pi \cup \{Q' \gets Q \, \Umc_{[0,\infty)}\, E, \bot \gets B \wedge Q'\}$$
where $Q',B,E$ are fresh propositions that do not occur in $\Pi$ nor $\Dmc$. 
The reduction is provided in the following lemma.

\begin{lemma}\label{lem:red-query-consist}
For every dataset $\Dmc^\dagger$ such that $\Dmc^\dagger$ is $\Pi$-consistent, the following are equivalent\footnote{We need to use 
$t_B-1$ and $t_E+1$ rather than $t_B$ and $t_E$ due to the semantics of $\Umc$, which considers the open interval $(t,t')$, equivalent over $\Zbb$ to $[t+1, t'-1]$, in between the considered timepoints $t,t'$. }:
\begin{enumerate}
\item $(\Dmc^\dagger, \Pi) \models Q@[t_B, t_E]$ 
\item $\Dmc^\dagger \cup \{B@\{t_B-1\}, E@\{t_E+1\}\}$ is $\Pi^Q$-inconsistent
\end{enumerate}
\end{lemma}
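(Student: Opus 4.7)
The plan is to prove the two directions separately, exploiting the freshness of the propositions $B$, $E$, $Q'$ and a careful reading of the semantics of $\Umc_{[0,\infty)}$ over the integer timeline.

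For the $(1)\Rightarrow(2)$ direction, I would reason semantically. Let $\mathfrak{M}$ be an arbitrary model of $\Pi^Q$ and $\Dmc^\dagger \cup \{B@\{t_B-1\}, E@\{t_E+1\}\}$. Since $\Pi \subseteq \Pi^Q$, $\mathfrak{M}$ is in particular a model of $\Pi$ and $\Dmc^\dagger$, so by assumption $\mathfrak{M}, t \models Q$ for every $t \in [t_B, t_E]$. Over $\Zbb$, we have $(t_B-1, t_E+1) = [t_B, t_E]$, so with $t' := t_E+1$ we get $t' - (t_B-1) = t_E - t_B + 2 \in [0,\infty)$, $\mathfrak{M}, t' \models E$, and $\mathfrak{M}, s \models Q$ for every $s \in (t_B-1, t')$. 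By definition of $\Umc_{[0,\infty)}$, this yields $\mathfrak{M}, t_B-1 \models Q \, \Umc_{[0,\infty)} \, E$, and the rule $Q' \gets Q \, \Umc_{[0,\infty)}\, E$ then forces $\mathfrak{M}, t_B-1 \models Q'$. Combined with $\mathfrak{M}, t_B-1 \models B$ and the rule $\bot \gets B \wedge Q'$, this contradicts $\mathfrak{M}$ being a model, so no such $\mathfrak{M}$ exists.

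For the $(2)\Rightarrow(1)$ direction, I would argue by contrapositive. Suppose $(\Dmc^\dagger, \Pi) \not\models Q@[t_B, t_E]$; then there is a model $\mathfrak{M}$ of $\Pi$ and $\Dmc^\dagger$ and some $t^* \in [t_B, t_E]$ with $\mathfrak{M}, t^* \not\models Q$. I would extend $\mathfrak{M}$ to an interpretation $\mathfrak{M}^+$ over the enlarged signature by setting $B$ true exactly at $\{t_B-1\}$, $E$ true exactly at $\{t_E+1\}$, and $Q'$ true at a timepoint $t$ iff there exists $t' \geq t$ such that $\mathfrak{M}^+, t' \models E$ and $\mathfrak{M}^+, s \models Q$ for all $s \in (t, t')$. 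Because $B, E, Q'$ do not occur in $\Pi$ and $\mathfrak{M}^+$ agrees with $\mathfrak{M}$ on the old vocabulary, $\mathfrak{M}^+$ remains a model of $\Pi$ and $\Dmc^\dagger$. The new definition of $Q'$ makes the rule $Q' \gets Q\,\Umc_{[0,\infty)}\,E$ satisfied by construction. It remains to check $\bot \gets B \wedge Q'$: since $B$ only holds at $t_B-1$, it suffices to show $\mathfrak{M}^+, t_B-1 \not\models Q'$. The only timepoint at which $E$ holds is $t_E+1$, so the only candidate $t'$ is $t_E+1$; but $t^* \in (t_B-1, t_E+1)$ and $\mathfrak{M}^+, t^* \not\models Q$, so the witness fails and $Q'$ is false at $t_B-1$. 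Thus $\mathfrak{M}^+$ is a model of $\Pi^Q$ and the augmented dataset, showing $\Pi^Q$-consistency.

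The bulk of the work is routine unfolding of semantics; the only genuinely delicate point is the interval translation $(t_B-1, t_E+1) = [t_B, t_E]$ that relies on the integer timeline (this is precisely why the lemma's statement offsets the endpoints). The hypothesis that $\Dmc^\dagger$ is $\Pi$-consistent is not actually used in either direction of the argument, but it is consistent with how the lemma will be applied in the broader proof, since it ensures that the only possible source of $\Pi^Q$-inconsistency in the extended dataset is the interaction between the new facts and the two new rules.
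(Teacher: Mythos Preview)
The paper does not supply a proof for this lemma; it is stated as a standard reduction from query entailment to consistency checking and then used as a black box. Your argument is correct and is exactly the kind of direct semantic verification one would expect here: the $(1)\Rightarrow(2)$ direction unfolds the $\Umc_{[0,\infty)}$ semantics at $t_B-1$ using the witness $t'=t_E+1$, and the $(2)\Rightarrow(1)$ direction builds a countermodel by extending a witness $\mathfrak{M}$ for non-entailment to the fresh symbols $B$, $E$, $Q'$ in the obvious minimal way. Your observation that the $\Pi$-consistency hypothesis on $\Dmc^\dagger$ is not actually needed for the equivalence (both sides hold vacuously when $\Dmc^\dagger$ is $\Pi$-inconsistent) is also correct; the hypothesis simply reflects the intended use of the lemma on repairs, which are $\Pi$-consistent by definition.
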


In what follows, we will denote by $\Dmc^Q$ the dataset $\Dmc \cup \{S@\{t_B-1\}, E@\{t_E+1\}\}$. 
We will denote by $\props(\Pi)$ and $\props(\Pi^Q)$ 
the set of propositions appearing in $\Pi$ and in $\Pi^Q$ respectively.
We will use $\lit(\Pi)$ (resp.\ $\lit(\Pi^Q)$) for the (finite!) set of literals 
without nested temporal operators, with all propositions drawn from $\props(\Pi)$ 
(resp.\ $\props(\Pi^Q)$) 
and all integer endpoints bounded by the maximal integer in $\Pi$ (which by our 
assumptions is the same in $\Pi^Q$). 

The following definition, taken from 
\cite{DBLP:conf/kr/WalegaGKK20}, introduces consistent sets of literals and a successor relation between 
them, which will be used when defining automata states and transitions. 

\begin{definition}
A set $q \subseteq \lit(\Pi)$ is $\Pi$-consistent if there is a model $\mathfrak{M}$ of $\Pi$ such that 
$\mathfrak{M},0 \models L$ iff $L \in q$, for each $L \in \lit(\Pi)$. We use $\con(\Pi)$ for the 
set of all $\Pi$-consistent subsets of $\lit(\Pi)$. 

For $q,q' \in \con(\Pi)$, $q'$ is a $\Pi$-successor of $q$ if there is a model $\mathfrak{M}$ of $\Pi$
such that for all $L \in \lit(\Pi)$:
\begin{itemize}
\item $\mathfrak{M},0 \models L$ iff $L \in q$,  and
\item $\mathfrak{M},1 \models L$ iff $L \in q'$.
\end{itemize}
We can define $\con(\Pi^Q)$ and $\Pi^Q$-successors in the same way, simply 
considering $\Pi^Q$ in place of $\Pi$. 
\end{definition}

In the original construction, the input dataset is associated with a word 
over the alphabet $\mathcal{P}(\props(\Pi))$,
 with each letter specifying the set of propositions that are stated to hold at the considered timepoint.
For our purposes, it will be convenient to work with a larger alphabet:
$$\Sigma= \mathcal{P}(\Sigma_\mathsf{prop}^+) 
\cup \mathcal{P}(\Sigma_{-\infty}) \cup \mathcal{P}(\Sigma_{\infty})$$
where the component alphabets of annotated propositions are defined as follows:
\begin{itemize}
\item $\Sigma_\mathsf{prop}^+ = \props(\Pi) \cup \Sigma_\open \cup \Sigma_\close \cup \{B,E\}$
\item $\Sigma_{-\infty} = \{P_{-\infty} \mid P \in \props(\Pi)\}$
\item $\Sigma_{\infty} = \{P_{\infty} \mid P \in \props(\Pi)\}$
\item $\Sigma_\open = \{P_\open \mid P \in \props(\Pi)\}$
\item $\Sigma_\close= \{P_\close \mid P \in \props(\Pi)\}$
\end{itemize}
Intuitively, $P_\open$ and $P_\close$ are used to indicate the start and end of an interval of a $P$-fact, 
$P_{-\infty}$ indicates a $P$-fact with endpoint $-\infty$, and $P_\infty$ indicates a $P$-fact with endpoint $\infty$. 
Given any $\sigma \in \Sigma$, we denote by $\sigma^-$ the set $\sigma \cap \props(\Pi)$. 
In this manner, we can `convert' 
any letter in our expanded alphabet into a symbol in the original alphabet. 
We will further introduce $\sigma^-_Q$ as the set $\sigma \cap (\props(\Pi) \cup  \{B,E\})$. 

The following definition shows how we will construct the word $w_{\Pi,\Dmc,Q}$
from the input dataset $\Dmc$ (which has minimum and maximum integer endpoints $0$ and $d$)
and query $Q@[t_B,t_E]$
over this expanded alphabet:
\begin{definition}\label{def-word1}
The word $w_{\Pi,\Dmc,Q}$ is $\sigma_{-1} \sigma_0 \ldots \sigma_d \sigma_{d+1} \sigma_{d+2}$, where:
\begin{itemize}
\item $\sigma_{-1} \subseteq \Sigma_{-\infty}$ is the set of $P_{-\infty}$ such that $\Dmc$ contains some fact of the form $P@(-\infty,m]$ or  $P@(-\infty, \infty)$
\item $\sigma_{d+1} \subseteq \Sigma_\mathsf{prop}^+$ is the set of $P \in \props(\Pi)$ such that $\Dmc$ contains some fact of the form $P@[m,\infty)$ or  $P@(-\infty, \infty)$
\item $\sigma_{d+2} \subseteq \Sigma_{\infty}$ is the set of $P_{\infty}$ such that $\Dmc$ contains some fact of the form $P@[m,\infty)$ or  $P@(-\infty, \infty)$
\item for every $0 \leq j \leq d$, $\sigma_{j} \subseteq \Sigma_\mathsf{prop}^+$ is the set comprising all: 
\begin{itemize}
\item $P \in \props(\Pi)$ such that $ \Dmc \models P@\{j\}$
\item $P_\open$ such that $\Dmc$ contains $P@[j,j']$ or $P@[j,\infty)$
\item $P_\close$ such that $\Dmc$ contains some $P@[j',j]$ or $P@(-\infty,j]$
\item $B$ if $j=t_B-1$
\item $E$ if $j=t_E+1$
\end{itemize}
\end{itemize}
\end{definition}

\begin{remark}\label{rem:word}
If $t\in [0,d]$ is not mentioned in $\Dmc^Q$, then $\sigma_t \subseteq \props(\Pi)$. 
In other words, the propositions from $\Sigma_\open \cup \Sigma_\close \cup \{B,E\}$ can only appear in $\sigma_t$
if timepoint $t$ is explicitly mentioned in $\Dmc^Q$. Further note that if $t < t'$ are mentioned 
in $\Dmc^Q$, but there is no  $t''$ mentioned with $t < t'' < t'$, then $\sigma_j=\sigma_{j'}$
for all $t < j,j' < t' $. 
\end{remark}

We now define NFAs that can check $\Pi$- and $\Pi^Q$-consistency. 
Note that both automata are defined for words 
over the alphabet $\Sigma$, but when reading $\sigma$, it will ignore elements 
of $\sigma$ from $\Sigma_\open \cup \Sigma_\close$. Moreover, the NFA for $\Pi$-consistency
will further ignore $B,E$. 
Later constructions will make use of the full alphabet. 

\begin{definition}
The NFA $\autpi=(S_\Pi, \Sigma, \delta_\Pi, s_0, s_f)$ is defined as follows:
\begin{itemize}
\item the set of states $S_\Pi$ is $\con(\Pi) \cup \{s_0,s_f\}$, with 
 $s_0$ the unique initial state and $s_f$ the unique final state
\item the alphabet is $\Sigma$ 
\item the transition function $\delta_\Pi$ allows all and only  the following transitions: 
\begin{itemize}
\item for  $\sigma \subseteq \Sigma_{-\infty}$ and $q \in \con(\Pi)$: 
\begin{align*}
q \in \delta_\Pi(s_0, \sigma) \quad \text{ iff } \quad \alwaysp_{(-\infty,0]}P \in q 
\, \text{ for each } P_{-\infty} \in \sigma
\end{align*}
\item for $\sigma \subseteq \Sigma_\mathsf{prop}^+$, $q \in \con(\Pi)$,
and $q' \in \con(\Pi)$:
\begin{align*}
q' \in \delta_\Pi(q,\sigma) \quad \text{ iff } \quad q' \text{ is $\Pi$-successor of } q \text{ and }   
\sigma^- \subseteq q'
\end{align*}
\item for $\sigma \subseteq \Sigma_{\infty}$ and $q \in \con(\Pi)$: 
\begin{align*}
s_f \in \delta_\Pi(q,\sigma)\quad \text{ iff }\quad \alwaysf_{[0,\infty)} P \in q \text{ for every } P_\infty \in \sigma
\end{align*}
\end{itemize}
\end{itemize}
\end{definition}

The NFA for $\Pi^Q$-consistency is essentially the same but uses $\con(\Pi^Q)$
and $\sigma^-_Q$ in place of $\con(\Pi)$ and $\sigma^-$:  

\begin{definition}
The NFA $\autpiq=(S_\Pi^Q, \Sigma, \delta_\Pi^Q, s_0, s_f)$ is defined as follows:
\begin{itemize}
\item the set of states $S_\Pi^Q$ is $\con(\Pi^Q) \cup \{s_0,s_f\}$, with 
 $s_0$ the unique initial state and $s_f$ the unique final state
\item the alphabet is $\Sigma$ 
\item the transition function $\delta_\Pi^Q$ allows all and only  the following transitions: 
\begin{itemize}
\item for  $\sigma \subseteq \Sigma_{-\infty}$ and $q \in \con(\Pi^Q)$,
\begin{align*}
q \in \delta_\Pi^{{Q}}(s_0, \sigma) \quad \text{ iff } \quad \alwaysp_{(-\infty,0]}P \in q 
\, \text{ for each } P_{-\infty} \in \sigma
\end{align*}
\item for $\sigma \subseteq \Sigma_\mathsf{prop}^+$, $q \in \con(\Pi^Q)$,
and $q' \in \con(\Pi^Q)$:
\begin{align*}
q' \in\delta_\Pi^{{Q}}(q,\sigma) \quad \text{ iff } \quad q' \text{ is $\Pi^Q$-successor of } q \text{ and }   
\sigma^-_Q \subseteq q'
\end{align*}
\item for $\sigma \subseteq \Sigma_{\infty}$ and $q \in \con(\Pi^Q)$: 
\begin{align*}
s_f \in \delta_\Pi^{{Q}}(q,\sigma)\quad \text{ iff }\quad \alwaysf_{[0,\infty)} P \in q \text{ for every } P_\infty \in \sigma
\end{align*}
\end{itemize}
\end{itemize}
\end{definition}

The NFAs $\autpi$ and $\autpiq$ are largely similar to the NFA $\Amc_{\Pi,\Dmc}$ given in \cite{DBLP:conf/kr/WalegaGKK20}.
The key difference is that we use a single initial and final state and use the first and the final 
letters of the word $w_{\Pi,\Dmc,Q}$ to ensure that the timepoints before $0$ and after $d$ are correctly 
handled. In the original construction,
a \emph{set} of initial and final states was used, 
and which states were taken to be initial / final depended on $\Dmc$.
Our modification ensures that $\autpi$ and $\autpiq$ can be constructed fully independently from~$\Dmc$. 
By adapting the analogous 
result (Lemma 6) in  \cite{DBLP:conf/kr/WalegaGKK20}, 
we obtain the following: 

\begin{lemma}
$\Dmc$ is $\Pi$-consistent iff
$\autpi$ accepts $w_{\Pi,\Dmc,Q}$. 
Likewise, $\Dmc^Q$ is $\Pi^Q$-consistent 
iff $\autpiq$ accepts $w_{\Pi,\Dmc,Q}$. 
\end{lemma}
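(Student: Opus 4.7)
The plan is to follow the standard correspondence between accepting runs of succinct consistency NFAs and models of the program--dataset pair, as in the proof of Lemma~6 of \cite{DBLP:conf/kr/WalegaGKK20}, adapted to our expanded alphabet and single-state boundaries. I would prove the $\autpi$ case in detail; the $\autpiq$ case follows verbatim after replacing $\Pi$, $\sigma^-$, and $\Dmc$ by $\Pi^Q$, $\sigma^-_Q$, and $\Dmc^Q$, since $B$ and $E$ enter the transitions only through $\sigma^-_Q$ and thus behave as ordinary propositions pinned at their two timepoints.

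For the forward direction, suppose $\mathfrak{M}\models(\Dmc,\Pi)$ and, for each position $j$ at which $w_{\Pi,\Dmc,Q}$ is indexed, set $q_j=\{L\in\lit(\Pi)\mid\mathfrak{M},j\models L\}$. Using the time-shifted interpretation $\mathfrak{M}^{+j}$ as a witness, each $q_j$ lies in $\con(\Pi)$ and $q_{j+1}$ is a $\Pi$-successor of $q_j$. Furthermore, $\sigma_j^-\subseteq q_j$ for every interior position $j$, because $P\in\sigma_j^-$ is declared by Definition~\ref{def-word1} exactly when $\Dmc\models P@\{j\}$. The two boundary transitions are then allowed: each $P_{-\infty}\in\sigma_{-1}$ comes from a fact $P@(-\infty,m]$ or $P@(-\infty,\infty)$ in $\Dmc$, so $\mathfrak{M},0\models\alwaysp_{(-\infty,0]}P$ and hence $\alwaysp_{(-\infty,0]}P\in q_0$; symmetrically, each $P_\infty\in\sigma_{d+2}$ forces $\alwaysf_{[0,\infty)}P\in q_{d+1}$. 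Concatenating these transitions yields an accepting run.

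For the backward direction, given an accepting run $s_0,q_0,\ldots,q_{d+1},s_f$, the goal is to construct a model of $(\Dmc,\Pi)$ by gluing the witness models that testify to the $\Pi$-successor relations, obtaining an interpretation $\mathfrak{M}$ of $\Pi$ on $[0,d+1]$ with $\mathfrak{M},j\models L$ iff $L\in q_j$ for every $L\in\lit(\Pi)$ and every $0\le j\le d+1$. This interpretation is then extended to $t<0$ and $t>d+1$ by using the boundary literals: $\alwaysp_{(-\infty,0]}P\in q_0$ fixes each relevant $P$ on $(-\infty,0]$, with remaining propositions set so as not to trigger any rule of $\Pi$, and analogously for $t>d+1$ via $\alwaysf_{[0,\infty)}P\in q_{d+1}$. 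Satisfaction of $\Dmc$ then follows case by case: facts with bounded endpoints in $[0,d]$ are guaranteed by $\sigma_j^-\subseteq q_j$ for every $j$ in the fact's interval (together with the normal form of $\Dmc$, cf.~Remark~\ref{rem:word}), while facts with a $\pm\infty$ endpoint combine this with the boundary literals.

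The main obstacle is the backward direction, and specifically the construction of a single model of $\Pi$ on all of $\Zbb$ from the sequence of pairwise-compatible states. The gluing on the finite interval $[0,d+1]$ is the standard ingredient underlying succinct consistency NFAs and should transfer from \cite{DBLP:conf/kr/WalegaGKK20} with only minor bookkeeping changes to reflect the $\Sigma_\open\cup\Sigma_\close$ markers, which are ignored by the transition relation and therefore pose no difficulty. The subtler step is the extension beyond $[0,d+1]$: one must guarantee that no DatalogMTL rule is violated at the seams, which is precisely why the unbounded-endpoint information is encoded at $q_0$ and $q_{d+1}$ via the $\alwaysp_{(-\infty,0]}$ and $\alwaysf_{[0,\infty)}$ literals, giving enough slack to build compatible extensions outside the bounded range.
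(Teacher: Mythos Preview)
Your approach is sound in outline but takes a different route from the paper. The paper does not redo the model--run correspondence from scratch; instead it observes that the transitions of $\autpi$ between states in $\con(\Pi)$ coincide with those of the original automaton $\Amc_{\Pi,\Dmc}$ from \cite{DBLP:conf/kr/WalegaGKK20}, and that the transitions out of $s_0$ (resp.\ into $s_f$) simulate exactly the choice of initial (resp.\ final) states in $\Amc_{\Pi,\Dmc}$. This yields a bijection between accepting runs $s_0\,q_{-1}\,q_0\cdots q_{d+1}\,s_f$ of $\autpi$ on $w_{\Pi,\Dmc,Q}$ and accepting runs $q_{-1}\,q_0\cdots q_{d+1}$ of $\Amc_{\Pi,\Dmc}$ on the original word $w^*_{\Pi,\Dmc}$, after which Lemma~6 of \cite{DBLP:conf/kr/WalegaGKK20} applies directly. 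The $\Pi^Q$ case is handled identically. The advantage of the paper's reduction is that it sidesteps entirely the delicate model-gluing and boundary-extension steps you identify as the main obstacle: those are already absorbed into the cited result, and only a few lines of run-correspondence bookkeeping remain.

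Two small issues in your write-up: first, your accepting run is missing a state. The word $w_{\Pi,\Dmc,Q}$ has $d+4$ letters, so a run has $d+5$ states; with $s_0$ and $s_f$ at the ends, there are $d+3$ intermediate states $q_{-1},q_0,\ldots,q_{d+1}$, not $q_0,\ldots,q_{d+1}$. Correspondingly, the boundary literal $\alwaysp_{(-\infty,0]}P$ must lie in $q_{-1}$ (the state reached after reading $\sigma_{-1}$), not $q_0$. Second, your description of the extension beyond $[0,d+1]$ (``remaining propositions set so as not to trigger any rule'') is where the real work of Lemma~6 lies, and your sketch does not make clear how to guarantee $\Pi$-satisfaction at the seams; if you pursue the direct route you would need to spell this out, which is exactly what the paper avoids by reducing.
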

\begin{proof}
We briefly describe the relationship between our construction and the one in \cite{DBLP:conf/kr/WalegaGKK20}. 
Let $w_{\Pi,\Dmc,Q} = \sigma_{-1} \sigma_0 \ldots \sigma_d \sigma_{d+1} \sigma_{d+2}$ be the word associated
with $\Dmc$ and the query according to Definition \ref{def-word1}. In the construction of \cite{DBLP:conf/kr/WalegaGKK20},
no query is considered, and 
one uses instead the word  $\sigma_0^- \ldots \sigma^-_d \sigma^-_{d+1}$, which we shall denote by $w_{\Pi,\Dmc}^*$. 
The NFA $\Amc_{\Pi,\Dmc}$ defined in \cite{DBLP:conf/kr/WalegaGKK20} has states
$\con(\Pi)$ and alphabet $\Sigma_\textsf{prop}$. 
Comparing the definitions of $\autpi$ and $\Amc_{\Pi,\Dmc}$,
we can observe that the transitions of $\autpi$ between 
states $q,q' \in \con(\Pi)$  precisely mirror the transitions of $\Amc_{\Pi,\Dmc}$. 
Moreover, our transitions from $s_0$ to $q \in \con(\Pi)$ simulate the set of initial states in $\Amc_{\Pi,\Dmc}$,
which ensures that $q_{-1} \in \delta_\Pi(s_0, \sigma_{-1})$ iff 
$q_{-1}$ is an initial state of $\Amc_{\Pi,\Dmc}$. 
Likewise, the transitions to $s_f$ simulate the final states in $\Amc_{\Pi,\Dmc}$:
$s_f \in \delta_\Pi(q_{d+1}, \sigma_{d+2})$  iff
$q_{d+1}$ is a final state of $\Amc_{\Pi,\Dmc}$. 
From the preceding observations, one can show that every accepting run
$s_0 q_{-1} q_0 q_1 \ldots q_d q_{d+1} s_f$ of $\autpi$ on $w_{\Pi,\Dmc,Q}$
gives rise to the accepting run $q_{-1} q_0 \ldots q_d q_{d+1}$ 
of $\Amc_{\Pi,\Dmc}$ on $w_{\Pi,\Dmc}^*$, and vice-versa. 
It follows that 
 $\autpi$ accepts $w_{\Pi,\Dmc,Q}$ iff
$\Amc_{\Pi,\Dmc}$ accepts $w_{\Pi,\Dmc}$. 
Moreover, by Lemma 6 of  \cite{DBLP:conf/kr/WalegaGKK20}, the latter holds 
iff $\Dmc$ is $\Pi$-consistent, which establishes the first statement. 

For $\autpiq$, we can apply the same construction 
but starting from the program $\Pi^Q$, 
in which case $\Amc_{\Pi^Q,\Dmc^Q}$ 
will use the set of states $\con(\Pi^Q)$ and alphabet $\mathcal{P}(\props(\Pi^Q))$.
The word $w_{\Pi^Q,\Dmc^Q}^*$ associated with $\Dmc^Q$ will additionally include 
$B$ and $E$ at positions $t_B-1$ and $t_E+1$, i.e. $\sigma^-_{t_B - 1}$ is replaced by 
$(\sigma_{t_B - 1})^{-}_Q$ and $\sigma^-_{t_E + 1}$ by $(\sigma_{t_E + 1})^{-}_Q$.
We can again notice that the transitions of $\autpi^Q$ between 
states $q,q' \in \con(\Pi^Q)$  corresponds to the transitions of $\Amc_{\Pi^Q,\Dmc^Q}$,
and that the transitions from $s_0$ and to $s_f$ simulate the initial and final states of 
$\Amc_{\Pi^Q,\Dmc^Q}$. We can show in this manner that 
 $\autpiq$ accepts $w_{\Pi,\Dmc,Q}$ iff
$\Amc_{\Pi^Q,\Dmc^Q}$ accepts $w_{\Pi^Q,\Dmc^Q}^*$ iff  $\Dmc^Q$ is $\Pi^Q$-consistent
(the final equivalence is due to Lemma 6 of  \cite{DBLP:conf/kr/WalegaGKK20}).
\end{proof}

The preceding lemma, like the corresponding Lemma 6 from \cite{DBLP:conf/kr/WalegaGKK20}, 
does not yield a \ptime\ procedure
since the word that encodes $\Dmc$ may have size exponential in $\Dmc$, 
due to the binary encoding of endpoint integers. 
In \cite{DBLP:conf/kr/WalegaGKK20}, 
this issue was overcome by devising a polynomial-size representation
$\overline{w}_{\Pi,\Dmc}^*$ of $w_{\Pi,\Dmc}^*$ (recall that we are using $w_{\Pi,\Dmc}^*$ to refer to the word encoding from that paper) 
and a so-called `succinct' NFA
 $\overline{\Amc}_{\Pi,\Dmc}$, both over an extended alphabet, with the property that 
 $\Amc_{\Pi,\Dmc}$ accepts $w_{\Pi,\Dmc}^*$ iff  $\overline{\Amc}_{\Pi,\Dmc}$
 accepts $\overline{w}_{\Pi,\Dmc}^*$.
 The main idea is that while the word $w_{\Pi,\Dmc}^*$ may be exponentially long,
 it can be split into polynomially many segments, and within each segment, the same 
 propositions are asserted (cf.\ Remark \ref{rem:word}). 
 
 As we will adopt the same technique for our constructions, we recall the 
 main elements of the approach. First, the following lemma is proven to 
 characterize when it is possible to move from a state $q$ to $q'$ when 
 reading a word consisting of a single repeated symbol. We have slightly
adapted the formulation of the original result (Lemma 7 from \cite{DBLP:conf/kr/WalegaGKK20}) 
so that it uses our notation and our modified base NFAs and alphabet.

\begin{lemma}\label{lem:pairs}
For all $q,q' \in \con(\Pi)$ and $\sigma \subseteq \Sigma_\mathsf{prop}^+$,
it is possible to construct a finite set $T_\Pi(q,q',\sigma)$
of pairs of non-negative integers  such that the following are equivalent, for all $\ell \in \mathbb{N}$:
\begin{enumerate}
\item there is a run of $\autpi$ from $q$ to $q'$ on word $\sigma^\ell$ 
\item $\ell = a + n \cdot b$ for some $(a,b) \in T_\Pi(q,q',\sigma)$ and $n \in \mathbb{N}$
\end{enumerate}
We can similarly construct a set $T_\Pi^Q(q,q',\sigma)$, for $q,q' \in \con(\Pi^Q)$,  
which satisfies the analogous properties with $\autpi$ replaced by $\autpiq$. 
\end{lemma}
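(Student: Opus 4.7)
The plan is to reduce the lemma directly to Lemma~7 of \cite{DBLP:conf/kr/WalegaGKK20}, using the observation that on letters $\sigma \subseteq \Sigma_\mathsf{prop}^+$ our modified NFA $\autpi$ behaves exactly like the original consistency-checking NFA $\Amc_{\Pi,\Dmc}$ on the letter $\sigma^- \in \mathcal{P}(\props(\Pi))$. First I would verify that no transitions on letters in $\Sigma_\mathsf{prop}^+$ involve the extra states $s_0$ or $s_f$: by construction, transitions from $s_0$ require a letter in $\mathcal{P}(\Sigma_{-\infty})$ and transitions to $s_f$ require a letter in $\mathcal{P}(\Sigma_{\infty})$. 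Hence every run of $\autpi$ on a word of the form $\sigma^\ell$ with $\sigma\subseteq\Sigma_\mathsf{prop}^+$ proceeds entirely inside $\con(\Pi)$, using precisely the transitions $q' \in \delta_\Pi(q,\sigma)$, which unfold to the condition that $q'$ is a $\Pi$-successor of $q$ with $\sigma^- \subseteq q'$ -- the same condition governing $\Amc_{\Pi,\Dmc}$ on letter $\sigma^-$. The existence and construction of $T_\Pi(q,q',\sigma)$ then transfers verbatim from the original lemma.

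For a self-contained picture, I would sketch the underlying finite-graph argument. Introduce the digraph $G_{\Pi,\sigma}$ on vertex set $\con(\Pi)$ with an edge $q \to q'$ whenever $q' \in \delta_\Pi(q,\sigma)$. A run of $\autpi$ from $q$ to $q'$ on $\sigma^\ell$ is exactly a length-$\ell$ walk from $q$ to $q'$ in $G_{\Pi,\sigma}$. Since $G_{\Pi,\sigma}$ is finite, any sufficiently long walk revisits a vertex and can be decomposed as a simple path from $q$ to $q'$ together with a finite nonnegative integer combination of simple cycles attached along the way. Enumerating all simple paths from $q$ to $q'$ and all simple cycles reachable from the vertices they traverse yields finitely many pairs $(a,b)$ -- with $a$ the length of a base walk and $b$ the length of a cycle whose iteration one wishes to count -- whose union of arithmetic progressions $\{a + n \cdot b : n \in \mathbb{N}\}$ exactly coincides with the set of attainable $\ell$. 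This yields the required finite set $T_\Pi(q,q',\sigma)$ and shows it is effectively computable.

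For the second claim, the construction of $T_\Pi^Q(q,q',\sigma)$ is strictly analogous, with $\Pi^Q$ replacing $\Pi$, $\con(\Pi^Q)$ replacing $\con(\Pi)$, and the transition condition $\sigma^-_Q \subseteq q'$ replacing $\sigma^- \subseteq q'$. Again, letters in $\Sigma_\mathsf{prop}^+$ do not interact with $s_0$ or $s_f$, so all relevant runs stay inside $\con(\Pi^Q)$, and the same graph-theoretic decomposition applies.

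The main obstacle, such as it is, is purely notational: ensuring that the alphabet enrichment from $\mathcal{P}(\props(\Pi))$ to $\Sigma$ and the splitting of initial/final states into a single pair $(s_0, s_f)$ do not affect the internal transitions used by Lemma~7. Once this alignment is checked, the rest is a well-understood classical fact about unary reachability in finite automata, and no substantive new combinatorics is needed.
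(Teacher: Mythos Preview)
Your proposal is correct and aligns with the paper's own treatment: the paper does not give an independent proof of this lemma but simply states that it is an adaptation of Lemma~7 of \cite{DBLP:conf/kr/WalegaGKK20} to the modified NFAs and alphabet, which is exactly the reduction you carry out. Your additional self-contained sketch of the unary-reachability argument (walk-length sets in a finite digraph are semilinear) goes beyond what the paper provides, though the decomposition you describe via simple paths plus iterated simple cycles is a bit informal---the clean statement is that the set of walk lengths from $q$ to $q'$ is an eventually periodic subset of $\mathbb{N}$, hence a finite union of arithmetic progressions---but the conclusion and the effectiveness claim are standard and correct.
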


The succinct NFAs and succinct word encoding can now be defined as follows. Our definitions adapt Definition 8 in \cite{DBLP:conf/kr/WalegaGKK20} to our setting:

 \begin{definition}
 Let $\autpi$ 
 and $\autpiq$
 be the previously defined NFAs. Define the following sets of pairs
 \begin{align*}
  \paths(\Pi) =& \{ (a,b) \mid (a,b) \in T_\Pi(q,q',\sigma)  \text{ for some } \\
 &q,q' \in \con(\Pi) \text { and } \sigma \subseteq \Sigma_\mathsf{prop}^+\}\\
  \paths(\Pi^Q) =& \{ (a,b) \mid (a,b) \in T_\Pi^Q(q,q',\sigma)  \text{ for some } \\
 &q,q' \in \con(\Pi^Q) \text { and } \sigma \subseteq \Sigma_\mathsf{prop}^+\}
 \end{align*}
and let $\overline{\Sigma}$ be the alphabet defined as follows\footnote{We use the Greek letter $\tau$, with sub- and superscripts,
for the elements of $  \paths(\Pi)$ and $  \paths(\Pi^Q)$, whereas $\sigma$ was used in \cite{DBLP:conf/kr/WalegaGKK20}, which we use instead for elements of $\Sigma$. While we have generally tried to keep notations consistent with the latter paper, some changes were needed to accommodate the many further symbols needed to specify our constructions.}  :
\begin{align*}
\overline{\Sigma}=& \{(\sigma, \tau_\Pi, \tau_\Pi^Q) \mid \sigma \in \Sigma, \tau_\Pi \subseteq \paths(\Pi), \tau_\Pi^Q \subseteq \paths(\Pi^Q) \}
\end{align*}
 \end{definition}
 
 \begin{definition}
 The NFA $\autpisucc = (S_\Pi, \overline{\Sigma}, \overline{\delta}_\Pi, s_0, s_f)$ has a transition function 
 $\overline{\delta}_\Pi$ which allows all and only  the following transitions: 
\begin{itemize}
\item for  $\sigma \subseteq \Sigma_{-\infty}$:
\begin{align*}
\overline{\delta}_\Pi(s_0, (\sigma,\emptyset,\emptyset)) = \delta_\Pi(s_0,\sigma) 
\end{align*}
\item for $\sigma \subseteq \Sigma_\mathsf{prop}^+$, $\tau_\Pi \subseteq \paths(\Pi)$, $\tau_\Pi^Q \subseteq \paths(\Pi^Q)$, and $q \in \con(\Pi)$: 
$$\overline{\delta}_\Pi(q, (\sigma,\tau_\Pi,\tau_\Pi^Q)) = \{q' \in \con(\Pi) \mid \tau_\Pi \cap T_\Pi(q,q',\sigma)\neq \emptyset\}$$  
\item for $\sigma \subseteq \Sigma_{\infty}$ and $q \in \con(\Pi)$:
\begin{align*}
\overline{\delta}_\Pi(q, (\sigma,\emptyset, \emptyset)) = \delta_\Pi(q,\sigma) 
\end{align*}
 \end{itemize}
 The transition function $\overline{\delta}_\Pi^Q$ of the  NFA $\autpiqsucc= (S_\Pi^Q, \overline{\Sigma}, \overline{\delta}_\Pi^Q, s_0, s_f)$ is defined 
as follows:
 \begin{itemize}
\item for  $\sigma \subseteq \Sigma_{-\infty}$:
\begin{align*}
\overline{\delta}_\Pi^{{Q}}(s_0, (\sigma,\emptyset,\emptyset)) = \delta_\Pi^{{Q}}(s_0,\sigma) 
\end{align*}
\item for $\sigma \subseteq \Sigma_\mathsf{prop}^+$, $\tau_\Pi \subseteq \paths(\Pi)$, $\tau_\Pi^Q \subseteq \paths(\Pi^Q)$, and $q \in \con(\Pi^Q)$: 
$$\overline{\delta}_\Pi^{{Q}}(q, (\sigma,\tau_\Pi,\tau_\Pi^Q)) = \{q' \in \con(\Pi^Q) \mid \tau_\Pi^Q \cap T_\Pi^Q(q,q',\sigma)\neq \emptyset\}$$  
\item for $\sigma \subseteq \Sigma_{\infty}$ and $q \in \con(\Pi^{{Q}})$:
\begin{align*}
\overline{\delta}_\Pi^{{Q}}(q, (\sigma,\emptyset, \emptyset)) = \delta_\Pi^{{Q}}(q,\sigma) 
\end{align*}
 \end{itemize}
 \end{definition}

 \begin{definition}\label{def:succ-word}
 Let $0=t_1 < \ldots < t_N=d$ be all of the integers mentioned in $\Dmc$, and let $\varrho_0, \ldots, \varrho_{m+1}$
 be the subsequence of all non-empty intervals in the sequence 
 $$\{t_1\}, (t_1, t_2), \{t_2\}, \ldots, (t_{N-1}, t_N), \{t_N\}, \{t_N+1\}$$ 
 Then we define the word $\overline{w}_{\Pi,\Dmc,Q}$
 as\footnote{A note to readers who compare our definition with its analogue in \cite{DBLP:conf/kr/WalegaGKK20}: we point out that, for purely presentational reasons, we denote the sequence by $\varrho_0, \ldots, \varrho_{m+1}$ (with starting index $0$ and final index $m+1$) rather than $\varrho_1, \ldots, \varrho_m$. } $$ (\sigma_{-1}, \emptyset, \emptyset) (\sigma^*_0, \tau_0, \tau'_0) \ldots (\sigma^*_{m+1}, \tau_{m+1}, \tau_{m+1}') (\sigma_{d+2}, \emptyset, \emptyset)$$ 
with  $\sigma_{-1}$ and $\sigma_{d+2}$ defined as in Definition \ref{def-word1} and for every $1 \leq j \leq m$:
\begin{itemize}
\item if $\varrho_j = \{t_k\}$ for some $1 \leq k \leq N$, then
$\sigma_j^*= \sigma_{t_k}$
\item if $\varrho_j = (t_k, t_{k+1})$ for some $0 \leq k \leq N$, then
$$\sigma^*_j = \{P \in \props(\Pi) \mid \Dmc \models P@(t_k, t_{k+1})\} $$
\item $\tau_j = \{(a,b) \in \paths(\Pi) \mid |\varrho_j| = a+n\cdot b \text{ for some } n \in \mathbb{N} \}$
\item $\tau_j'= \{(a,b) \in \paths(\Pi^Q) \! \mid \! |\varrho_j| = a+n\cdot b \text{ for some } n \in \mathbb{N} \}$
\end{itemize}
where  $|\rho_j|$ is the cardinality of the set of timepoints associated with $\rho_j$.  
 \end{definition}
 
 We can now state the key lemma concerning the succinct NFAs, whose proof is a straightforward adaptation of 
 the analogous  Lemma 9 in \cite{DBLP:conf/kr/WalegaGKK20}, using Lemma~\ref{lem:pairs} and Remark \ref{rem:word}:
 
 \begin{lemma}\leavevmode
  \begin{enumerate}
 \item NFA $\autpi$ accepts $w_{\Pi,\Dmc,Q}$ iff NFA $\autpisucc$ accepts $\overline{w}_{\Pi,\Dmc,Q}$.
 \item NFA $\autpiq$ accepts $w_{\Pi,\Dmc,Q}$ iff NFA $\autpiqsucc$ accepts $\overline{w}_{\Pi,\Dmc,Q}$.
 \end{enumerate}
 \end{lemma}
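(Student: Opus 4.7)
The plan is to establish both parts by exhibiting a step-by-step correspondence between accepting runs of the base NFA on $w_{\Pi,\Dmc,Q}$ and accepting runs of the succinct NFA on $\overline{w}_{\Pi,\Dmc,Q}$. Since part 2 is obtained from part 1 by substituting $\autpi, T_\Pi, \paths(\Pi)$, and $\con(\Pi)$ with $\autpiq, T_\Pi^Q, \paths(\Pi^Q)$, and $\con(\Pi^Q)$ throughout, I would write the argument for part 1 in detail and observe that the same reasoning transfers verbatim to part 2.

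The first step is to notice that the first and last letters of the two words are treated identically by the two automata: by definition of $\overline{\delta}_\Pi$, one has $\overline{\delta}_\Pi(s_0, (\sigma_{-1},\emptyset,\emptyset)) = \delta_\Pi(s_0, \sigma_{-1})$ and $\overline{\delta}_\Pi(q, (\sigma_{d+2},\emptyset,\emptyset)) = \delta_\Pi(q, \sigma_{d+2})$ for every $q \in \con(\Pi)$. It therefore suffices to show that, for any $p, p' \in \con(\Pi)$, there is a run of $\autpi$ from $p$ to $p'$ on the middle infix $\sigma_0 \sigma_1 \cdots \sigma_d \sigma_{d+1}$ iff there is a run of $\autpisucc$ from $p$ to $p'$ on the middle succinct infix $(\sigma^*_0,\tau_0,\tau'_0) \cdots (\sigma^*_{m+1},\tau_{m+1},\tau'_{m+1})$.

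To establish this, I would exploit the fact that the intervals $\varrho_0, \ldots, \varrho_{m+1}$ from Definition \ref{def:succ-word} partition $\{0, 1, \ldots, d+1\}$, and that, by Remark \ref{rem:word} combined with the normal form of $\Dmc$, for every $t$ inside $\varrho_j$ the symbol $\sigma_t$ equals $\sigma^*_j$. Hence the middle of $w_{\Pi,\Dmc,Q}$ factorises as $(\sigma^*_0)^{|\varrho_0|}(\sigma^*_1)^{|\varrho_1|}\cdots(\sigma^*_{m+1})^{|\varrho_{m+1}|}$. For the forward direction, from an accepting succinct run $s_0\, p_{-1}\, p_0\, p_1\, \ldots\, p_{m+1}\, s_f$, the definition of $\overline{\delta}_\Pi$ yields, for each $j$, a pair $(a_j,b_j) \in \tau_j \cap T_\Pi(p_{j-1}, p_j, \sigma^*_j)$; by Definition \ref{def:succ-word}, membership in $\tau_j$ means $|\varrho_j| = a_j + n_j \cdot b_j$ for some $n_j \in \Nbb$, and Lemma \ref{lem:pairs} then supplies a run of $\autpi$ from $p_{j-1}$ to $p_j$ on $(\sigma^*_j)^{|\varrho_j|}$. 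Concatenating these segment runs with the shared endpoint transitions gives an accepting run of $\autpi$ on $w_{\Pi,\Dmc,Q}$.

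For the reverse direction, I would take an accepting run $s_0\, q_{-1}\, q_0\, q_1\, \ldots\, q_d\, q_{d+1}\, s_f$ of $\autpi$, set $p_{-1} := q_{-1}$, and define $p_j$ to be the state $q_t$ where $t$ is the rightmost timepoint of $\varrho_j$. Between $p_{j-1}$ and $p_j$ the run reads exactly $(\sigma^*_j)^{|\varrho_j|}$, so Lemma \ref{lem:pairs} yields a pair $(a_j, b_j) \in T_\Pi(p_{j-1}, p_j, \sigma^*_j)$ with $|\varrho_j| = a_j + n_j b_j$; by Definition \ref{def:succ-word} this pair lies in $\tau_j$, giving the required succinct transition $p_j \in \overline{\delta}_\Pi(p_{j-1},(\sigma^*_j,\tau_j,\tau'_j))$. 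The main obstacle is organisational: one must align the boundary indices of the $\varrho_j$ with the state positions in the long run, and check that the symbols at positions $-1$ and $d+2$ do not interfere with the segment decomposition. The real combinatorial content, namely that path lengths on a repeated letter form a finite union of arithmetic progressions, is already packaged into Lemma \ref{lem:pairs} and can be invoked as a black box.
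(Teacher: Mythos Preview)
Your proposal is correct and follows essentially the same approach as the paper: the paper's proof is a single sentence stating that the result is a straightforward adaptation of Lemma~9 in \cite{DBLP:conf/kr/WalegaGKK20} using Lemma~\ref{lem:pairs} and Remark~\ref{rem:word}, and your argument spells out precisely that adaptation---matching the first and last transitions directly, factorising the middle of $w_{\Pi,\Dmc,Q}$ into constant blocks $(\sigma^*_j)^{|\varrho_j|}$ via Remark~\ref{rem:word}, and invoking Lemma~\ref{lem:pairs} in each direction to pass between runs on a block and membership of $|\varrho_j|$ in the arithmetic progressions encoded by $\tau_j$ (resp.\ $\tau'_j$).
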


Having recalled, and suitably adapted, the main constructions from \cite{DBLP:conf/kr/WalegaGKK20},
we are now ready to build upon these constructions to obtain NFAs that can decide 
query entailment under $s$-brave, $s$-CQA, and $s$-intersection semantics. 

\subsection*{Construction for $s$-brave semantics}
We will start with $s$-brave semantics, as it is the simplest case and will provide useful 
components for tackling the two other semantics. 
The following characterization, which follows from the definition of $s$-brave semantics and Lemma \ref{lem:red-query-consist},
shows us what we need to check: 
\begin{lemma}\label{brave-aut-char}
$(\Dmc,\Pi)\bravemodels{s} Q@[t_B, t_E]$ iff there exists a subset $\Dmc' \subseteq \Dmc$ such that:
\begin{itemize}
\item $\Dmc'$ is $\Pi$-consistent, and
\item $\Dmc' \cup \{B@\{t_B-1\}, E@\{t_E+1\}\}$ is $\Pi^Q$-inconsistent 
\end{itemize}
\end{lemma}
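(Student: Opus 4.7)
The plan is to prove both directions by using Lemma~\ref{lem:red-query-consist} to translate between $\Pi$-entailment of $Q@[t_B,t_E]$ and $\Pi^Q$-inconsistency of the augmented dataset, combined with the standard greedy extension of a $\Pi$-consistent subset to a full $s$-repair (cf.\ the proof of $\Prop{1}$ in Proposition~\ref{prop:PropSubsetBased}).

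For the forward direction, I would assume $(\Dmc,\Pi)\bravemodels{s} Q@[t_B,t_E]$ and pick an $s$-repair $\Rmc\in\sreps{\Dmc,\Pi}$ with $(\Rmc,\Pi)\models Q@[t_B,t_E]$. Setting $\Dmc':=\Rmc$ immediately gives $\Dmc'\subseteq\Dmc$ and $\Pi$-consistency of $\Dmc'$. Applying Lemma~\ref{lem:red-query-consist} to $\Dmc'$ (whose hypothesis of $\Pi$-consistency is satisfied) yields that $\Dmc'\cup\{B@\{t_B-1\}, E@\{t_E+1\}\}$ is $\Pi^Q$-inconsistent, as required.

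For the backward direction, starting from a $\Pi$-consistent $\Dmc'\subseteq\Dmc$ with $\Dmc'\cup\{B@\{t_B-1\}, E@\{t_E+1\}\}$ being $\Pi^Q$-inconsistent, I would first invoke Lemma~\ref{lem:red-query-consist} to conclude $(\Dmc',\Pi)\models Q@[t_B,t_E]$. Then, since $\Dmc$ is in normal form and $\Dmc'\subseteq\Dmc$ is $\Pi$-consistent, I would extend $\Dmc'$ to an $s$-repair $\Rmc$ by greedily adding facts from $\Dmc\setminus\Dmc'$ one by one, skipping any that would break $\Pi$-consistency; by Lemma~\ref{lem:subset} this produces $\Rmc\in\sreps{\Dmc,\Pi}$ with $\Dmc'\subseteq\Rmc$. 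Monotonicity of DatalogMTL entailment then transfers $(\Dmc',\Pi)\models Q@[t_B,t_E]$ to $(\Rmc,\Pi)\models Q@[t_B,t_E]$, establishing $(\Dmc,\Pi)\bravemodels{s} Q@[t_B,t_E]$.

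I do not anticipate a real obstacle: the only mild subtleties are (i) verifying that $\Dmc'$ being a subset of the normal-form dataset $\Dmc$ is itself in normal form, so that the greedy completion to an $s$-repair is well-defined, and (ii) making sure the $\Pi$-consistency hypothesis required by Lemma~\ref{lem:red-query-consist} is satisfied in both directions, which is immediate since both $\Rmc$ and $\Dmc'$ are $\Pi$-consistent by assumption.
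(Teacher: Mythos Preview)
Your proposal is correct and matches the paper's intended argument: the paper simply states that the characterization ``follows from the definition of $s$-brave semantics and Lemma~\ref{lem:red-query-consist}'' without spelling out details, and your forward/backward argument via Lemma~\ref{lem:red-query-consist} together with the greedy extension of a $\Pi$-consistent subset to an $s$-repair is exactly how one unpacks that remark.
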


Our aim is thus to construct an NFA $\autbrave$ which accepts $\overline{w}_{\Pi,\Dmc,Q}$
iff such a subset $\Dmc' \subseteq \Dmc$ exists. To do so, we will build intermediate
automata which work on an expanded alphabet, which encodes not only the input dataset $\Dmc$
but also a candidate subset $\Dmc'$. Formally, we consider the alphabet:
\begin{align*}
\overline{\Sigma^2}=& \{(\zeta^1,\zeta^2,\tau_\Pi, \tau_\Pi^Q) \mid \zeta^1,\zeta^2 \in \Sigma, \\
&\tau_\Pi \subseteq \paths(\Pi), \tau_\Pi^Q \subseteq \paths(\Pi^Q)\}
\end{align*}
where intuitively we use $\zeta^1$ to specify a letter in the encoding of the input dataset $\Dmc$
and $\zeta^2$ for a letter in the encoding of a possible subset $\Dmc'$. As we will later need words
encoding more than two datasets, we can more generally define
\begin{align*}
\overline{\Sigma^k}=& \{(\zeta^1,\zeta^2, \ldots, \zeta^k, \tau_\Pi, \tau_\Pi^Q) \mid \zeta^1,\dots, \zeta^k \in \Sigma, \\
&\tau_\Pi \subseteq \paths(\Pi), \tau_\Pi^Q \subseteq \paths(\Pi^Q)\}
\end{align*}
Given a word $W$ over $\overline{\Sigma^k}$ and $1 \leq \ell \leq k$, we will use $\proj(W,\ell)$ to denote the $\ell$th projection of $W$, 
which is the word over alphabet $\Sigma$
obtained by replacing each letter in $W$ 
by its $\ell$th component, i.e. replacing $(\zeta^1,\zeta^2, \ldots, \zeta^k, \tau_\Pi, \tau_\Pi^Q)$
by $\zeta^\ell$. We will use $\proj^+(W,\ell)$ for the word over $\overline{\Sigma}$ obtained by keeping the $\ell$th component followed by the 
elements of $\paths(\Pi)$ and $\paths(\Pi^Q)$, i.e.\ replacing $(\zeta^1,\zeta^2, \ldots, \zeta^k, \tau_\Pi, \tau_\Pi^Q)$
by $(\zeta^\ell, \tau_\Pi, \tau_\Pi^Q)$. 

The following definition specifies when a word over $\overline{\Sigma^k}$ is well formed, in the sense that the symbols in 
$\Sigma_{-\infty}$,  $\Sigma_{\infty}$, $\Sigma_\open$, $\Sigma_\close$ are used in a coherent manner.  
\begin{definition}
We say that a word $$w = w_{-1} w_0 \ldots w_M w_{M+1} w_{M+2}$$ over alphabet $\Sigma$ is \emph{proper} if $M \geq 0$ and the following conditions hold: 
\begin{itemize}
\item $w_{-1} \subseteq \Sigma_{-\infty}$
\item $w_{M+2} \subseteq \Sigma_{\infty}$
\item $w_j \subseteq \Sigma_\mathsf{prop}^+$ for $0 \leq j \leq M$
\item $w_{M+1} \subseteq \Sigma_\mathsf{prop}$
\item for every $P_{-\infty} \in w_{-1}$, either:
\begin{itemize}
\item there exists $j$ such that $P_\close \in w_j$, $j^*$ is the least such $j$,
and $P \in w_{j'}$ for every $0 \leq j' \leq j^*$
\item there is no $j$ with $P_\close \in w_j$, in which case $P \in w_j$ for every $0 \leq j \leq M+1$,
and further we have $P_\infty \in w_{M+2}$
\end{itemize}
\item for every $P_{\infty} \in w_{M+2}$, either:
\begin{itemize}
\item there exists $j$ such that $P_\open \in w_j$, $j^*$ is the largest such $j$,
and $P \in w_{j'}$ for every $j^* \leq j' \leq M+1$
\item there is no $j$ with $P_\open \in w_j$, in which case $P \in w_j$ for every $0 \leq j \leq M+1$,
and further we have $P_{-\infty} \in w_{-1}$
\end{itemize}
\item if $P_\open \in w_j$ and $P_\open \in w_{j'}$ with $j < j'$, then there exists 
$j''$ with $j \leq j'' < j'$ and $P_\close \in w_{j''}$
\item if $P_\close \in w_j$ and $P_\close \in w_{j'}$ with $j < j'$, then there exists 
$j''$ with $j < j'' \leq j'$ and $P_\open \in w_{j''}$
\item if $P_\open \in w_j$ and $P_\close \in w_{j'}$ with $j \leq j'$, and there is no 
$j''$ with $j \leq j'' < j'$ such that $P_\close \in w_{j''}$, then $P \in w_{j^*}$ for every 
$j \leq j^* \leq j'$
\item if $P_\open \in w_j$ and there is no $j \leq j'$ with $P_\close \in w_{j'}$, 
then $P \in w_{j'}$ for every 
$j \leq j' \leq M+1$ and $P_{\infty} \in w_{M+2}$
\item if $P_\close \in w_j$ and there is no $j'\leq j$ with $P_\open \in w_{j'}$, 
then $P \in w_{j'}$ for every 
$0\leq j' \leq j$ and  $P_{-\infty} \in w_{-1}$
\item if $P \not \in w_j$ and $P \in w_{j+1}$ and $j\geq0$, then $P_\open \in w_{j+1}$
\item if $P \in w_j$ and $P \not \in w_{j+1}$ and $j \leq M$, then $P_\close \in w_{j}$
\end{itemize}
We call a word $W$ over $\overline{\Sigma^k}$ \emph{proper} if $\proj(W,\ell)$ is proper for every $1 \leq \ell \leq k$,
and furthermore, $Z \in w_j^\ell$ iff $Z \in w_j^{\ell'}$ for all $Z \in \{B,E\}$, $1 \leq \ell, \ell' \leq k$, and $0 \leq j \leq M$.
\end{definition}

\begin{remark}\label{rem:word-proper}
It can be easily verified, by examining Definition \ref{def:succ-word} and recalling that $\Dmc$ is in normal form,  
that the word $\overline{w}_{\Pi,\Dmc,Q}$  is proper. 
\end{remark}

Next we show how to extract datasets from the projections of proper $\overline{\Sigma^k}$ words. 
As such words are succinct representations, where a single symbol defines the propositions that 
hold over a whole interval, such an extraction is not deterministic and depends on which concrete 
endpoints are used. 
To this end, we define when a set of timepoints and corresponding intervals is compatible with the constraints 
that the $\overline{\Sigma^k}$-word imposes via the elements in $\paths(\Pi)$ and $\paths(\Pi^Q)$, and then show
how to perform the extraction w.r.t.\ a compatible set of endpoint timepoints. 

\begin{definition}\label{def:compatible}
Consider a set $I$ of integer timepoints $0=t_1^* < \ldots < t^*_{N^*}$ 
and let $\Theta_I= \varrho^*_0, \ldots, \varrho^*_{m^*+1}$ 
be the subsequence of all non-empty intervals in the sequence 
 $$\{t^*_1\}, (t^*_1, t^*_2), \{t_2^*\}, \ldots, (t^*_{N^*-1}, t^*_{N^*}), \{t^*_{N^*}\}, \{t^*_{N^*}+1\}$$ 
We say that $I$ is \emph{compatible} with a word $$W=W_{-1} W_0 \ldots W_{m^*} W_{m^*+1} W_{m^*+2} \in (\overline{\Sigma^k})^*$$
with $W_j = (\zeta^1_j,\zeta^2_j, \ldots, \zeta^k_j, \tau_j, \tau_j^Q)$
 if the following conditions hold:
\begin{itemize}
\item if $\zeta^i_j \cap (\Sigma_\open \cup \Sigma_\close) \neq \emptyset$, 
then $\varrho^*_j = \{t\}$ for some $t \in I$
\item if $(a,b) \in \tau_j \cup \tau_j^Q$, then $|\rho_j| = a+n\cdot b$ for some $n\in \mathbb{N}$
\end{itemize}
\end{definition}

\begin{remark}\label{rem:compat}
By comparing Definitions \ref{def-word1}, \ref{def:succ-word}, and \ref{def:compatible}, we can see that 
the set of integer endpoints $I_\Dmc$ given by $0=t_1 < \ldots < t_N=d$ is compatible with $\overline{w}_{\Pi,\Dmc,Q}$. 
Indeed, as previously noted in Remark \ref{rem:word}, symbols from $\Sigma_\open \cup \Sigma_\close$
can only occur at timepoints in $I_\Dmc$. 
\end{remark}

Now we formalize how to extract datasets from words over $\overline{\Sigma^k}$
and compatible sets of integer timepoints. 

\begin{definition}\label{def:inducedDB}
Consider a proper word $W \in (\overline{\Sigma^k})^*$. Let $I$ be a set of integers $0=t_1^* < \ldots < t^*_{N^*}$
that is compatible with $W$, and let $\Theta_I= \varrho^*_1, \ldots, \varrho^*_{m^*+1}$  be as in Definition \ref{def:compatible}.  
Further let $\proj(W,\ell)= w_{-1} w_0 \ldots w_{m^*} w_{m^*+1} w_{m^*+2}$. 
We define the dataset $\Dmc_{W,I}^\ell$ induced by $W$, $I$, and level $\ell$ 
as the dataset which contains all and only the following facts:
\begin{itemize}
\item $P@(-\infty, \infty)$, if $P_{-\infty} \in w_{-1}$, $P_{\infty} \in w_{m^*+2}$, and there is no $j$ with $P_\close \in w_j$
\item $P@(-\infty, t^\dagger]$, if $P_{-\infty} \in w_{-1}$, $\varrho^*_j=\{t^\dagger\}$,  $P_\close \in w_j$, and $P_\close \not \in w_{j'}$ for all $j'<j$
\item $P@[t^\dagger,t^\ddagger]$, if $j \leq j'$, $\varrho^*_j=\{t^\dagger\}$, $\varrho^*_{j'}=\{t^\ddagger\}$,
$P_\open \in w_j$, $P_\close \in w_{j'}$, and $P_\close \not \in w_{j''}$ for all $j\leq j''<j'$ 
\item $P@[t^\dagger, \infty)$, if $P_{\infty} \in w_{m^*+2}$, $\varrho^*_j=\{t^\dagger\}$,  $P_\open \in w_j$, and $P_\open \not \in w_{j'}$ for all $j'>j$  
\end{itemize}
where $P \in \props(\Pi)$. 
\end{definition}

\begin{remark}\label{rem:decode}
If we consider the set of integer endpoints $I_\Dmc$ given by $0=t_1 < \ldots < t_N=d$ and take $W=\overline{w}_{\Pi,\Dmc,Q}$,
then $\Dmc_{W,I_\Dmc}^1=\Dmc$, i.e. we have shown how to `decode' $\overline{w}_{\Pi,\Dmc,Q}$, using $I_\Dmc$, to obtain the
 original dataset $\Dmc$. Note that the decoding does not consider the special propositions $B,E$, which will be handled separately. 
\end{remark}

Our next step will be to give conditions on proper $\overline{\Sigma^k}$-words $W$ which can be used to test 
whether the dataset $\Dmc_{W,I}^\ell$ is a subset of $\Dmc_{W,I}^{\ell'}$, for $1 \leq \ell, \ell' \leq k$ and compatible $I$.

\begin{definition}\label{def:syninclude}
Consider a proper word $W \in (\overline{\Sigma^k})^*$ and $\ell,\ell' \in \{1, \ldots, k\}$.
Further let  $\proj(W,\ell)= w_{-1} w_0 \ldots w_{m^*} w_{m^*+1} w_{m^*+2}$ and 
$\proj(W,\ell')= w_{-1}' w_0' \ldots w'_{m^*} w'_{m^*+1} w'_{m^*+2}$. 
We say that level $\ell'$ of $W$ is \emph{syntactically included} in level $\ell$ of $W$, written $(W,\ell') \synsub (W,\ell)$,  if 
the following conditions hold:
\begin{enumerate}[label=(\roman*)]
\item if $P_{-\infty} \in w'_{-1}$, then $P_{-\infty} \in w_{-1}$
\item if $P_{-\infty} \in w'_{-1}$ and there exists $j$ such that $P_\close \in w_j$, with $j^*$ the least such $j$,
then $P_\close \in w'_{j^*}$ 
\item if $P_\infty \in w'_{m^*+2}$, then $P_\infty \in w_{m^*+2}$
\item if $P_\infty \in w'_{m^*+2}$ and there exists $j$ such that $P_\open \in w_j$, with $j^*$ the greatest such $j$,
then $P_\open \in w'_{j^*}$
\item if $P_\open \in w'_{j}$, then $P_\open \in w_{j}$
\item if $P_\open \in w'_{j}$ and there exists $j'\geq j$ such that $P_\close \in w_{j'}$, with $j^*$ the least such $j'$, 
then $P_\close \in w'_{j^*}$
\item if $P_\close \in w'_{j}$, then $P_\close \in w_{j}$
\item if $P_\close \in w'_{j}$ and there exists $j'\leq j$ such that $P_\open \in w_{j'}$, with $j^*$ the greatest such $j'$,
then $P_\open \in w'_{j^*}$
\end{enumerate}
\end{definition}

\begin{lemma}\label{lem:synsub}
Consider a proper word $W \in (\overline{\Sigma^k})^*$, $\ell,\ell' \in \{1, \ldots, k\}$, and 
a set of integers $I$ compatible with $W$. 
Then $(W,\ell') \synsub (W,\ell)$ iff $\Dmc_{W,I}^{\ell'}\subseteq \Dmc_{W,I}^\ell$.
\end{lemma}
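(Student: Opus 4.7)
The plan is to leverage the tight correspondence, in any proper word, between facts extracted via Definition~\ref{def:inducedDB} and matched configurations of the boundary markers $P_\open$, $P_\close$, $P_{-\infty}$, $P_\infty$ appearing in each projection. Each fact in $\Dmc_{W,I}^{\ell}$ is witnessed by a canonical pair---either two markers, or one marker with a $\pm\infty$ annotation---the pairing being made unique by the ``least $P_\close$ after $P_\open$'' and ``greatest $P_\open$ before $P_\infty$'' conventions in the definition, combined with properness of $\proj(W,\ell)$. Compatibility of $I$ with $W$ ensures that whenever an open or close marker appears at index $j$, the interval $\varrho^*_j$ is a singleton, so the endpoints $t^\dagger, t^\ddagger$ are well-defined integers.

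For the forward direction, assuming $(W,\ell')\synsub(W,\ell)$, I would do a four-way case split on the type of fact $\alpha@\iota \in \Dmc_{W,I}^{\ell'}$ dictated by Definition~\ref{def:inducedDB}. In each case, the witnessing markers on level $\ell'$ are identified, transferred to level $\ell$ using the ``presence'' conditions (i), (iii), (v), (vii), and then the ``matching'' conditions (ii), (iv), (vi), (viii) are applied contrapositively to argue that on level~$\ell$ these transferred markers remain the least/greatest witnesses required. For example, if $P@(-\infty, t^\dagger] \in \Dmc_{W,I}^{\ell'}$ is witnessed by $P_{-\infty} \in w'_{-1}$ and the minimal $j$ with $P_\close \in w'_j$, then (i) and (vii) yield the same markers in $\proj(W,\ell)$; the existence of some $j' < j$ with $P_\close \in w_{j'}$ would, via (ii) applied to the least such $j'$, force $P_\close \in w'_{j'}$, contradicting minimality of $j$ on level~$\ell'$.

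For the backward direction, assuming $\Dmc_{W,I}^{\ell'} \subseteq \Dmc_{W,I}^\ell$, I would verify each of (i)--(viii) in turn. The ``presence'' conditions follow because any marker appearing in $\proj(W,\ell')$ witnesses some fact in $\Dmc_{W,I}^{\ell'}$ (by properness), which by inclusion appears in $\Dmc_{W,I}^\ell$ and must be extracted from $\proj(W,\ell)$ via a marker of the same type at the same position. For each ``matching'' condition, the assumed marker on level $\ell'$ is paired with its canonical companion supplied by properness of $\proj(W,\ell')$, the resulting fact is transported into $\Dmc_{W,I}^\ell$ by inclusion, and the companion marker on level $\ell$ is read off the extracting configuration---the minimality or maximality convention then forces its index to equal the one required by the condition.

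The most delicate step will be the case where the canonical companion of a marker on level $\ell'$ is a $\pm\infty$ annotation while on level $\ell$ a finite companion is available (or vice versa). Here I would rely on properness of both projections to force the ``unbounded'' fact to be extracted on both levels, and exploit the extremality of the relevant marker index to rule out any mismatch. Properness is used throughout to ensure that each marker participates in exactly one extracted fact, and compatibility of $I$ is used to pass freely between marker positions and their associated endpoints.
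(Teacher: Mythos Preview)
Your proposal is correct and follows essentially the same approach as the paper: the forward direction is handled by a four-way case split on the shape of the fact in $\Dmc_{W,I}^{\ell'}$, using conditions (i)--(viii) to transfer the witnessing markers and their extremality to level~$\ell$, and the backward direction verifies (i)--(viii) one by one by reconstructing, via properness, the fact witnessed by each marker on level~$\ell'$ and reading off the required marker on level~$\ell$ from the same fact in $\Dmc_{W,I}^{\ell}$. The paper's write-up is slightly more concrete in a few items (e.g., for (ii) it argues via uniqueness of the $P$-fact with endpoint $-\infty$ on each level), but the underlying mechanism is exactly what you describe.
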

\begin{proof}
Let $I$ be the considered set of integers that is 
compatible with $W$ and whose associated sequence of intervals is
$\Theta_I= \varrho^*_1, \ldots, \varrho^*_{m^*+1}$. Further let
$\proj(W,\ell)= w_{-1} w_0 \ldots w_{m^*} w_{m^*+1} w_{m^*+2}$ and 
$\proj(W,\ell')= w_{-1}' w_0' \ldots w'_{m^*} w'_{m^*+1} w'_{m^*+2}$. 

First suppose that $(W,\ell') \synsub (W,\ell)$ and let $\varphi \in \Dmc_{W,I}^{\ell'}$. 
By Definition \ref{def:inducedDB}, there are four cases to consider depending on the shape of $\varphi$:
\begin{description}
\item[Case 1] $\varphi=P@(-\infty, \infty)$  \newline This implies that 
$P_{-\infty} \in w'_{-1}$, $P_{\infty} \in w'_{m^*+2}$, and there is no $j$ with $P_\close \in w'_{j}$. 
Since $(W,\ell') \synsub (W,\ell)$, we know that $P_{-\infty} \in w_{-1}$ and $P_{\infty} \in w_{m^*+2}$.
Furthermore, there cannot exist $j$ such that $P_\close \in w_j$, since this would imply $P_\close \in w'_{j}$ (by point (vii) of Definition~\ref{def:syninclude}). 
Hence $\varphi=P@(-\infty, \infty) \in \Dmc_{W,I}^\ell$.
\item[Case 2] $\varphi= P@(-\infty, t^\dagger]$ \newline 
In this case, $P_{-\infty} \in w'_{-1}$, $P_\close \in w'_j$, and $P_\close \not \in w'_{j'}$ for all $j'<j$, and $\varrho^*_j=\{t^\dagger\}$.
Since $(W,\ell') \synsub (W,\ell)$, it follows that $P_{-\infty} \in w_{-1}$, $P_\close \in w_j$, and there is no $j'<j$ with $P_\close \in w_{j'}$ (by point (ii) of Definition~\ref{def:syninclude}). 
Hence $\varphi= P@(-\infty, t^\dagger] \in \Dmc_{W,I}^\ell$. 
\item[Case 3] $\varphi = P@[t^\dagger,t^\ddagger]$ \newline
In this case, there exists some $j \leq j'$ such that $\varrho^*_j=\{t^\dagger\}$, $\varrho^*_{j'}=\{t^\ddagger\}$,
$P_\open \in w'_j$, $P_\close \in w'_{j'}$, and $P_\close \not \in w'_{j''}$ for all $j\leq j''<j'$. 
Since $(W,\ell') \synsub (W,\ell)$, we must have $P_\open \in w_j$ and $P_\close \in w_{j'}$.
Moreover, since $P_\close \not \in w'_{j''}$ for all $j\leq j''<j'$, there cannot be any $j\leq j''<j'$
with $P_\close \in w_{j''}$ (by point (vi) of Definition~\ref{def:syninclude}). From all this, we obtain $\varphi= P@[t^\dagger,t^\ddagger] \in \Dmc_{W,I}^\ell$. 
\item[Case 4] $\varphi = P@[t^\dagger, \infty)$ \newline 
In this case, $P_{\infty} \in w'_{m^*+2}$ and there exists $j$ with $\varrho^*_j=\{t^\dagger\}$ and $P_\open \in w'_j$ and such that there is no $j'>j$ with $P_\open \in w'_{j'}$. From $(W,\ell') \synsub (W,\ell)$, we get $P_{\infty} \in w_{m^*+2}$, 
$P_\open \in w_j$ and there is no $j'>j$ with $P_\open \in w_{j'}$ (by point (iv) of Definition~\ref{def:syninclude}), hence $\varphi = P@[t^\dagger, \infty) \in \Dmc_{W,I}^\ell$. 
\end{description}
We can thus conclude that $\Dmc_{W,I}^{\ell'}\subseteq \Dmc_{W,I}^\ell$.

Next suppose that we have $\Dmc_{W,I}^{\ell'}\subseteq \Dmc_{W,I}^\ell$. 
To show that $(W,\ell') \synsub (W,\ell)$, we must show the eight items in Definition \ref{def:syninclude}.
We give the proof for a few items, the other items can be proved analogously: 
\begin{description}
\item[Item (i)]  Suppose $P_{-\infty} \in w'_{-1}$. As $W$ is proper, either $P_{\infty} \in w'_{m^*+2}$
or there is some $j$ such that $P_\close \in w'_j$. Thus,  there must exist some fact 
$P@(-\infty, t^\dagger]$ or $P@(-\infty, \infty)$ in $\Dmc_{W,I}^{\ell'}$, and hence also in 
$\Dmc_{W,I}^\ell$. It follows that $P_{-\infty} \in w_{-1}$. 
\item[Item (ii)] Suppose that $P_{-\infty} \in w'_{-1}$ and there exists $j$ such that $P_\close \in w_j$, with $j^*$ the least such $j$.
As $P_\close \in w_{j^*}$ and $I$ is compatible, we have $\varrho^*_{j^*}=\{t^\dagger\}$ for some $t^\dagger \in I$. 
Due to item (i), we also have $P_{-\infty} \in w_{-1}$, hence $P@(-\infty, t^\dagger]$ in $\Dmc_{W,I}^\ell$.
Moreover, from the definition of $\Dmc_{W,I}^\ell$, this is the unique $P$-fact with endpoint $-\infty$. From $P_{-\infty} \in w'_{-1}$
and properness, we know that $\Dmc_{W,I}^{\ell'}$ must also contain a single $P$-fact with endpoint $-\infty$. 
Due to $\Dmc_{W,I}^{\ell'}\subseteq \Dmc_{W,I}^\ell$, this must be the same $P$-fact, i.e. 
$P@(-\infty, t^\dagger] \in \Dmc_{W,I}^{\ell'}$. This implies in turn that $P_\close \in w'_{j^*}$, as required. 
\item[Item (vii)] Suppose $P_\close \in w'_{j}$. As $I$ is compatible, we have $\varrho^*_{j}=\{t^\ddagger\}$ for some $t^\ddagger \in I$.  
Due to properness, we know that either $P_{-\infty} \in w'_{-1}$ or there is some $j' \leq j$ such that $P_\open \in w'_{j'}$, with $j^*$ be the largest such $j'$.
In the former case, the fact $P@(-\infty, t^\ddagger]$  belongs to $\Dmc_{W,I}^{\ell'}$, hence to $\Dmc_{W,I}^{\ell}$, which implies $P_\close \in w_{j}$. 
In the latter case, there is some fact $P@[t^\dagger, t^\ddagger]$ in $\Dmc_{W,I}^{\ell'}$, hence in $\Dmc_{W,I}^{\ell}$, which also implies $P_\close \in w_{j}$. 
\item[Item (viii)] Suppose that $P_\close \in w'_{j}$ and there exists $j'\leq j$ such that $P_\open \in w_{j'}$, with $j^*$ the greatest such $j'$.
By item (vii), we have $P_\close \in w_{j}$. Due to compatibility of $I$, there exist $t^\dagger, t^\ddagger \in I$ such that 
$\varrho^*_{j}=\{t^\ddagger\}$ and $\varrho^*_{j'}=\{t^\dagger\}$. It follows that $P@[t^\dagger, t^\ddagger]$ belongs to $\Dmc_{W,I}^{\ell}$. 
Moreover, due to properness, either there is $j''\leq j$ such that $P_\close \in w'_{j''}$ or $P_{-\infty} \in w'_{-1}$. In either case, we know that 
$\Dmc_{W,I}^{\ell'}$ must contain a $P$-fact whose second endpoint is  $t^\ddagger$. Since 
$\Dmc_{W,I}^{\ell'}\subseteq \Dmc_{W,I}^\ell$, this fact must be $P@[t^\dagger, t^\ddagger]$. It follows then from the definition of 
$\Dmc_{W,I}^{\ell'}$ that $P_\open \in w'_{j^*}$. \qedhere
\end{description}
\end{proof}

To construct our NFA $\autbrave$ that 
decides query entailment under $s$-brave semantics, we will need the 
following intermediate NFAs over the alphabet $\overline{\Sigma^2}$:
\begin{itemize}
\item $\aut_\mathsf{proper}$ tests whether the given input word $W$ is proper and can be obtained by intersecting 
many simple NFAs that test for the various conditions of properness. 
\item $\aut_\subseteq^{1,2}$ tests whether the input word $W$ is proper \emph{and} such that $(W,2) \synsub (W,1)$, which can be obtained by 
intersecting $\aut_\mathsf{proper}$ with NFAs that check for each of the conditions in items (i)-(viii). 
\item $\autpisucc^2$ is used to check $\Pi$-consistency of the level-2 dataset encoded in $W$. It can be defined like $\autpisucc$ except that it runs on $\overline{\Sigma^2}$-words and ignores the component $\zeta^1$ in the input letters $(\zeta^1,\zeta^2,\tau_\Pi, \tau_\Pi^Q)$, i.e. it simulates running $\autpisucc$ over $\proj^+(W,2)$. 
\item $\overline{\aut}_\Pi^{\bar{Q},2}$ is used to check $\Pi^Q$-\emph{in}consistency of the level-2 dataset encoded in $W$. It is obtained by adapting $\autpiqsucc$ so that it runs on  $\overline{\Sigma^2}$-words and ignores the $\zeta^1$ component, and then taking the complement. 
\end{itemize}
We then let $\autbrave^2$ be the NFA obtained by 
intersecting $\aut_\subseteq^{1,2}$, $\autpisucc^2$, and $\overline{\aut}_\Pi^{\bar{Q},2}$. 
Finally, we define $\autbrave$ as the NFA over alphabet $\overline{\Sigma}$ that accepts 
precisely those $\overline{\Sigma}$-words that are equal to $\proj^+(W,1)$ for some $W \in \autbrave^2$
(i.e.\ we project away the $\zeta^2$ component). 
 The following lemma shows that the NFA
thus constructed can be used to decide $s$-brave semantics. 

\begin{lemma}\label{brave-aut-correct}
The NFA $\autbrave$ accepts $\overline{w}_{\Pi,\Dmc,Q}$ iff 
$(\Dmc,\Pi)\bravemodels{s} Q@[t_B, t_E]$. 
\end{lemma}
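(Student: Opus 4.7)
The plan is to reduce the equivalence to the characterization provided by Lemma \ref{brave-aut-char} and then peel off the three component NFAs making up $\autbrave^2$. By construction of $\autbrave$, the NFA $\autbrave$ accepts $\overline{w}_{\Pi,\Dmc,Q}$ iff there exists $W\in(\overline{\Sigma^2})^*$ with $\proj^+(W,1)=\overline{w}_{\Pi,\Dmc,Q}$ that is simultaneously accepted by $\aut_\subseteq^{1,2}$, $\autpisucc^2$, and $\overline{\aut}_\Pi^{\bar{Q},2}$. I would establish a correspondence between such $W$ and subsets $\Dmc'\subseteq\Dmc$ that witness $s$-brave entailment, i.e.\ such that $\Dmc'$ is $\Pi$-consistent and $\Dmc'\cup\{B@\{t_B-1\},E@\{t_E+1\}\}$ is $\Pi^Q$-inconsistent; applying Lemma \ref{brave-aut-char} then closes the argument.

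For the forward direction, given an accepting $W$, I would first argue that the canonical set $I_\Dmc$ of integer endpoints appearing in $\Dmc$ is compatible with $W$: compatibility with $\proj(W,1)=\overline{w}_{\Pi,\Dmc,Q}$ holds by Remark \ref{rem:compat}, and the condition $(W,2)\synsub(W,1)$ (enforced by $\aut_\subseteq^{1,2}$) forces every $\Sigma_\open\cup\Sigma_\close$ symbol at level $2$ to lie at a position already present at level $1$, hence at a position of $I_\Dmc$. Setting $\Dmc'=\Dmc_{W,I_\Dmc}^2$, Lemma \ref{lem:synsub} gives $\Dmc'\subseteq\Dmc_{W,I_\Dmc}^1=\Dmc$, the latter equality by Remark \ref{rem:decode}. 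Since $\proj^+(W,2)$ is a valid succinct encoding of $\Dmc'\cup\{B@\{t_B-1\},E@\{t_E+1\}\}$ over the endpoints $I_\Dmc$ (the $B,E$ markers are copied across levels by the last clause of properness for $\overline{\Sigma^k}$-words), acceptance of $W$ by $\autpisucc^2$ and $\overline{\aut}_\Pi^{\bar{Q},2}$ yields, via the succinct-NFA correspondence established earlier, that $\Dmc'$ is $\Pi$-consistent and $\Dmc'\cup\{B@\{t_B-1\},E@\{t_E+1\}\}$ is $\Pi^Q$-inconsistent.

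For the backward direction, given a witness $\Dmc'\subseteq\Dmc$, I would construct $W$ by setting $\proj^+(W,1)=\overline{w}_{\Pi,\Dmc,Q}$ and choosing the $\zeta^2$ component at each position so that $\proj(W,2)$ records $\Dmc'\cup\{B@\{t_B-1\},E@\{t_E+1\}\}$ using the same interval subdivision $\Theta_{I_\Dmc}$ of level $1$ (in particular, copying the $B,E$ markers from level $1$, as required by properness). Properness of $W$ follows from $\Dmc$ and $\Dmc'$ both being in normal form; $(W,2)\synsub(W,1)$ follows from $\Dmc'\subseteq\Dmc$ via Lemma \ref{lem:synsub}; and the consistency conditions on $\Dmc'$ translate to acceptance by $\autpisucc^2$ and $\overline{\aut}_\Pi^{\bar{Q},2}$ through the succinct-NFA correspondence. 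The main obstacle is a technical point: the level-$2$ succinct word $\proj^+(W,2)$ uses the subdivision $\Theta_{I_\Dmc}$, which is in general finer than the minimal subdivision one would obtain by encoding $\Dmc'$ in isolation, so one must verify that the equivalence ``succinct NFA accepts iff full NFA accepts iff dataset is consistent'' remains valid under this non-canonical refined encoding. This is handled by Lemma \ref{lem:pairs}: the sets $T_\Pi(q,q',\sigma)$ characterize runs on $\sigma^\ell$ for every $\ell$, and pairs $(a,b)$ compose correctly under interval concatenation, so any refinement of the intervals that is consistent with the $\tau$-annotations still admits an accepting run exactly when the encoded dataset is $\Pi$-consistent (and analogously for $\Pi^Q$). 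Granting this, both directions of the lemma follow.
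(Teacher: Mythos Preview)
Your proposal is correct and follows essentially the same approach as the paper's proof: both directions go through Lemma~\ref{brave-aut-char}, use $I_\Dmc$ and the level-2 induced dataset $\Dmc'=\Dmc_{W,I_\Dmc}^2$ in the forward direction, and in the backward direction build the level-2 component of $W$ by encoding $\Dmc'$ over the same interval subdivision as $\Dmc$. You are in fact slightly more careful than the paper on two points it glosses over: (a) verifying that $I_\Dmc$ is compatible with \emph{all of} $W$ (not just $\proj^+(W,1)$), which you correctly derive from items (v) and (vii) of syntactic inclusion; and (b) the observation that $\proj^+(W,2)$ is a refined (non-canonical) succinct encoding of $\Dmc'$, for which the succinct-NFA correspondence must be rechecked via Lemma~\ref{lem:pairs}. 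Both are genuine technical points and your treatment of them is sound.
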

\begin{proof}
First suppose that $\autbrave$ accepts $\overline{w}_{\Pi,\Dmc,Q}$. 
Then we know that there exists a $\overline{\Sigma^2}$-word $W$ that is
accepted by $\autbrave^2$ and such that $\proj^+(W,1)= \overline{w}_{\Pi,\Dmc,Q}$.
We know from Remark \ref{rem:decode} that $\Dmc_{W,I_\Dmc}^1=\Dmc$, where 
$I_\Dmc$ is the set of integer endpoints given by $0=t_1 < \ldots < t_N=d$. 
Let $\Dmc' = \Dmc_{W,I_\Dmc}^2$. As  $\aut_\subseteq^{1,2}$ accepts $W$,
and $I_\Dmc$ is compatible with $\overline{w}_{\Pi,\Dmc,Q}$ (Remark \ref{rem:compat}),
we have $(W,2) \synsub (W,1)$. By Lemma~\ref{lem:synsub}, $\Dmc' \subseteq \Dmc$. 
Moreover, the acceptance of $W$ by $\autpisucc^2$, and $\overline{\aut}_\Pi^{\bar{Q},2}$
ensures that $\Dmc'$ is $\Pi$-consistent and that $(\Dmc')^Q$ is $\Pi^Q$-inconsistent
(recall that properness ensures that $\proj^+(W,1)$ and $\proj^+(W,2)$ have $B$ and $E$
at the same positions). We conclude by Lemma~\ref{brave-aut-char} that $(\Dmc,\Pi)\bravemodels{s} Q@[t_B, t_E]$. 

Now suppose that $(\Dmc,\Pi)\bravemodels{s} Q@[t_B, t_E]$. By Lemma~\ref{brave-aut-char},
there exists a subset $\Dmc' \subseteq \Dmc$ such that:
\begin{itemize}
\item $\Dmc'$ is $\Pi$-consistent, and
\item $\Dmc' \cup \{B@\{t_B-1\}, E@\{t_E+1\}\}$ is $\Pi^Q$-inconsistent 
\end{itemize}
We let $w^2_{\Pi,\Dmc,Q}$ be obtained by applying Definition \ref{def-word1} to 
$\Dmc'$, but using $d$ even if the largest integer in $\Dmc'$ is smaller (so that the two
words have the same length). Then let 
$\sigma^{**}_0 \ldots, \sigma^{**}_m$ be defined like $\sigma^{*}_0 \ldots, \sigma^{*}_m$ in Definition \ref{def:succ-word},
using the same set of integers and sequence of intervals $\varrho_0, \ldots, \varrho_{m+1}$,
but using the word $w^2_{\Pi,\Dmc,Q}$ in place of $w_{\Pi,\Dmc,Q}$. 
We define $W^*$ as
\begin{align*}
(\sigma_{-1}, \sigma'_{-1}, \emptyset, \emptyset) &(\sigma^*_0, \sigma^{**}_0, \tau_0, \tau'_0) \ldots \\
 &\qquad (\sigma^*_m, \sigma^{**}_m,\tau_m, \tau'_m) (\sigma_{d+2}, \sigma'_{d+2}, \emptyset, \emptyset)
 \end{align*}
By construction, $W^*$ is proper and, since $\Dmc' \subseteq \Dmc$, such that $(W^*,2) \synsub (W^*, 1)$,
so $W^*$ is accepted by $\aut_\subseteq^{1,2}$. Moreover, since $\Dmc'$ is $\Pi$-consistent and 
$\Dmc' \cup \{B@\{t_B-1\}, E@\{t_E+1\}\}$ is $\Pi^Q$-inconsistent, 
$W^*$ is also accepted by the NFAs $\autpisucc^2$, and $\overline{\aut}_\Pi^{\bar{Q},2}$. 
It follows that  $\autbrave^2$ accepts $W^*$, and since $\proj^+(W^*,1) = w_{\Pi,\Dmc,Q}$, 
the NFA $\autbrave$ will accept $\overline{w}_{\Pi,\Dmc,Q}$. 
\end{proof}

\subsection*{Construction for $s$-CQA semantics}
We will now describe how to obtain an NFA $\autcqa$ to decide $s$-CQA semantics, 
building upon the notions and components introduced for $s$-brave semantics. 
Unlike the $s$-brave semantics, for which it is sufficient to find a $\Pi$-consistent subset
that entails the query, for the $s$-CQA semantics, we will need to be able to detect
$s$-repairs. 

We thus aim to define an NFA $\autrep$ over alphabet $\overline{\Sigma^2}$ that will recognize
(encodings of) repairs of the level-1 dataset. For this, we will first define an NFA $\autbetter$
over the alphabet $\overline{\Sigma^3}$ that is the intersection of the following NFAs:
\begin{itemize}
\item $\aut_\mathsf{proper}^3$, which tests properness of words over $\overline{\Sigma^3}$
\item  $\aut_\subseteq^{2,3}$ and $\aut_\subseteq^{3,1}$, which check respectively whether 
the input word $W$ is proper and such that $(W,2) \synsub (W,3)$ (respectively, proper and 
$(W,3) \synsub (W,1)$), defined analogously to the previously mentioned $\aut_\subseteq^{1,2}$
\item 
the intersection of $\aut_\mathsf{proper}^3$ and of the complement of $\aut_\subseteq^{3,2}$, which accepts proper $W$ such that $(W,3)  \not \synsub (W,2)$
\item $\autpisucc^3$ (defined analogously to $\autpisucc^2$)  is used to check $\Pi$-consistency of the encoded level-3 dataset
\end{itemize}
We then let $\autnobetter$ be the NFA over $\overline{\Sigma^2}$ that accepts those $W$ 
such that there does not exist any $W'$ over $\overline{\Sigma^3}$ such that:
\begin{itemize}
\item $\proj^+(W,1)=\proj^+(W',1)$
\item $\proj^+(W,2)=\proj^+(W',2)$
\item $W'$ is accepted by $\autbetter$
\end{itemize}
We can obtain $\autnobetter$ by first projecting $\autbetter$ onto $\overline{\Sigma^2}$ (dropping $\zeta^3$ component), 
then complementing it. 

We then define the NFA $\autrep$ as the intersection of 
$\aut_\mathsf{proper}$, 
$\aut_\subseteq^{1,2}$, $\autpisucc^2$, and $\autnobetter$. 
We further define an NFA $\autrepnoq$ as the intersection of 
$\autrep$ and $\overline{\aut}_\Pi^{Q,2}$ (testing $\Pi^Q$-consistency of the level-2 dataset extended with the $B, E$ facts). 
Finally, we let $\autcqa$ be an NFA  that accepts 
precisely those $\overline{\Sigma}$-words that are \emph{not} equal to $\proj^+(W,1)$ for some 
$\overline{\Sigma^2}$-word $W$ accepted by $\autrepnoq$ (obtained by projecting then complementing $\autrepnoq$).  

\begin{lemma}\label{prop:cqa-aut}
The NFA $\autcqa$ accepts $\overline{w}_{\Pi,\Dmc,Q}$ iff 
$(\Dmc,\Pi)\cqamodels{s} Q@[t_B, t_E]$. 
\end{lemma}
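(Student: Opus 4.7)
The plan is to factor the proof through characterizations of the intermediate NFAs $\autrep$ and $\autrepnoq$, and then specialize to the canonical word $\overline{w}_{\Pi,\Dmc,Q}$.

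First I will prove an intermediate claim: for every proper $\overline{\Sigma^2}$-word $W$ and every set $I$ of integers compatible with $W$, $W$ is accepted by $\autrep$ iff $\Dmc^2_{W,I}$ is an $s$-repair of $\Dmc^1_{W,I}$ \wrt $\Pi$. The inclusion $\Dmc^2_{W,I} \subseteq \Dmc^1_{W,I}$ follows from Lemma~\ref{lem:synsub} combined with acceptance by $\aut_\subseteq^{1,2}$; $\Pi$-consistency of $\Dmc^2_{W,I}$ follows from acceptance by $\autpisucc^2$ (reusing correctness of the succinct base NFA on $\proj^+(W,2)$); and $\subseteq$-maximality, which is equivalent to being an $s$-repair by Lemma~\ref{lem:subset}, is captured by $\autnobetter$. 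For the maximality direction, if there were a $\Pi$-consistent $\Rmc'$ with $\Dmc^2_{W,I} \subsetneq \Rmc' \subseteq \Dmc^1_{W,I}$, then since $\Rmc'$ uses only endpoints already present in $I$, I can build a $\overline{\Sigma^3}$-word $W'$ that agrees with $W$ on the first two components and on the $\paths$-components and whose third level encodes $\Rmc'$; by Lemma~\ref{lem:synsub} and the $\Pi$-consistency of $\Rmc'$, this $W'$ is accepted by $\autbetter$, contradicting acceptance of $W$ by $\autnobetter$. The converse runs the same argument backwards.

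Extending this, $\autrepnoq = \autrep \cap \overline{\aut}_\Pi^{Q,2}$ accepts $W$ iff $\Dmc^2_{W,I}$ is an $s$-repair of $\Dmc^1_{W,I}$ and $\Dmc^2_{W,I} \cup \{B@\{t_B-1\}, E@\{t_E+1\}\}$ is $\Pi^Q$-consistent. By Lemma~\ref{lem:red-query-consist}, the second condition is equivalent to $(\Dmc^2_{W,I},\Pi)\not\models Q@[t_B,t_E]$. Hence $\autrepnoq$ accepts $W$ iff level-2 of $W$ encodes an $s$-repair of level-1 that does not entail the query.

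To conclude, I specialize to $I = I_\Dmc$ and words $W$ with $\proj^+(W,1)=\overline{w}_{\Pi,\Dmc,Q}$: by Remark~\ref{rem:compat} this $I$ is compatible and by Remark~\ref{rem:decode} $\Dmc^1_{W,I_\Dmc}=\Dmc$. Since every $s$-repair $\Rmc$ of $\Dmc$ is a subset of $\Dmc$ and therefore uses only endpoints from $I_\Dmc$, it can be encoded as the level-2 of some $W$ with $\proj^+(W,1)=\overline{w}_{\Pi,\Dmc,Q}$, and conversely every level-2 dataset of such a $W$ is a subset of $\Dmc$. By construction $\autcqa$ accepts $\overline{w}_{\Pi,\Dmc,Q}$ iff no such $W$ is accepted by $\autrepnoq$, which by the previous paragraph is equivalent to saying that no $s$-repair of $\Dmc$ fails to entail $Q@[t_B,t_E]$, i.e., $(\Dmc,\Pi)\cqamodels{s} Q@[t_B,t_E]$. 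The main obstacle is the maximality step in the characterization of $\autrep$: one must carefully build the $\overline{\Sigma^3}$ witness for a strictly larger $\Pi$-consistent subset, and in particular verify that the resulting $W'$ is proper and satisfies all eight conditions of Definition~\ref{def:syninclude}, and that the shared $\paths$-components remain correct with respect to the new third level.
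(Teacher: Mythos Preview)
Your overall strategy is the same as the paper's: both argue via the component automata $\aut_\subseteq^{1,2}$, $\autpisucc^2$, $\autnobetter$, and $\overline{\aut}_\Pi^{Q,2}$, invoke Lemma~\ref{lem:synsub} for the subset test and Lemma~\ref{lem:red-query-consist} for the query, and establish both directions by passing back and forth between datasets and their word encodings.

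However, your intermediate claim---that for \emph{every} proper $\overline{\Sigma^2}$-word $W$ and \emph{every} compatible $I$, $W$ is accepted by $\autrep$ iff $\Dmc^2_{W,I}$ is an $s$-repair of $\Dmc^1_{W,I}$---is stated too generally and is false in that generality. Compatibility (Definition~\ref{def:compatible}) only requires that each $(a,b)\in\tau_j$ satisfy $|\varrho_j|=a+nb$; it does \emph{not} require $\tau_j$ to contain all pairs with this property. Hence for general $W$, $\Pi$-consistency of an encoded dataset does not imply acceptance by the corresponding $\autpisucc$ component: the pair from $T_\Pi(q,q',\sigma)$ that witnesses the run of $\autpi$ on $\sigma^{|\varrho_j|}$ might simply be absent from $\tau_j$. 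This breaks, in particular, your maximality step: given a hypothetical $\Pi$-consistent $\Rmc'$ with $\Dmc^2_{W,I}\subsetneq\Rmc'\subseteq\Dmc^1_{W,I}$, you cannot conclude that the $W'$ you build is accepted by $\autpisucc^3$, so you cannot derive a contradiction with acceptance by $\autnobetter$. (Symmetrically, the direction ``$s$-repair $\Rightarrow$ accepted by $\autrep$'' fails at $\autpisucc^2$.)

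The paper sidesteps this by never stating a general characterization of $\autrep$: it works directly with words $W$ satisfying $\proj^+(W,1)=\overline{w}_{\Pi,\Dmc,Q}$ and with $I=I_\Dmc$, for which Definition~\ref{def:succ-word} guarantees that each $\tau_j$ is the \emph{full} set of matching pairs, so both directions of the succinct-NFA correctness apply. Your argument becomes correct if you restrict your intermediate claim to exactly this setting, which is anyway the only instance you use at the end; the ``obstacle'' you flag about the shared $\paths$-components is then resolved precisely because those components are inherited from the canonical word.
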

\begin{proof}
First let us suppose that $(\Dmc,\Pi)\not \cqamodels{s} Q@[t_B, t_E]$. 
It follows that there is an $s$-repair 
$\Dmc' \subseteq \Dmc$ such that $(\Dmc', \Pi) \not \models Q@[t_B, t_E]$, which means:
\begin{itemize}
\item $\Dmc'$ is $\Pi$-consistent,
\item there is no $\Dmc''$ with $\Dmc' \subsetneq \Dmc'' \subseteq \Dmc$ that is $\Pi$-consistent,
\item $\Dmc' \cup \{B@\{t_B-1\}, E@\{t_E+1\}\}$ is $\Pi^Q$-consistent
\end{itemize}
Let $W$ be the unique $\overline{\Sigma^2}$-word such that:
\begin{itemize}
\item $\proj^+(W,1)=\overline{w}_{\Pi,\Dmc,Q}$
\item $\proj^+(W,2)$ is the word $\overline{w}_{\Pi,\Dmc',Q,d}$, defined like 
$\overline{w}_{\Pi,\Dmc,Q}$ but using dataset $\Dmc'$ and final integer $d$ 
(even if $\Dmc'$ has a smaller greatest integer endpoint)
\end{itemize}
By construction, $W$ is proper. Moreover, if we let 
$I_\Dmc$ be the set of integer endpoints given by $0=t_1 < \ldots < t_N=d$, 
then $\Dmc = \Dmc_{W,I_\Dmc}^1$ and $\Dmc' = \Dmc_{W,I_\Dmc}^2$.
Since $\Dmc' \subseteq \Dmc$, it follows from Lemma \ref{lem:synsub}
that $(W,2) \synsub (W,1)$. Thus, $W$ is accepted by $\aut_\subseteq^{1,2}$. 
Moreover, since  $\Dmc' \cup \{B@\{t_B-1\}, E@\{t_E+1\}\}$ is $\Pi^Q$-consistent, 
$W$ is also accepted by $\autpisucc^2$. We need to show $W$ is also 
accepted by $\autnobetter$, which is equivalent to there does not exist any 
$\overline{\Sigma^3}$-word $W'$ accepted by $\autbetter$
and such that $\proj^+(W,1)=\proj^+(W',1)$ and $\proj^+(W,2)=\proj^+(W',2)$. 
For this, we note that if such a $W'$ were to exist, it would imply existence of 
a subset $\Dmc' \subsetneq \Dmc'' \subseteq \Dmc$ that is $\Pi$-consistent,
which we know does not exist. As $W$ is accepted by 
$\aut_\subseteq^{1,2}$, $\autpisucc^2$, and $\autnobetter$, 
it is accepted by $\autrep$. Moreover, 
since $\Dmc' \cup \{B@\{t_B-1\}, E@\{t_E+1\}\}$ is $\Pi^Q$-consistent, 
$W$ is also accepted by $\overline{\aut}_\Pi^{Q,2}$, hence by $\autrepnoq$. 
Since $\proj^+(W,1)=\overline{w}_{\Pi,\Dmc,Q}$, we can conclude that 
$\overline{w}_{\Pi,\Dmc,Q}$ is \emph{not} accepted by $\autcqa$. 

For the other direction, suppose for a contradiction that 
$(\Dmc,\Pi) \cqamodels{s} Q@[t_B, t_E]$ but $\autcqa$ does not accept $\overline{w}_{\Pi,\Dmc,Q}$. 
This means that there exists a $\overline{\Sigma^2}$-word $W$ that is accepted by 
$\autrepnoq$ and such that $\proj^+(W,1)=\overline{w}_{\Pi,\Dmc,Q}$. 
From the definition of $\autrepnoq$, we have that $W$ is accepted by 
$\autrep$ and $\overline{\aut}_\Pi^{Q,2}$. We let 
$I_\Dmc$ be the set of integer endpoints given by $0=t_1 < \ldots < t_N=d$, 
recall that $\Dmc = \Dmc_{W,I_\Dmc}^1$ and define $\Dmc'$ as  $\Dmc_{W,I_\Dmc}^2$. 
Acceptance by $\overline{\aut}_\Pi^{Q,2}$ implies that 
$\Dmc' \cup \{B@\{t_B-1\}, E@\{t_E+1\}\}$ is $\Pi^Q$-consistent, hence 
$(\Dmc', \Pi) \not \models Q@[t_B, t_E]$. 
Acceptance by $\autrep$ means acceptance by $\aut_\subseteq^{1,2}$, $\autpisucc^2$, and $\autnobetter$,
which yields:
\begin{itemize}
\item $(W,2) \synsub (W,1)$, hence $\Dmc' \subseteq \Dmc$ (Lemma \ref{lem:synsub})
\item $\Dmc'$ is $\Pi$-consistent, due to  $\autpisucc^2$
\item $\Dmc'$ is an $s$-repair of $\Dmc$, due to $\autnobetter$ (and its components), which ensure there
does not exist any dataset $\Dmc''$ with $\Dmc' \subsetneq \Dmc'' \subseteq \Dmc$ that is $\Pi$-consistent. 
\end{itemize}
It follows that $\Dmc'$ is an $s$-repair with $(\Dmc', \Pi) \not \models Q@[t_B, t_E]$,
so we may conclude that $(\Dmc,\Pi)\not \cqamodels{s} Q@[t_B, t_E]$. 
\end{proof}

\subsection*{Construction for $s$-intersection semantics}
We now consider the case of $s$-intersection semantics, building upon the NFAs
constructed for the $s$-brave and $s$-CQA semantics. We start with a lemma that 
resumes what we need to check:

\begin{lemma}\label{lem:prep-aut-inter}
$(\Dmc,\Pi)\not \intmodels{s} Q@[t_B, t_E]$ iff there exists a subset $\Dmc' \subseteq \Dmc$
such that:
\begin{itemize}
\item $\Dmc' \cup \{B@\{t_B-1\}, E@\{t_E+1\}\}$ is $\Pi^Q$-consistent 
\item for every $\varphi \in \Dmc \setminus \Dmc'$, there exists some $s$-repair $\Rmc$
of $\Dmc$ such that $\varphi \not \in \Rmc$
\end{itemize}
\end{lemma}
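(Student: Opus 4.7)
The plan is to reduce the $s$-intersection semantics to classical entailment from the set-theoretic intersection $\Dmc^{*}=\bigcap_{\Rmc\in\sreps{\Dmc,\Pi}}\Rmc$ of all $s$-repairs, then combine Corollary~\ref{CorSubsetBased} with the query-to-consistency reduction of Lemma~\ref{lem:red-query-consist}.

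The first step is to show that $(\Dmc,\Pi)\intmodels{s} Q@[t_B,t_E]$ iff $(\Dmc^{*},\Pi)\models Q@[t_B,t_E]$. The key observation is that since $\Dmc$ is in normal form, for each ground atom $\alpha$ and each timepoint $t$ there is at most one fact $\alpha@\iota\in\Dmc$ with $t\in\iota$; hence $\alpha@\{t\}\in\Imc=\bigsqcap_{\Rmc\in\sreps{\Dmc,\Pi}}\Rmc$ iff this unique witness belongs to every $s$-repair, i.e.\ iff $\alpha@\iota\in\Dmc^{*}$. Consequently $\Imc$ and $\Dmc^{*}$ impose exactly the same punctual constraints and therefore admit the same set of models, which yields the equivalence.

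For the ($\Rightarrow$) direction, I would take $\Dmc'=\Dmc^{*}$. The second bullet is immediate from the definition of $\Dmc^{*}$: any $\varphi\in\Dmc\setminus\Dmc^{*}$ is by construction missing from some $s$-repair. For the first bullet, $(\Dmc,\Pi)\not\intmodels{s} Q@[t_B,t_E]$ gives $(\Dmc^{*},\Pi)\not\models Q@[t_B,t_E]$, and since $\Dmc^{*}$ is $\Pi$-consistent as a subset of any (consistent) $s$-repair, Lemma~\ref{lem:red-query-consist} yields that $\Dmc^{*}\cup\{B@\{t_B-1\},E@\{t_E+1\}\}$ is $\Pi^{Q}$-consistent.

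For the ($\Leftarrow$) direction, Corollary~\ref{CorSubsetBased} reads $\Dmc^{*}=\Dmc\setminus\bigcup_{\Cmc\in\sconflicts{\Dmc,\Pi}}\Cmc$. The second condition on $\Dmc'$ says exactly that every element of $\Dmc\setminus\Dmc'$ lies outside some $s$-repair, hence outside $\Dmc^{*}$, so $\Dmc\setminus\Dmc'\subseteq\Dmc\setminus\Dmc^{*}$ and therefore $\Dmc^{*}\subseteq\Dmc'$. The first condition, via Lemma~\ref{lem:red-query-consist}, supplies a model of $\Pi\cup\Dmc'$ that does not satisfy $Q@[t_B,t_E]$; by monotonicity this model also satisfies $\Pi\cup\Dmc^{*}$, giving $(\Dmc^{*},\Pi)\not\models Q@[t_B,t_E]$ and hence $(\Dmc,\Pi)\not\intmodels{s} Q@[t_B,t_E]$. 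There is no real obstacle; the only subtlety is the passage in the initial reduction between the pointwise intersection $\Imc$ (a set of punctual facts) and the set intersection $\Dmc^{*}$ (a set of interval facts), which relies essentially on the normal-form assumption on $\Dmc$.
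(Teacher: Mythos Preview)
The paper does not supply an explicit proof of this lemma; it is stated as a direct characterization of what the automaton $\autinter$ must check. Your argument is correct and proceeds along the natural lines the paper's tools suggest: reduce the pointwise intersection $\Imc$ to the set-theoretic intersection $\Dmc^{*}$ via the normal-form assumption, then invoke Lemma~\ref{lem:red-query-consist} in both directions. One minor remark: the appeal to Corollary~\ref{CorSubsetBased} in the ($\Leftarrow$) direction is not actually needed, since the inclusion $\Dmc^{*}\subseteq\Dmc'$ follows immediately from the definition of $\Dmc^{*}$ as an intersection of repairs, without going through conflicts. Also, when applying Lemma~\ref{lem:red-query-consist} to $\Dmc'$ in that direction, you implicitly use that $\Dmc'$ is $\Pi$-consistent; this does hold (since $\Pi\subseteq\Pi^{Q}$ and $\Dmc'\cup\{B@\{t_B-1\},E@\{t_E+1\}\}$ is $\Pi^{Q}$-consistent), but it is worth making explicit.
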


As before, we can consider words $W$ over $\overline{\Sigma^2}$
which are accepted by $\aut_\subseteq^{1,2}$,
which ensures that the level-2 dataset is a subset of the level-1 dataset. 
We can check the level-2 dataset satisfies the first item of Lemma \ref{lem:prep-aut-inter}
using $\overline{\aut}_\Pi^{Q,2}$.
To ensure that the level-2 dataset satisfies the second item, 
we will proceed as follows. First, we will construct an NFA $\autmissing$ that 
tests for the complementary property, i.e. 
whether there exists some fact $\varphi$, present in the level-1 dataset but absent from the level-2 dataset, 
that is included in every $s$-repair. 
For this, we will consider an NFA $\autpick$ 
which accepts proper words $W$ over $\overline{\Sigma^3}$ 
such that $(W,3) \synsub (W,1)$,  
$(W,3) \not\synsub (W,2)$,  
and $\proj(W,3)$ encodes a single fact.
Intuitively, we use the level-3 dataset to pick one fact from the level-1 dataset that is absent in the level-2 dataset. 
We can further consider an NFA $\autabsentrep$ over $\overline{\Sigma^4}$ that accepts proper words $W$ such that 
$(W,3) \not\synsub (W,4)$, 
that are accepted by $\autrep^{1,4}$ (defined like $\autrep$ but testing whether the level-4 dataset is an $s$-repair of the level-1 dataset), which can be obtained as a straightforward combination and modification of earlier NFAs. 
We then let $\autmissing$ be the NFA over $\overline{\Sigma^2}$ that accepts those proper words $W$ 
such that there exists a proper word $W'$ over $\overline{\Sigma^3}$ 
such that:
\begin{itemize}
\item $\proj^+(W,1)=\proj^+(W',1)$
\item $\proj^+(W,2)=\proj^+(W',2)$
\item $W'$ is accepted by $\autpick$
\item there is \emph{no} proper word $W''$ over $\overline{\Sigma^4}$ such that:
\begin{itemize}
\item $\proj^+(W',1)=\proj^+(W'',1)$ 
\item $\proj^+(W',2)=\proj^+(W'',2)$
\item $\proj^+(W',3)=\proj^+(W'',3)$
\item $W''$ is accepted by $\autabsentrep$
\end{itemize}
\end{itemize}
Such an NFA can be constructed using intersection, projection, and complementation of simpler automata. 
We then let $\autnotmissing$ be the complement of $\autmissing$. 
The final NFA $\autinter$ can thus be defined as the
\emph{complement} of the automaton that 
accepts words over $\overline{\Sigma}$ that are equal to $\proj^+(W,1)$ 
for some $\overline{\Sigma^2}$-word $W$ that is 
accepted by $\aut_\mathsf{proper}$, $\aut_\subseteq^{1,2}$, 
$\overline{\aut}_\Pi^{Q,2}$, and $\autnotmissing$. 

The correctness is stated in the following lemma, which combines ideas from 
the proofs of Lemmas \ref{brave-aut-correct} and \ref{prop:cqa-aut}. 

\begin{lemma}
The NFA $\autinter$ accepts $\overline{w}_{\Pi,\Dmc,Q}$ iff 
$(\Dmc,\Pi)\intmodels{s} Q@[t_B, t_E]$. 
\end{lemma}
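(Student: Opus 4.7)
The plan is to reduce the claim to Lemma~\ref{lem:prep-aut-inter} by carefully unwinding the definition of $\autinter$ and translating the acceptance condition of each component automaton into a dataset-level statement, using the decoding results already proved (Remarks~\ref{rem:decode},~\ref{rem:compat} and Lemma~\ref{lem:synsub}).

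First I would unfold the construction: by definition, $\autinter$ accepts $\overline{w}_{\Pi,\Dmc,Q}$ iff there is \emph{no} $\overline{\Sigma^2}$-word $W$ with $\proj^+(W,1)=\overline{w}_{\Pi,\Dmc,Q}$ that is simultaneously accepted by $\aut_\mathsf{proper}$, $\aut_\subseteq^{1,2}$, $\overline{\aut}_\Pi^{Q,2}$, and $\autnotmissing$. Setting $I_\Dmc = \{0=t_1 < \ldots < t_N = d\}$ and $\Dmc' := \Dmc_{W,I_\Dmc}^2$, the first three conditions translate, exactly as in the proof of Lemma~\ref{prop:cqa-aut}, to: $\Dmc' \subseteq \Dmc$ (via Lemma~\ref{lem:synsub}) and $\Dmc' \cup \{B@\{t_B-1\},E@\{t_E+1\}\}$ is $\Pi^Q$-consistent. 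Conversely, starting from any such $\Dmc'$, one builds the unique $W$ with $\proj^+(W,1)=\overline{w}_{\Pi,\Dmc,Q}$ and $\proj^+(W,2)=\overline{w}_{\Pi,\Dmc',Q,d}$, which satisfies these three conditions.

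The main technical step, and the point where care is needed, is to argue that $\autnotmissing$ accepts $W$ iff for every $\varphi \in \Dmc \setminus \Dmc'$ there is an $s$-repair $\Rmc$ of $\Dmc$ with $\varphi \notin \Rmc$. Since $\autnotmissing$ is the complement of $\autmissing$, it suffices to show the dual: $\autmissing$ accepts $W$ iff some $\varphi \in \Dmc \setminus \Dmc'$ lies in every $s$-repair of $\Dmc$. For the forward direction, acceptance by $\autmissing$ yields a proper $W'$ over $\overline{\Sigma^3}$ extending $W$ such that $\autpick$ accepts $W'$; the conditions defining $\autpick$ together with Lemma~\ref{lem:synsub} ensure that $\Dmc_{W',I_\Dmc}^3$ is a single fact $\varphi \in \Dmc \setminus \Dmc'$. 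The absence of any $W''$ over $\overline{\Sigma^4}$ extending $W'$ and accepted by $\autabsentrep$ then says, again by Lemma~\ref{lem:synsub} and the definition of $\autrep^{1,4}$ (whose correctness mirrors Lemma~\ref{prop:cqa-aut}), that no $s$-repair of $\Dmc$ omits $\varphi$. For the converse, given a $\varphi \in \Dmc \setminus \Dmc'$ that lies in every $s$-repair, one encodes $\{\varphi\}$ into a level-3 component to produce a $W'$ accepted by $\autpick$, and the assumption on $\varphi$ rules out any witness $W''$ for $\autabsentrep$, yielding acceptance of $W$ by $\autmissing$.

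Combining the three translations, $\overline{w}_{\Pi,\Dmc,Q}$ is rejected by $\autinter$ (equivalently, accepted by the pre-complement automaton) iff there exists a subset $\Dmc' \subseteq \Dmc$ satisfying both conditions of Lemma~\ref{lem:prep-aut-inter}, iff $(\Dmc,\Pi) \not\intmodels{s} Q@[t_B,t_E]$. Taking contrapositives gives the lemma. The hard part is verifying that the nested projection/complementation structure defining $\autmissing$ and $\autabsentrep$ really captures ``some deleted fact belongs to every repair'' without any hidden quantifier mismatch; once the compatibility of $I_\Dmc$ with all four levels is observed and Lemma~\ref{lem:synsub} is applied uniformly, the rest is bookkeeping.
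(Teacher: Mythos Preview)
Your proposal is correct and follows essentially the same approach as the paper: both reduce to Lemma~\ref{lem:prep-aut-inter}, decode words via $I_\Dmc$ using Remarks~\ref{rem:decode}, \ref{rem:compat} and Lemma~\ref{lem:synsub}, and establish that $\autmissing$ (respectively $\autnotmissing$) captures the condition ``some deleted fact lies in every $s$-repair'' (respectively its negation). The only cosmetic difference is that you factor out the correctness of $\autmissing$ as a separate biconditional, whereas the paper weaves the two directions of that claim into the two directions of the main lemma via contradiction arguments; the underlying content is identical.
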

\begin{proof}
First suppose that $(\Dmc,\Pi)\not \intmodels{s} Q@[t_B, t_E]$. 
We aim to show that $\autinter$ does not accept $\overline{w}_{\Pi,\Dmc,Q}$, 
or equivalently, that $\overline{w}_{\Pi,\Dmc,Q}$ is accepted 
by $\aut_\mathsf{proper}$, $\aut_\subseteq^{1,2}$, 
$\overline{\aut}_\Pi^{Q,2}$, and $\autnotmissing$. 

By Lemma \ref{lem:prep-aut-inter}, since $(\Dmc,\Pi)\not \intmodels{s} Q@[t_B, t_E]$,
we can find a subset $\Dmc' \subseteq \Dmc$
such that:
\begin{itemize}
\item $\Dmc' \cup \{B@\{t_B-1\}, E@\{t_E+1\}\}$ is $\Pi^Q$-consistent 
\item for every $\varphi \in \Dmc \setminus \Dmc'$, there exists some $s$-repair $\Rmc$
of $\Dmc$ such that $\varphi \not \in \Rmc$
\end{itemize}
Similarly to the proof of Lemma \ref{prop:cqa-aut}, we can use $\Dmc'$ to
construct a proper  $\overline{\Sigma^2}$-word $W$
such that: 
\begin{itemize}
\item $\proj^+(W,1)=\overline{w}_{\Pi,\Dmc,Q}$
\item $\proj^+(W,2)$ is the word $\overline{w}_{\Pi,\Dmc',Q,d}$, defined like 
$\overline{w}_{\Pi,\Dmc,Q}$ but using dataset $\Dmc'$ and final integer $d$ 
\end{itemize}
For the compatible set of integer endpoints $I_\Dmc$ given by $0=t_1 < \ldots < t_N=d$, 
we have $\Dmc = \Dmc_{W,I_\Dmc}^1$ and $\Dmc' = \Dmc_{W,I_\Dmc}^2$.
Again arguing similarly to Lemma \ref{prop:cqa-aut}, we obtain that 
$W$ is accepted by $\aut_\mathsf{proper}$, $\aut_\subseteq^{1,2}$, and $\overline{\aut}_\Pi^{Q,2}$. 

It remains to show that $W$ is accepted by $\autnotmissing$, or 
equivalently, that there is no word $W'$ over $\overline{\Sigma^3}$ such that:
\begin{itemize}
\item $\proj^+(W,1)=\proj^+(W',1)=\proj^+(W'',1)$
\item $\proj^+(W,2)=\proj^+(W',2)=\proj^+(W'',2)$
\item $\proj^+(W',3)=\proj^+(W'',3)$
\item $W'$ is accepted by $\autpick$
\item there is no proper word $W''$ over $\overline{\Sigma^4}$ such that:
\begin{itemize}
\item $\proj^+(W',1)=\proj^+(W'',1)$ 
\item $\proj^+(W',2)=\proj^+(W'',2)$
\item $\proj^+(W',3)=\proj^+(W'',3)$
\item $W''$ is accepted by $\autabsentrep$
\end{itemize}
\end{itemize}
Suppose for a contradiction that such a word $W'$ over $\overline{\Sigma^3}$
were to exist. Then since $W'$ is accepted by $\autpick$, 
and $W'$ coincides with $W$ on the first two projections (encoding $\Dmc$ and $\Dmc'$),
the dataset $\Dmc_{W,I_\Dmc}^3$
must contain a single fact $\varphi \in \Dmc \setminus \Dmc'$. 
From our initial assumption, we know that there exist some $s$-repair $\Rmc$
of $\Dmc$ such that $\varphi \not \in \Rmc$. Let $W''$
be the unique word that coincides with $W'$ on levels 1, 2, and 3
and is such that $\proj^+(W'',4)$ is the word $\overline{w}_{\Pi,\Rmc,Q,d}$.
Since $\Rmc$ is an $s$-repair, $W''$ is accepted by $\autrep^{1,4}$. Moreover, 
as $\varphi \in \Dmc_{W,I_\Dmc}^3= \Dmc_{W'',I_\Dmc}^3$ but $\varphi \not \in \Rmc$, 
we also have $(W'',3) \not\synsub (W'',4)$. It follows that $W''$ is accepted by $\autabsentrep$. 
However, this contradicts our earlier assumption about $W'$. 
Thus, we may conclude that no $W'$ satisfying the earlier conditions exists,
and hence that $W$ is accepted by $\autnotmissing$, as desired. 

\medskip

For the other direction, let us suppose that $\autinter$ does 
not accept $\overline{w}_{\Pi,\Dmc,Q}$. We can therefore find a 
$\overline{\Sigma^2}$-word $W$ such that 
$\proj^+(W,1)= \overline{w}_{\Pi,\Dmc,Q}$
and $W$ is accepted by $\aut_\mathsf{proper}$, $\aut_\subseteq^{1,2}$, 
$\overline{\aut}_\Pi^{Q,2}$, and $\autnotmissing$.  
We let $I_\Dmc$ be the set of integer endpoints given by $0=t_1 < \ldots < t_N=d$, 
recall that $\Dmc = \Dmc_{W,I_\Dmc}^1$ and define $\Dmc'$ as  $\Dmc_{W,I_\Dmc}^2$. 
Acceptance by $\aut_\mathsf{proper}$, $\aut_\subseteq^{1,2}$, and $\overline{\aut}_\Pi^{Q,2}$ implies that 
$\Dmc' \subseteq \Dmc$ and
$\Dmc' \cup \{B@\{t_B-1\}, E@\{t_E+1\}\}$ is $\Pi^Q$-consistent, hence 
$(\Dmc', \Pi) \not \models Q@[t_B, t_E]$. Thus, the first condition of 
Lemma \ref{prop:cqa-aut} is satisfied by $\Dmc'$.

It remains to show that the second condition is also satisfied by $\Dmc'$. 
Suppose for a contradiction that this is not the case. Then there exists
$\varphi \in \Dmc' \setminus \Dmc$ such that every $s$-repair $\Rmc$ of 
$\Dmc$ contains $\varphi$.  
Let us consider the unique word $W'$ over $\overline{\Sigma^3}$  
such that 
\begin{itemize}
\item $\proj^+(W,1)=\proj^+(W',1)$
\item $\proj^+(W,2)=\proj^+(W',2)$
\item $\Dmc_{W,I_\Dmc}^3=\{\varphi\}$
\end{itemize}
Further let $W''$ be any proper word over $\overline{\Sigma^4}$ such that:
\begin{itemize}
\item $\proj^+(W',1)=\proj^+(W'',1)$ 
\item $\proj^+(W',2)=\proj^+(W'',2)$
\item $\proj^+(W',3)=\proj^+(W'',3)$
\end{itemize}
Let $\Dmc_{W'',I_\Dmc}^4$ be the dataset associated with level 4 of $W''$. 
We consider two cases, depending on whether this dataset is a repair. 
If $\Dmc_{W'',I_\Dmc}^4$ is not an $s$-repair of $\Dmc$, then 
$W''$ is not accepted by $\autrep^{1,4}$, hence not by $\autabsentrep$. 
If $\Dmc_{W'',I_\Dmc}^4$ is an $s$-repair of $\Dmc$,
then by our earlier assumption, $\Dmc_{W'',I_\Dmc}^4$ contains $\varphi$. 
It follows then that $W''$ is not accepted by $(W'',3) \not\synsub (W'',4)$,
hence not accepted by $\autabsentrep$. Thus, there is no $W''$ matching 
$W'$ on levels 1, 2, and 3 and which is accepted by $\autabsentrep$. 
This establishes that $W'$ satisfies the required conditions, and so $W$
will be accepted by $\autmissing$. This yields the desired contradiction, 
since we know that $W$ is accepted by $\autnotmissing$, the complement of $\autmissing$. 
\end{proof}

\subsection*{Complexity of the NFA procedures}
We note that all of the considered NFAs are defined independently from the input dataset, 
so the NFA construction takes constant time w.r.t.\ data complexity. We only need $\Dmc$
to construct the input word $\overline{w}_{\Pi,\Dmc,Q}$. Constructing the sets $\sigma_j^*$ is 
straightforward, as we basically just copy the input propositions and use the annotation propositions
to indicate interval endpoints. The only non-trivial step is to compute the sets $\tau_j$ and $\tau_j'$
of pairs in $\paths(\Pi)$ and $\paths(\Pi^Q)$, but it was already shown in \cite{DBLP:conf/kr/WalegaGKK20}
that such pairs can be computed in $\textsc{TC}_0$. As NFA membership testing is 
complete for \textsc{NC1} $\subseteq \ptime$, we obtain the same complexity for query entailment under the three semantics,
with the lower bound coming from consistency checking / query entailment under classical semantics.

\section{Proofs for Section~\ref{sec:boundedz}}

\PropPolySizeIntervalBZ*
\begin{proof}
(i) The size of $\Bmc\in\iconflicts{\Dmc,\Pi}\cup\ireps{\Dmc,\Pi}$ is polynomially bounded in the size of $\Dmc$. 
This is immediate from the definition of $i$-conflicts and $i$-repairs and the assumption of bounded intervals. Indeed, if $\Cmc \in \iconflicts{\Dmc,\Pi}$,
then $\Cmc\isubseteq\Dmc$, hence $\Cmc$ contains at most as many facts as $\Dmc$ by point 2 of Lemma~\ref{lem:normal-form-subset}. Moreover, every 
fact $\alpha@[t_1,t_2] \in \Cmc$ is such that there exists $\alpha@[t_1',t_2'] \in \Cmc$ with $t_1' \leq t_1 \leq t_2\leq t_2'$ (\cf proof of point 2 of Lemma~\ref{lem:normal-form-subset}). 
It follows that all the interval endpoints $t$ mentioned in $\Cmc$ are such that $t_\mathsf{min} \leq t \leq t_\mathsf{max}$,
with $t_\mathsf{min}$ (resp.\ $t_\mathsf{max}$) the least (maximal) integer mentioned in $\Dmc$, and so the binary encoding of
integer endpoints in $\Cmc$ are no larger than the encodings of the integers in $\Dmc$. The same argument applies to $i$-repairs.
\medskip

\noindent (ii) It can be decided in \pspace whether $\Bmc\in\iconflicts{\Dmc,\Pi}$ or $\Bmc\in\ireps{\Dmc,\Pi}$. We first show how to test whether $\Cmc \in \iconflicts{\Dmc,\Pi}$. 
Testing whether $\Cmc$ is in normal form and such that $\Cmc\isubseteq\Dmc$ is clearly in \ptime, as
it simply requires a straightforward comparison of the facts in $\Cmc$ and $\Dmc$ and their endpoints. 
We also have to test whether $\Cmc$ is $\Pi$-inconsistent and that there is no $\Pi$-inconsistent $\Cmc'\isubsetneq\Cmc$. 
For the latter, it suffices to consider the polynomially many sets $\Cmc'$ obtained by replacing some fact $\alpha@[t_1,t_2] \in \Cmc$ 
by either $\alpha@[t_1+1,t_2]$ or $\alpha@[t_1, t_2-1]$ (if $t_1 < t_2$) or with nothing (if $t_1=t_2$). It follows that the minimality check can be carried out by means of polynomially many consistency checks. The overall procedure will run in \pspace, yielding the desired upper bound. 

We can use a similar idea to obtain a procedure for repair checking. Indeed suppose we wish to test whether $\Rmc \in \ireps{\Dmc,\Pi}$. We first check in \pspace\ that $\Rmc$ is in normal form, that $\Rmc\isubseteq\Dmc$, and that $\Rmc$ is $\Pi$-consistent, returning no if any of these conditions is violated. 
It remains to verify whether $\Rmc$ satisfies the maximality condition. This can be checked by considering each of the polynomially many $\Rmc\isubsetneq\Rmc'\isubseteq\Dmc$ that can be obtained
from $\Rmc$ by replacing a fact $\alpha@[t_1,t_2] \in \Rmc$ that issues from the fact $\alpha@[t_1',t_2'] \in \Dmc$, with $t_1' \leq t_1 \leq t_2 \leq t_2'$,
by either the fact $\alpha@[t_1-1,t_2]$ (provided $t_1-1 \geq t_1'$) or $\alpha@[t_1,t_2+1]$ (provided $t_2+1 \leq t_2$), and verifying that every such $\Rmc'$ is $\Pi$-inconsistent. However, this is not by itself sufficient, as we also need to ensure that: 
\begin{itemize}
\item[($\star$)] If $\Dmc$ contains a fact $\beta@[u_1,u_2]$ but $\Rmc$ contains no fact $\beta@\iota$ with $\iota\subseteq [u_1,u_2]$, then every $\Rmc'$ that can be obtained from $\Rmc$ by adding a fact $\beta@\{t\}$, with $u_1 \leq t \leq u_2$, is $\Pi$-inconsistent.
\end{itemize}
Iterating over all (exponentially many) such timepoints $t$ can be done in polynomial space, so we obtain a \pspace\ procedure. 
\medskip

\noindent (iii) A single $i$-conflict (resp.\ $i$-repair) can be generated in \pspace. We begin by showing how to construct an $i$-conflict for a dataset $\Dmc$ that is inconsistent w.r.t.\ program $\Pi$.  
The idea will be to start from the whole $\Dmc$ and remove facts or restrict intervals to obtain a conflict 
that is minimal w.r.t.\ $\isubseteq$. 
More precisely, we consider the following procedure (we use binary search in prevision of the proof of Proposition~\ref{prop:PropGenerateIntervalBased} where we consider tractable fragments and the number of consistency calls need to be more strictly bounded): 
\begin{enumerate}
\item Set $\Cmc_0 = \Dmc$, and let $n$ be the number of facts in $\Dmc$. 
\item For $j=1$ until $n$: 
\begin{itemize}
\item Let $\alpha@[t_1, t_2]$ be a fact from $\Dmc$ that has not yet been considered
(and so currently appears in $\Cmc_{j-1}$). 
\item If $\Cmc_{j-1} \setminus \{\alpha@[t_1, t_2]\}$ is $\Pi$-inconsistent, set
$$\Cmc_j = \Cmc_{j-1} \setminus \{\alpha@[t_1, t_2]\}$$ and increment $j$. 
\item Otherwise, use binary search to identify a sub-interval $[t_1^*, t_2^*]$ of $[t_1, t_2]$ such that 
$\Cmc_{j-1} \setminus \{\alpha@[t_1, t_2]\} \cup \{\alpha@[t_1^*, t_2^*]\}$ is $\Pi$-inconsistent,
and for every proper sub-interval $[t_1', t_2']$ of $[t_1^*,t_2^*]$, the dataset
$\Cmc_{j-1} \setminus \{\alpha@[t_1, t_2]\} \cup \{\alpha@[t_1', t_2']\}$ is $\Pi$-consistent. 
Set $$\Cmc_j=\Cmc_{j-1} \setminus \{\alpha@[t_1, t_2]\} \cup \{\alpha@[t_1^*, t_2^*]\}$$ and 
 increment $j$. 
\end{itemize}
\end{enumerate}
By construction, the final set $\Cmc_n$ is an $i$-conflict of $(\Dmc, \Pi)$. Indeed, we check at every stage that 
$\Cmc_j$ is $\Pi$-inconsistent. Moreover, if there exists $\Bmc \isubsetneq \Cmc_n$ that is $\Pi$-inconsistent, this would contradict the minimality of some $\alpha@[t_1^*, t_2^*]$. 
Regarding the complexity of the procedure, we note that 
when examining fact $\alpha@[t_1, t_2]$ at stage $j$, 
we can first use binary search and consistency checks to identify the largest $t_1^* \leq t_2$ such that 
$\Cmc_{j-1} \setminus \{\alpha@[t_1, t_2]\} \cup \{\alpha@[t_1^*, t_2]\}$ is $\Pi$-inconsistent. 
We may then do a second binary search to identify the least $t_2^*$ such that 
$\Cmc_{j-1} \setminus \{\alpha@[t_1, t_2]\} \cup \{\alpha@[t_1^*, t_2^*]\}$ is $\Pi$-inconsistent. 
Each binary search will require at most $m$ rounds, with $m$ the number of bits in the binary encoding of $t_2$. 
As $m$ is linear w.r.t.\ the size of $\Dmc$, the procedure requires only linearly many consistency checks to examine a given fact, 
and thus the overall procedure involves quadratically many consistency checks. This yields the desired $\pspace$ upper bound. 

Let us now show how to build an $i$-repair of a dataset $\Dmc$ w.r.t.\ program $\Pi$. 
Again, we will do so in a greedy fashion. 
\begin{enumerate}
\item Set $\Rmc_0 = \emptyset$, and let $n$ be the number of facts in $\Dmc$. 
\item For $j=1$ until $n$: 
\begin{itemize}
\item Let $\alpha@[t_1, t_2]$ be a fact from $\Dmc$ that has not yet been considered. 
\item If $\Rmc_{j-1} \cup \{\alpha@[t_1, t_2]\}$ is consistent with $\Pi$, set
$$\Rmc_j = \Rmc_{j-1} \cup \{\alpha@[t_1, t_2]\}$$ and increment $j$. 
\item Otherwise, 
\begin{enumerate}
\item find the smallest $t_1^*\in[t_1,t_2]$ such that $ \Rmc_{j-1} \cup \{\alpha@\{t_1^*\}\}$ is $\Pi$-consistent, 
\item find the largest $t_2^*\in[t_1^*,t_2]$ such that $ \Rmc_{j-1} \cup \{\alpha@[t_1^*, t_2^*]\}$ is $\Pi$-consistent, set 
 \end{enumerate}
$$\Rmc_j=\Rmc_{j-1} \cup \{\alpha@[t_1^*, t_2^*]\}$$ and 
 increment $j$. 
\end{itemize}
\end{enumerate}
By construction, the final set $\Rmc_n$ is an $i$-repair of $(\Dmc, \Pi)$, 
as consistency is checked for each $\Rmc_j$, and we ensure that the 
interval $[t_1^*, t_2^*]$ of each added fact $\alpha@[t_1^*, t_2^*]$
cannot be further extended without losing consistency. As for complexity, step (a) (resp.\ step (b)) can be done by iterating over the (potentially exponentially many) $t\in[t_1,t_2]$ (resp.\  $t\in[t_1^*,t_2]$) in polynomial space, yielding a $\pspace$ upper bound.
\medskip

\noindent Finally, we show the desired \pspace\ completeness results for query entailment under $i$-brave, $i$-CQA and $i$-intersection. The lower bounds come from the consistent case, 
since query entailment under classical semantics is \pspace-complete \wrt data complexity, even when $\Tbb=\Zbb$ and bounded-interval datasets are considered. 
Indeed, the reduction used to show that checking consistency when $\Tbb=\Zbb$ is \pspace-hard given in the proof of Theorem 2 in \cite{DBLP:conf/kr/WalegaGKK20} can be easily adapted to use a bounded-interval dataset if the program is not required to belong to a specific DatalogMTL fragment: for every predicate $P$ such that some $P(\vec{c})@(-\infty,\infty)$ occurs in $\Dmc_w$, add two rules $\alwaysf P(\vec{x})\gets P(\vec{x})$ and $\alwaysp P(\vec{x})\gets P(\vec{x})$ in the program and replace every $P(\vec{c})@(-\infty,\infty)$ by $P(\vec{c})@\{0\}$ in the dataset. 

The upper bounds for the three semantics 
rely upon the facts that 
$i$-repairs are polynomial in size (point (i)), 
they can be recognized in \pspace (point (ii)), and 
their number is exponentially bounded (since for each $\alpha@\iota\in\Dmc$, there are at most exponentially many $\alpha@\iota'$ with $\iota'\subseteq\iota$). 
Indeed, it follows that the `guess and check' procedures for $i$-brave and $i$-CQA semantics described in the proof of Proposition \ref{prop:PropPSPACESubsetGenChecking} still yield \pspace\ upper bounds in this case. 

For the $i$-intersection case, we show that for each $\alpha@\iota\in \Dmc$, there exist at most two $i$-repairs $\Bmc_1$ and $\Bmc_2$ such that for every $t\in\iota$, $\alpha@\{t\}\not\in \bigsqcap_{\Bmc\in\ireps{\Dmc,\Pi}}\Bmc$ implies that $\Bmc_1\sqcap \Bmc_2\not\models \alpha@\{t\}$. 
Indeed, if there exist two $i$-repairs $\Bmc_1$ and $\Bmc_2$ such that for every $t\in\iota$, $ \Bmc_1\sqcap \Bmc_2\not\models\alpha@\{t\}$, then for every $t\in\iota$, $\alpha@\{t\}\not\in \bigsqcap_{\Bmc\in\ireps{\Dmc,\Pi}}\Bmc$ and $\Bmc_1$ and $\Bmc_2$ are as required. Otherwise, if for every pair of $i$-repairs $\Bmc_1$ and $\Bmc_2$ there exists $t\in\iota$ such that $ \Bmc_1\sqcap \Bmc_2\models\alpha@\{t\}$, it means that every $i$-repair contains some $\alpha@\iota'$ with $\iota'\subseteq\iota$ and these intervals $\iota'$ pairwise intersect. 
It follows that there must exist $t\in\iota$ such that $\alpha@\{t\}\in \bigsqcap_{\Bmc\in\ireps{\Dmc,\Pi}}\Bmc$. Let $t_1$ and $t_2$ be the smallest and largest such $t$. If $t_1-1\in\iota$, there exists $\Bmc_1$ such that $\Bmc_1\not\models\alpha@\{t_1-1\}$ so by definition of $t_1$, $\Bmc_1$ contains a fact of the form $\alpha@[t_1,t']$ with $[t_1,t']\subseteq\iota$. Similarly, if $t_2+1\in\iota$, there exists $\Bmc_2$ such that $\Bmc_2\not\models\alpha@\{t_2+1\}$ so by definition of $t_2$, $\Bmc_2$ contains a fact of the form $\alpha@[t^*,t_2]$ with $[t^*,t_2]\subseteq\iota$. It follows that for every $t\in\iota$, $\alpha@\{t\}\not\in \bigsqcap_{\Bmc\in\ireps{\Dmc,\Pi}}\Bmc$ implies that $t<t_1$ or $t>t_2$ and $\Bmc_1\sqcap \Bmc_2\not\models\alpha@\{t\}$. 
Hence, one can decide whether $(\Dmc,\Pi)\not\intmodels{i} q(\ans,\iota)$ as follows: guess $\Bmc_1,\dots,\Bmc_n$ (with $n\leq 2*|\Dmc|$), and check that $\Bmc_k\in\ireps{\Dmc,\Pi}$ for $1\leq k\leq n$ and that $(\bigsqcap_{k=1}^n \Bmc_k,\Pi)\not\models q(\ans,\iota)$. 
\end{proof}

\PropGenerateIntervalBased*
\begin{proof}
For tractable DatalogMTL fragments, for which consistency checking is in \ptime\ w.r.t.\ data complexity, the procedures described in points (ii) and (iii) of the proof of Proposition~\ref{prop:PropPolySizeIntervalBZ} to verify whether $\Bmc\in\iconflicts{\Dmc,\Pi}$ or generate a single $i$-conflict run in \ptime. 
\end{proof}

Note that condition ($\star$) in the $i$-repair checking procedure prevents us to extend Proposition~\ref{prop:PropGenerateIntervalBased} to $i$-repairs in the same way as we did for $i$-conflicts.

\PropRepairSemIntervalBoundedZ*
\begin{proof}
The lower bounds follow from the reductions used in the proof of Proposition~\ref{prop:PropTractableFragsSubsetNP}, since they use only punctual facts over $\Zbb$ so that $s$- and $i$-repairs coincide. 

The \np upper bound for $i$-brave semantics comes from the fact that by monotonicity of DatalogMTL, it is sufficient to guess $\Bmc\isubseteq\Dmc$ such that $\Bmc$ is $\Pi$-consistent and $(\Bmc,\Pi)\models q(\ans,\iota)$ to show that there exists an $i$-repair $\Rmc$ of $\Dmc$ \wrt $\Pi$ such that $\Bmc\isubseteq\Rmc$ and $(\Rmc,\Pi)\models q(\ans,\iota)$. 
Indeed, when $\Tbb=\Zbb$ and intervals are bounded, every $\Pi$-consistent $\Bmc\isubseteq\Dmc$ can be greedily extended to an $i$-repair. 

The $\piptwo$ upper bounds for $i$-CQA and $i$-intersection use the same procedures as described in the proof of Proposition~\ref{prop:PropPolySizeIntervalBZ}, using the fact that in the case of tractable DatalogMTL fragments, $i$-repairs can be recognized in \conp. Indeed, one can check that a $\Pi$-consistent $\Bmc\isubseteq\Dmc$ is \emph{not} an $i$-repair by guessing $\Bmc'$ such that $\Bmc\isubsetneq \Bmc'\isubseteq\Dmc$ and $\Bmc'$ is $\Pi$-consistent. 
\end{proof}

\PropExpsizePointwise*
\begin{proof}
We start with the case of $p$-repairs. 
Consider the following program and dataset (note that $\Pi$ belongs to \nonrecDatalog, \corediamond, \lineardiamond and propositional DatalogMTL):
\begin{align*}
\Pi =& \{ A^-\gets \eventp_{\{1\}} A, B^-\gets \eventp_{\{1\}} B ,
\\& \bot \gets A \wedge B, \bot \gets A \wedge A^-, \bot \gets B \wedge B^-\}\\
\Dmc =& \{A@[0,2^n], B@[0,2^n]\}
\end{align*}
Due to the binary encoding of input intervals, $\Dmc$ has size polynomial in $n$.
However, it is not hard to see that 
any $p$-repair $\Rmc$ of $(\Dmc, \Pi)$ must take the form
$$\{D_0@\{0\}, \ldots, D_{2^n }@\{2^n \} \}$$
where for every $0\leq j\leq 2^n$, $D_j \in \{A, B\}$ and for every $0<j\leq 2^n$, $D_j = A$ iff $D_{j-1}=B$. Hence, every $p$-repair is of size exponential in $n$. 

Note that if we take $\Pi=\{\bot \gets A \wedge B\}$ (available in every DatalogMTL fragment in which a dataset is not trivially consistent with any program), this is enough to show that \emph{some} repairs can be exponentially large. 
\medskip

For $p$-conflicts, consider the following program and dataset (here $\Pi$ belongs to \nonrecDatalog and propositional DatalogMTL):
\begin{align*}
\Pi =&\Pi'\cup \{\bot \gets S \wedge ( C\ \Umc_{(0,\infty)} E )  \} \text{ with } \\
\Pi' =&\{C\gets A\wedge \eventf_{\{1\}} B, \ C\gets B\wedge \eventf_{\{1\}} A\} \\
\Dmc = &\{S@\{-1\}, A@[0,2^{n+2}], B@[0,2^{n+2}], E@\{2^{n+2}\}\}
\end{align*}
Again, due to the binary encoding of input intervals, $\Dmc$ has size polynomial in $n$.
However,  we show that every $p$-conflict $\Cmc$ of $(\Dmc, \Pi)$ has size at least $2^n$. 

Let $\Cmc\in\pconflicts{\Dmc,\Pi}$. Since $\Cmc\psubseteq\Dmc$ and violates $\bot \gets S \wedge ( C\ \Umc_{(0,\infty)} E )$, it must be the case that $S@\{-1\}\in\Cmc$, $E@\{2^{n+2}\}\in\Cmc$ and for every $t\in\Zbb$, if $-1 < t < 2^{n+2}$, $(\Cmc,\Pi')\models C@\{t\}$. 

First note that (i) there is no $t\in[0,2^{n+2}-1]$ such that both $\Cmc\not\models A@\{t\}$ and $\Cmc\not\models A@\{t+1\}$ (otherwise $(\Cmc,\Pi')\not\models C@\{t\}$). 

Assume for a contradiction that there exists $t\in[0,2^{n+2}-3]$ such that $\Cmc\models A@[t,t+3]$ (\ie $\Cmc$ contains a fact of the form $A@\iota$ such that $\iota$ contains at least four timepoints). 
\begin{itemize}
\item If $\Cmc\not\models B@\{t+1\}$ and $\Cmc\not\models B@\{t+2\}$, then $(\Cmc,\Pi')\not\models C@\{t+1\}$ and $\Cmc$ is $\Pi$-consistent.
\item If $\Cmc\models B@\{t+1\}$, then `removing $A@\{t+1\}$ from $\Cmc$' maintains $\Pi$-inconsistency so $\Cmc$ is not $\psubseteq$ minimal. More precisely, let $\Cmc'$ be the normal form of $\mn{tp}(\Cmc)\setminus\{A@\{t+1\}\}$. It holds that $\Cmc'\psubsetneq\Cmc$ and $\Cmc'$ is $\Pi$-inconsistent because 
\begin{itemize}
\item $(\Cmc',\Pi')\models C@\{t\}$ by rule $C\gets A\land\eventf_{\{1\}}B$, 
\item $(\Cmc',\Pi')\models C@\{t+1\}$ by rule $C\gets B\land\eventf_{\{1\}}A$, and 
\item for every $t'$ such that $-1 < t' < 2^{n+2}$, $t'\neq t$, and $t'\neq t+1$, $(\Cmc',\Pi')\models C@\{t'\}$ because $(\Cmc,\Pi')\models C@\{t'\}$ and only facts that hold at $t'$ and $t'+1$ are relevant to this entailment and $t'$ and $t'+1$ are different from $t+1$.
\end{itemize}
\item If $\Cmc\models B@\{t+2\}$, then `removing $A@\{t+2\}$ from $\Cmc$' maintains $\Pi$-inconsistency (as above) so $\Cmc$ is not $\psubseteq$ minimal.
\end{itemize} 
In every case, we obtain that $\Cmc\notin\pconflicts{\Dmc,\Pi}$, so we conclude that (ii) every fact of the form $A@\iota$ in $\Cmc$ is such that $\iota$ contains at most three timepoints. 

It follows from (i) and (ii) that the fact $A@[0,2^{n+2}]$ is split in $\Cmc$ in at least $2^n$ facts.

If we only require that \emph{some} $p$-conflicts are of exponential size, one can consider the following \lineardiamond program and dataset. 
\begin{align*}
\Pi =& \{ A^-\gets \eventp_{\{1\}} A,\ B^-\gets \eventp_{\{1\}} B ,
\\& \bot \gets A \wedge B,\ \bot \gets A \wedge A^-,\ \bot \gets B \wedge B^-,\\
& Q'\gets Q\land A,\ Q'\gets Q\land B, \ Q \gets \eventp Q',\\
& \bot\gets P\land Q'\}\\
\Dmc =& \{Q@\{0\}, A@[0,2^n], B@[0,2^n], P@\{2^n\}\}
\end{align*}
It is not hard to check that $\{Q@\{0\}, P@\{2^n\}\}\cup\{A@\{2k\}\mid 0\leq k\leq 2^{n-1}\}\cup \{B@\{2k-1\}\mid 1\leq k\leq 2^{n-1}\}$ is a $p$-conflict of $\Dmc$ \wrt $\Pi$.
\end{proof}

\PropExpspacePointwise*
\begin{proof}
By Lemma~\ref{claim-boundedZ}, $\Rmc\in\preps{\Dmc,\Pi}$ iff $\Rmc$ is in normal form and $\mn{tp}(\Rmc)$ is a $\subseteq$-maximal $\Pi$-consistent subset of $\mn{tp}(\Dmc)$; and $\Cmc\in\pconflicts{\Dmc,\Pi}$ iff $\Cmc$ is in normal form and $\mn{tp}(\Cmc)$ is a $\subseteq$-minimal $\Pi$-inconsistent subset of $\mn{tp}(\Dmc)$.
Moreover, the number of facts in $\mn{tp}(\Dmc)$ is exponentially bounded by the size of $\Dmc$, since each $\alpha@[t_1,t_2]\in\Dmc$ corresponds to at most exponentially many $\alpha@\{t\}$ \wrt the size of the binary encodings of $t_1$ and $t_2$. 
Applying the procedures described in the proof of Proposition~\ref{prop:PropPSPACESubsetGenChecking} using $\mn{tp}(\Dmc)$ instead of $\Dmc$ thus yields \expspace upper bounds instead of \pspace ones.
\end{proof}

\end{document}